\newtheorem{theorem}{Theorem}[section]
\newtheorem{proposition}[theorem]{Proposition}
\newtheorem{definition}[theorem]{Definition}
\newtheorem{claim}[theorem]{Claim}
\newtheorem{lemma}[theorem]{Lemma}
\newtheorem{corollary}[theorem]{Corollary}
\newtheorem{remark}[theorem]{Remark}
\newcommand{\qedsymb}{\hfill{\rule{2mm}{2mm}}}
\renewenvironment{proof}[1][]{\begin{trivlist}
\item[\hspace{\labelsep}{\bf\noindent Proof#1:\/}] }{\qedsymb\end{trivlist}}
\def\R{\mathbb{R}}
\def\N{\mathbb{N}}
\newcommand{\NP}{\mathsf{NP}}
\renewcommand{\P}{\mathsf{P}}
\newcommand{\YES}{\mathsf{YES}}
\newcommand{\NO}{\mathsf{NO}}
\newcommand{\od}{\overline{\xi}}
\newcommand{\eps}{\epsilon}
\renewcommand{\epsilon}{\varepsilon}
\newcommand{\rank}{\mathop{\mathrm{rank}}}
\begin{document}

\title{{\bf New Hardness Results for Low-Rank Matrix Completion}}

\author{
Dror Chawin\thanks{The Academic College of Tel Aviv-Yaffo, Tel Aviv 61083, Israel. Research supported by the Israel Science Foundation (grant No.~1218/20).}
\and
Ishay Haviv\footnotemark[1]
}

\date{}

\maketitle

\begin{abstract}
The low-rank matrix completion problem asks whether a given real matrix with missing values can be completed so that the resulting matrix has low rank or is close to a low-rank matrix. The completed matrix is often required to satisfy additional structural constraints, such as positive semi-definiteness or a bounded infinity norm. The problem arises in various research fields, including machine learning, statistics, and theoretical computer science, and has broad real-world applications.

This paper presents new $\NP$-hardness results for low-rank matrix completion problems. We show that for every sufficiently large integer $d$ and any real number $\eps \in [ 2^{-O(d)},\frac{1}{7}]$, given a partial matrix $A$ with exposed values of magnitude at most $1$ that admits a positive semi-definite completion of rank $d$, it is $\NP$-hard to find a positive semi-definite matrix that agrees with each given value of $A$ up to an additive error of at most $\eps$, even when the rank is allowed to exceed $d$ by a multiplicative factor of $O (\frac{1}{\eps^2 \cdot \log(1/\eps)} )$. This strengthens a result of Hardt, Meka, Raghavendra, and Weitz (COLT,~2014), which applies to multiplicative factors smaller than $2$ and to $\eps$ that decays polynomially in $d$. We establish similar $\NP$-hardness results for the case where the completed matrix is constrained to have a bounded infinity norm (rather than be positive semi-definite), for which all previous hardness results rely on complexity assumptions related to the Unique Games Conjecture. Our proofs involve a novel notion of nearly orthonormal representations of graphs, the concept of line digraphs, and bounds on the rank of perturbed identity matrices.
\end{abstract}

\section{Introduction}

In the matrix completion problem, the input is a partially observed real matrix, where some entries are marked by $\perp$ to indicate missing values. The goal is to fill in these values in such a way that the completed matrix satisfies certain prescribed properties. This algorithmic task is prevalent in various research fields, including machine learning, statistics, and theoretical computer science, and is motivated by a range of real-world applications, such as recommendation systems, medical imaging, computer vision, and signal processing. Typically, the completed matrix is required to have low rank or to be a slight perturbation of a low-rank matrix. In certain cases, the matrix may also need to be positive semi-definite or constrained by a bounded infinity norm.

A central objective in the study of low-rank matrix completion is to identify the conditions under which the problem can be solved efficiently. A successful line of work, initiated by Cand{\`{e}}s and Recht~\cite{CandesRecht09}, has applied convex optimization methods to efficiently recover a low-rank matrix from the values of a subset of its entries (see also~\cite{CandesTao10,Recht11}). The recovery is guaranteed to succeed if (a) the number of given entries is sufficiently large, (b) the completed matrix satisfies a certain incoherence condition (i.e., the row and column spaces of the matrix are not aligned with the vectors of the standard basis), and (c) the subset of exposed entries is drawn uniformly at random. In fact, these conditions further ensure that the solution is unique. However, in most applications, the given entries cannot be chosen at random. It is therefore of interest to determine the computational complexity of recovering a low-rank incoherent matrix, when the observed entries are selected in a worst-case manner.

The hardness of low-rank matrix completion can be traced back to a 1996 paper by Peeters~\cite{Peeters96}, who explored the complexity of determining a graph quantity known as minrank, introduced by Haemers~\cite{Haemers81} in the context of the Shannon capacity of graphs. Peeters' work implies that for every integer $d \geq 3$, deciding whether a given partial matrix can be completed to a matrix of rank at most $d$ is $\NP$-hard. His proof technique further implies that the same $\NP$-hardness result holds when the completed matrix is required to be positive semi-definite, and this was extended to the case of $d=2$ in 2013 by Eisenberg-Nagy, Laurent, and Varvitsiotis~\cite{Nagy13}. This research avenue was then extended by Hardt, Meka, Raghavendra, and Weitz~\cite{HMRW14}, who explored two relaxations of the problem: first, allowing some slackness in the rank of the completed matrix, and second, permitting a bounded additive error on the given entries of the partial matrix. Specifically, they proved that for every integer $d \geq 6$, given a partial matrix $A$ whose observed entries have magnitude at most $1$, it is $\NP$-hard to distinguish between the case where $A$ admits a positive semi-definite completion of rank at most $d$, and the case in which any positive semi-definite completion of $A$ has rank at least $2d$. Moreover, they showed that the problem remains $\NP$-hard, when the matrix in the latter case is allowed to approximate each given value up to an additive error of $\eps$, as long as $\eps$ decays polynomially with the desired rank, namely, for $\eps = O(d^{-6})$. More recently, the paper~\cite{ChawinH23} addressed the related problem of determining a graph measure known as the orthogonality dimension. The results of~\cite{ChawinH23} imply that for every sufficiently large integer $d$, it is $\NP$-hard to decide whether an input partial matrix can be completed to a positive semi-definite matrix of rank at most $d$, or any positive semi-definite completion has rank at least $2^{(1-o(1)) \cdot d/2}$, with the $o(1)$ term tending to $0$ as $d$ tends to infinity. It thus follows that it is $\NP$-hard to approximate to within any constant factor the minimum possible rank of a positive semi-definite matrix that agrees with an input partial matrix. However, this hardness result does not apply to the more tolerant setting that allows an additive perturbation in each entry of the partial matrix.

Another version of the low-rank matrix completion problem considered by Hardt et al.~\cite{HMRW14} imposes a fixed bound on the infinity norm of the completed matrix (rather than requiring positive semi-definiteness). For this setting, their hardness results were not based on the standard assumption $\P \neq \NP$, but on the hardness of appropriate gap versions of the Coloring and Independent Set problems, the intractability of which is supported by certain variants of the Unique Games Conjecture~\cite{DinurMR06,DinurS10}. Under these assumptions, they showed that for all positive integers $d_2 > d_1 \geq 3$ and real numbers $\eps \in [0,\frac{1}{2})$ and $\theta \geq 1$, there is no efficient algorithm to decide whether an input partial matrix $A$ can be completed to one with an infinity norm at most $\theta$ and rank at most $d_1$, or any completion of $A$ with an infinity norm at most $\theta$ must have rank at least $d_2$, even when an additive error of $\eps$ is allowed in each entry. Remarkably, this hardness result persists when a constant fraction of the matrix entries is revealed (as is the case for all the above hardness results) and, in addition, when the completed matrix is required to be incoherent (in instances admitting a valid completion). In light of the aforementioned algorithmic results for the problem~\cite{CandesRecht09,CandesTao10,Recht11}, these findings highlight the worst-case choice of the revealed entries as a substantial obstacle to efficient recovery.

\subsection{Our Contribution}

This paper presents several new hardness results for low-rank matrix completion problems. We begin by considering the case where the completed matrix is required to be positive semi-definite. Specifically, we study the gap problem $(d_1,d_2,\eps)$-PSD-Completion, formally defined below.
Here, we let $\mu(B)$ denote the coherence of a matrix $B$, a measure that is always bounded from below by $1$ (see Definition~\ref{def:coherence}).

\begin{definition}[The $(d_1,d_2,\eps)$-PSD-Completion Problem]\label{def:PSD_prob}
For positive integers $d_1 < d_2$ and for a real number $\eps \in [0,1)$, the {\em $(d_1,d_2,\eps)$-PSD-Completion} problem asks, given a partial matrix
\[A \in ([-1,+1] \cup \{\perp\})^{n \times n},\] to distinguish between the following cases.
\begin{itemize}
  \item $\YES$: There exists a positive semi-definite matrix $B \in \R^{n \times n}$, such that $A_{i,j} = B_{i,j}$ for all $i,j \in [n]$ with $A_{i,j} \neq \perp$, $\mu(B) = 1$, and $\rank(B) \leq d_1$.
  \item $\NO$:  Every positive semi-definite matrix $B \in \R^{n \times n}$, such that $|A_{i,j} - B_{i,j}| \leq \eps$ for all $i,j \in [n]$ with $A_{i,j} \neq \perp$, satisfies $\rank(B) \geq d_2$.
\end{itemize}
\end{definition}
\noindent
Note that the definition restricts the magnitudes of the values in the input partial matrix to at most $1$. Such a restriction is essential when allowing an additive error in the completed matrix, as rank is invariant under scaling.

We first point out that, under plausible complexity assumptions related to the Unique Games Conjecture~\cite{DinurMR06,DinurS10}, the $(d_1,d_2,\eps)$-PSD-Completion problem is intractable for all positive integers $d_2 > d_1 \geq 3$ and real numbers $\eps \in [0,\frac{1}{2})$. Our primary contribution lies in establishing hardness results based solely on the more standard assumption $\P \neq \NP$, as stated below.

\begin{theorem}[Simplified]\label{thm:IntroPSD}
For every sufficiently large integer $d$ and any real number $\eps \in [2^{-O(d)},\frac{1}{7}]$, the $(d,O(\frac{d}{\eps^2 \cdot \log (1/\eps)}),\eps)$-PSD-Completion problem is $\NP$-hard.
\end{theorem}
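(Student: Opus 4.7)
My plan is to base the reduction on the NP-hardness of orthogonality dimension from~\cite{ChawinH23}, which guarantees that for every sufficiently large~$d$ it is NP-hard to distinguish graphs $G$ with $\dim_\perp(G) \le d$ from those with $\dim_\perp(G) > 2^{(1-o(1))\,d/2}$. I would aim for a gap-preserving reduction that, given $G$, outputs a partial matrix $A = A(G)$ whose diagonal is forced to be $1$ (so the coherence constraint $\mu(B)=1$ is satisfiable in the YES case) and whose designated off-diagonal positions are forced to be $0$. Exact PSD completions of $A$ then correspond to orthonormal representations of an auxiliary graph, while $\eps$-approximate PSD completions correspond to the \emph{nearly orthonormal representations} mentioned in the abstract: unit vectors whose pairwise inner products along designated pairs are bounded in magnitude by~$\eps$.

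\textbf{Construction via line digraphs.} I would index the rows and columns of $A$ by the arcs of the line digraph of $G$ (or of a suitable blow-up of $G$), rather than directly by the vertices of $G$. The purpose is that a single nearly orthonormal representation of $G$ in $\R^r$ should induce many principal submatrices of the completion that resemble identity matrices perturbed entrywise by~$\eps$. In the \textbf{YES} case this indexing still lifts cleanly: an orthonormal representation of $G$ in dimension $d$ yields a PSD completion of $A$ of rank $d$ and coherence $1$, by arranging the representation along the arcs in the natural way, so $d_1 = d$ works.

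\textbf{NO case.} Suppose $B$ is an $\eps$-approximate PSD completion of $A$ of rank $r$. A Gram factorization of $B$ produces vectors that form a nearly orthonormal representation of the graph underlying $A$. The heart of the argument is to locate large principal submatrices of $B$ whose diagonals are close to $1$ and whose off-diagonal entries are bounded in magnitude by~$\eps$, and then invoke the rank lower bound for perturbed identity matrices --- via $\rank(M) \ge (\trace M)^2 / \trace(M^2)$, which gives $\Omega(1/\eps^2)$ on such blocks of size noticeably larger than $1/\eps^2$. Aggregating these per-block bounds along the line-digraph structure should lift them to a global bound $r \ge \Omega(d/(\eps^2 \log(1/\eps)))$ whenever $\dim_\perp(G)$ exceeds the threshold coming from~\cite{ChawinH23}. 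The $\log(1/\eps)$ in the denominator would enter through a packing or net argument controlling how many perturbed-identity blocks need to be exhibited before they collectively cover enough of the line-digraph structure to contradict the orthogonality-dimension gap.

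\textbf{Main obstacle.} The hardest step is precisely this extraction, from a low-rank $\eps$-approximate completion, of a witness strong enough to contradict the exponential orthogonality-dimension gap of~\cite{ChawinH23}: one cannot simply read $\dim_\perp(G)$ off $r$, since the slack $\eps$ can cut the effective dimension drastically. Making the line-digraph construction and the perturbed-identity rank bound interlock tightly enough to preserve the $2^{\Omega(d)}$ gap from~\cite{ChawinH23} under the $\eps^2$ loss is what pins down the admissible range $\eps \in [2^{-O(d)}, \tfrac{1}{7}]$: once $\eps$ drops below $2^{-\Theta(d)}$, the perturbed-identity bound no longer overcomes the exponential gap, which is exactly the lower endpoint in the statement, while the upper endpoint reflects the constant slack beyond which a nearly orthonormal representation degenerates entirely.
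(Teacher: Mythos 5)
Your proposal shares the paper's high-level ingredients (line digraphs, nearly orthonormal representations, perturbed-identity rank bounds), but there are several concrete gaps that would prevent it from going through.

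\textbf{Wrong starting point.} You anchor the reduction on the orthogonality-dimension hardness of~\cite{ChawinH23}. That result concerns the \emph{exact} orthogonality dimension ($\eps = 0$), and the paper explicitly explains why the subspace-based argument underlying it does not survive the relaxation to $\eps > 0$: nearly orthogonal vectors can lie inside a low-dimensional subspace (the vectors $e_1$ and $\sqrt{1-\eps^2}\,e_1 + \eps\,e_2$ together with $e_2$ give a small example), so extracting a small exact orthonormal representation --- or any useful bound on $\od(G)$ --- from a low-rank $\eps$-approximate completion is not possible this way. The paper instead starts from the gap coloring hardness of~\cite{KrokhinOWZ23} and never passes back through exact orthogonality dimension in the NO direction.

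\textbf{Missing key technical idea in the NO direction.} What actually replaces the subspace argument is a maximal-set construction: for each vertex $v$ of $G$, one chooses a maximal subset of arcs with head $v$ whose Gram entries are pairwise at most $\eps'$ in magnitude (for a suitably enlarged $\eps' > \eps$), and colors $v$ by the image of this set under a net map. Maximality is what forces adjacent vertices to receive distinct colors, and the size of each such set is exactly what $m(d,\eps')$ controls. Your proposal mentions locating large near-identity principal blocks and applying rank lower bounds, but it does not contain the maximality/coloring mechanism that turns the rank bound into a bound on $\chi(G)$, which is the heart of Theorem~\ref{thm:chi(H)}.

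\textbf{Wrong source of the $\log(1/\eps)$ factor.} You attribute the $\log(1/\eps)$ in the denominator to a packing or net argument. It actually comes from the second item of Alon's theorem (Theorem~\ref{thm:Alon}): $d \ge c\cdot \frac{\log m}{\eps^2 \log(1/\eps)}$ for $\eps \in [m^{-1/2}, 1/2]$. The trace bound $\rank(M) \ge (\trace M)^2 / \trace(M^2)$ that you invoke only recovers the first item of Alon's theorem, $d \ge \frac{m}{1+\eps^2 (m-1)}$, which gives no $\log(1/\eps)$ and is only useful in the regime $\eps \lesssim d^{-1/2}$. For $\eps$ above $d^{-1/2}$ (which includes the constant-$\eps$ regime in the theorem you are proving), the trace bound is too weak.

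\textbf{YES case does not lift as stated.} You claim that an orthonormal representation of $G$ in dimension $d$ yields a rank-$d$ coherence-$1$ completion of the arc-indexed partial matrix. But the YES case needs a small orthogonality dimension (equivalently, a small chromatic number) of $\tilde{\delta}G$, not of $G$, and $\od(G) \le d$ does not give $\od(\tilde{\delta}G) \le d$. What works is $\chi(G) \le b(d) \Rightarrow \chi(\tilde{\delta}G) \le d$ via Theorem~\ref{thm:chi_delta}, after which the coherence-$1$ requirement is met by the disjoint-union-of-cyclic-shifts trick of Lemma~\ref{lemma:copies}, which your sketch omits.

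\end{document}
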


For admissible values of $d$ and $\eps$, Theorem~\ref{thm:IntroPSD} implies that given a partial matrix $A$ with exposed values of magnitude at most $1$, which admits a positive semi-definite completion with coherence $1$ and rank $d$, it is $\NP$-hard to find a positive semi-definite matrix that agrees with each given value of $A$ up to an additive error of at most $\eps$, even when the rank is allowed to exceed $d$ by a multiplicative factor of $O(\frac{1}{\eps^2 \cdot \log(1/\eps)})$. The theorem encompasses various parameter settings of interest. First, for any fixed approximation factor $\alpha > 1$, there exists some constant $\eps >0$, for which the $(d,\alpha \cdot d,\eps)$-PSD-Completion problem is $\NP$-hard for any sufficiently large integer $d$. Next, letting $\eps$ decrease polynomially with $d$ results in a hardness of approximation factor that is polynomial in $d$. Finally, setting $\eps = 2^{-\Theta(d)}$ yields a hardness of approximation to within a factor of the form $2^{\Omega(d)}$. In fact, for an $\eps$ that decays sufficiently rapidly with $d$, we obtain the following refined hardness result.

\begin{theorem}[Simplified]\label{thm:IntroPSDexp}
For every sufficiently large integer $d$ and any real number $\eps \in [0, 2^{-\Omega(d)}]$, the $(d,2^{(1-o(1)) \cdot d/2},\eps)$-PSD-Completion problem is $\NP$-hard.
\end{theorem}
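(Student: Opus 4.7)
The plan is to derive Theorem~\ref{thm:IntroPSDexp} from the NP-hardness of approximating the orthogonality dimension $\xi(G)$ established in~\cite{ChawinH23}, which in polynomial time produces a graph $G$ with $\xi(G) \leq d$ in the YES case and $\xi(G) \geq 2^{(1-o(1))d/2}$ in the NO case, and to lift this exact rank gap to the $\eps$-approximate setting by exploiting that $\eps$ is exponentially small. Given $G$, I would form the natural partial matrix $A$ with $A_{ii}=1$, $A_{ij}=0$ for every non-adjacent pair $i\neq j$, and $A_{ij}=\perp$ for every adjacent pair. The YES case is immediate: the Gram matrix of any orthonormal representation of $G$ in $\R^d$ is a PSD completion of $A$ of rank at most $d$ with coherence~$1$.

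For the NO case, if $B = V V^T$ is an $\eps$-approximate PSD completion of $A$ of rank $r$, then its rows $v_i \in \R^r$ satisfy $\|v_i\|^2 \in [1-\eps, 1+\eps]$ and $|\langle v_i, v_j\rangle| \leq \eps$ on every non-adjacent pair. Rescaling to unit vectors $u_i = v_i/\|v_i\|$ produces a \emph{nearly orthonormal representation} of $G$ in $\R^r$, namely, unit vectors whose non-edge inner products have magnitude at most $O(\eps)$. The crux is to lower bound the dimension $r$. Restricting the Gram matrix of $\{u_i\}$ to an independent set $I \subseteq V(G)$ yields a perturbed identity of size $|I| \times |I|$ with off-diagonal magnitude at most $O(\eps)$; Alon's bound on the rank of perturbed identity matrices then gives $r \geq \Omega(\log|I|/(\eps^2 \log(1/\eps)))$, which evaluates to $2^{(1-o(1))d/2}$ once $\eps \leq 2^{-\Omega(d)}$ and $|I| \geq 2^{\Omega(d)}$.

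Because the base instance of~\cite{ChawinH23} need not carry an independent set of size $2^{\Omega(d)}$ in the NO case, I would pre-compose the reduction with a \emph{line digraph} construction. Starting from a base digraph $D$ realizing the gap, one passes to its line digraph $L(D)$, whose vertices are the arcs of $D$ and whose adjacencies are inherited from the head-to-tail incidence relation in $D$. Arcs of $D$ sharing neither endpoint yield non-adjacent vertices in $L(D)$, so a judicious choice of $D$ produces a line digraph whose undirected version admits independent sets of the required size. Orthonormal representations lift naturally from $D$ to $L(D)$, preserving the YES-case rank at roughly $d$, while the NO-case lower bound on $\xi$ descends back to $D$ with only mild quantitative loss.

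The main obstacle is the three-way parameter calibration: Alon's bound requires $|I| \geq 2^{\Omega(d)}$ and $\eps \leq 2^{-\Omega(d)}$ to realize the target rank $2^{(1-o(1))d/2}$; the line-digraph amplification must produce such an independent set while preserving the $\xi$ gap between $d$ and $2^{(1-o(1))d/2}$; and the entire construction must remain polynomial-sized, with the YES-case completion retaining coherence~$1$. Reconciling these requirements simultaneously, and propagating the nearly orthonormal representation bound back to a statement about approximate PSD rank, is the core quantitative step of the proof.
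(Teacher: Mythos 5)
Your proposal correctly identifies the right ingredients --- the line digraph construction, the hardness for orthogonality dimension, and Alon's bound on perturbed identity matrices --- but the way you combine them has a genuine flaw in the NO-case argument, and that flaw is precisely the difficulty this paper had to overcome.

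You propose to lower-bound the rank of an approximate PSD completion by restricting the Gram matrix to a large independent set $I$ and invoking Alon's theorem, which requires the NO instances to contain an independent set (equivalently, in the paper's convention, a clique) of size $2^{\Omega(d)}$. No such structure is available. First, the NO instances from the coloring reduction satisfy a lower bound on the \emph{chromatic number}, which says nothing about the independence number or clique number; graphs with enormous chromatic number and tiny clique number (e.g., Kneser graphs, Mycielski graphs, and indeed the outputs of the line digraph reduction) are ubiquitous. Second, and more fatally, the construction cannot consistently force such a set only in the NO case: in the YES case of your formulation, an exact $d$-dimensional orthonormal representation restricted to an independent set $I$ gives $|I|$ pairwise orthogonal unit vectors in $\R^d$, so $|I| \leq d$. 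Since the reduction is oblivious to whether the input is a YES or NO instance, engineering an independent set of size $2^{\Omega(d)}$ would destroy completeness. The ``judicious choice of $D$'' you invoke does not resolve this tension.

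The paper circumvents the issue by running the implication in the opposite direction. Rather than extracting a large independent set in the NO case and applying Alon's bound as a rank \emph{lower} bound directly, Theorem~\ref{thm:chi(H)} proves the contrapositive: if $\tilde{\delta}G$ admits a rank-$d$ symmetric matrix that $\eps$-fits it, then $\chi(G)$ must be \emph{small}. The argument fixes, for each vertex $v$ of $G$, a maximal nearly-orthogonal subset $E'_v$ of the set $E_v$ of arcs with head $v$; Alon's bound is used to \emph{upper}-bound $|E'_v| \le m(d,\eps')$, and a net argument then converts the collection $\{E'_v\}$ into a proper coloring of $G$ with at most $(2\theta/\eta + 1)^{d \cdot m(d,\eps')}$ colors. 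This never requires the graph to contain any large clique or independent set, and it degrades gracefully with $\eps$ because the maximality of $E'_v$ supplies the separation between the colors of adjacent vertices even when inner products are only approximately zero. That maximality trick (replacing the ``orthogonal complement'' argument that worked at $\eps = 0$ in~\cite{ChawinH23}) is the new idea your proposal is missing.
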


Theorems~\ref{thm:IntroPSD} and~\ref{thm:IntroPSDexp} substantially strengthen the previously known $\NP$-hardness results for low-rank matrix completion in the positive semi-definite setting. As demonstrated above, our results offer flexibility in the choice of parameters, enabling us to establish hardness for several scenarios of interest. For comparison, the result of~\cite{HMRW14} is specific to $\eps$ decaying polynomially in the rank and achieves $\NP$-hardness of approximation to within factors smaller than $2$. In contrast, for this regime of $\eps$, Theorem~\ref{thm:IntroPSD} establishes $\NP$-hardness of approximation to within factors that grow polynomially in the rank. Furthermore, all our hardness results hold even when the completed matrices of $\YES$ instances are required to have coherence $1$, a property not guaranteed by the $\NP$-hardness result of~\cite{HMRW14}. Finally, while the hardness result of~\cite{ChawinH23} achieves the same gap as Theorem~\ref{thm:IntroPSDexp}, it is restricted to the non-error setting of $\eps=0$.

We turn to our $\NP$-hardness results for the low-rank matrix completion problem, where the completed matrix is constrained by a bounded infinity norm.
Consider the gap $(d_1,d_2,\eps,\theta)$-Completion problem, defined as follows.

\begin{definition}[The $(d_1,d_2,\eps,\theta)$-Completion Problem]\label{def:theta_prob}
For positive integers $d_1 < d_2$ and real numbers $\eps \geq 0$ and $\theta \geq 1$, the {\em $(d_1,d_2,\eps,\theta)$-Completion} problem asks, given a partial matrix
\[A \in ([-\theta,+\theta] \cup \{\perp\})^{n \times n},\]
to distinguish between the following cases.
\begin{itemize}
  \item $\YES$: There exists a matrix $B \in [-\theta,+\theta]^{n \times n}$, such that $A_{i,j} = B_{i,j}$ for all $i,j \in [n]$ with $A_{i,j} \neq \perp$, $\mu(B)=1$, and $\rank(B) \leq d_1$.
  \item $\NO$:  Every matrix $B \in [-\theta,+\theta]^{n \times n}$, such that $|A_{i,j} - B_{i,j}| \leq \eps$ for all $i,j \in [n]$ with $A_{i,j} \neq \perp$, satisfies $\rank(B) \geq d_2$.
\end{itemize}
\end{definition}

Our hardness results for this problem are stated as follows.

\begin{theorem}[Simplified]\label{thm:IntroFit}
For every sufficiently large integer $d$ and any real numbers $\eps \in [2^{-O(d)},\frac{1}{7}]$ and $\theta \in  [1,2^{2^{O(d)}}]$, the $(d,O(\frac{d}{\eps^2 \cdot \log (1/\eps)}),\eps,\theta)$-Completion problem is $\NP$-hard.
\end{theorem}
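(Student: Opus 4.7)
\textbf{Proof plan for Theorem~\ref{thm:IntroFit}.}
The plan is to follow the same blueprint as the proof of Theorem~\ref{thm:IntroPSD}, but to replace the symmetric, positive-semi-definite construction by an asymmetric one so that the resulting partial matrix controls the rank of general (not necessarily PSD) completions with bounded infinity norm. The starting point is a graph-theoretic hardness result about nearly orthonormal representations, established in an earlier section of the paper: for a graph $G$, it is $\NP$-hard to distinguish between the case where $G$ admits a $d$-dimensional orthonormal representation and the case where every $\eps$-nearly orthonormal representation of $G$ has dimension at least $\Omega(d/(\eps^2 \cdot \log(1/\eps)))$.

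Given such an input graph $G$, I would build a partial matrix $A_G$ by passing to the line digraph of $G$ and indexing rows and columns of $A_G$ by the arcs of this digraph, possibly augmented by a small identity-like block of specified entries. The specified entries of $A_G$ are chosen to enforce a bilinear analogue of orthonormality: on certain arc-pairs the value $1$ is prescribed (to pin down diagonal-like rigidity) and on others the value $0$ (to encode adjacency in $G$). In the YES case, a $d$-dimensional orthonormal representation $\{u_v\}_{v \in V(G)}$ of $G$ yields a valid completion of $A_G$ whose entries are suitably normalized inner products of the representation vectors; this completion has rank $d$, entries of magnitude at most $1 \leq \theta$, and coherence equal to $1$.

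The heart of the argument is the NO case. Assume $A_G$ admits a completion $B$ with $\rank(B) \leq r$, entries in $[-\theta,+\theta]$, and within $\eps$ of the specified values, and factor $B = X Y^T$ with $X, Y$ having $r$ columns. I would interpret rows of $X$ and $Y$ as ``left'' and ``right'' vectors attached to the arcs of the line digraph, and then use the line-digraph structure to contract these into a single vector family indexed by $V(G)$ that serves as a candidate $\eps'$-nearly orthonormal representation of $G$ for some slightly degraded $\eps'$. The crucial ingredient is the rank bound for perturbed identity matrices developed earlier in the paper: the identity-like specified entries of $A_G$, preserved to within $\eps$, pin down the relative scales of the left and right factors and force the effective dimension of the extracted representation to be $O(r)$, uniformly in $\theta$. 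Pulling back through the line-digraph contraction then produces a nearly orthonormal representation of $G$ whose dimension is $O(r)$, contradicting the NO case of the graph problem as soon as $r < O(d/(\eps^2 \cdot \log(1/\eps)))$.

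The main obstacle will be making this NO-case extraction robust to the very large allowed infinity norm $\theta$. Unlike the PSD setting of Theorem~\ref{thm:IntroPSD}, where the completion is a Gram matrix and the rows and columns of its Cholesky-type factor automatically have matched norms, the factorization $B = X Y^T$ can feature rows of $X$ and $Y$ of wildly different magnitudes; without additional control, this asymmetry would inflate the dimension of the extracted representation to something depending on $\theta$. Reconciling this with the double-exponential tolerance $\theta \leq 2^{2^{O(d)}}$ is exactly where the quantitative form of the perturbed-identity rank bound enters: the rigidity enforced by the $\eps$-close identity entries of $B$ limits how much $X$ and $Y$ can be rescaled against each other, and the quantitative strength of this bound is what determines the admissible ranges of both $\eps$ and $\theta$ asserted in the theorem.
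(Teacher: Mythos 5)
There is a genuine gap, and it sits exactly at the point you flag as ``the heart of the argument.'' Your NO-case plan is to extract from a bounded-infinity-norm completion $B$ of rank $r$ a \emph{nearly orthonormal representation} of $G$ of dimension $O(r)$, so as to contradict a hardness result for the $\eps$-orthogonality dimension. But a factorization $B = X\cdot Y^t$ only yields a \emph{biorthogonal} system: vectors $x_v, y_v$ with $\langle x_v,y_v\rangle \approx 1$ and $\langle x_u,y_v\rangle \approx 0$ on edges. This is not an assignment of unit vectors with pairwise small inner products, and there is no general way to convert it into one without losing control of the dimension, so the Ortho-Dim NO case cannot be invoked. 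The paper sidesteps this by not routing through orthogonality dimension at all: it introduces the Graph-Fitness problem, whose NO case is phrased precisely for asymmetric factorizations $X\cdot Y^t$ with bounded row norms (Definition of $(d_1,d_2,\eps,\theta)$-Graph-Fitness), proves its $\NP$-hardness via Theorem~\ref{thm:chi(H)} --- whose coloring argument works directly with the pair $(x_e, y_e)$ attached to the vertices of $\tilde{\delta}G$ --- and only then reduces to Completion by the simple matrix ($1$ on the diagonal, $0$ on edges, $\perp$ elsewhere) in Lemma~\ref{lemma:Fit->Completion}, with a symmetrization $C=\frac12(B+B^t)$ costing a factor $2$ in the rank. (The line digraph lives in the Coloring $\to$ Graph-Fitness reduction, Lemma~\ref{lemma:generalReduction}, not in the construction of the partial matrix, and coherence $1$ in the YES case comes from the disjoint-copies trick of Lemma~\ref{lemma:copies}.)

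Your proposed mechanism for taming $\theta$ is also not viable as stated: the claim that the $\eps$-close identity entries of $B$, via the perturbed-identity rank bound, ``limit how much $X$ and $Y$ can be rescaled against each other'' is false, since for any factorization $B = X\cdot Y^t$ and any invertible diagonal $D$ one also has $B = (XD)\cdot(YD^{-t})^t$, so the entries of $B$ impose no constraint on the factor scales. The correct tool is the John's-theorem factorization of Lemma~\ref{lemma:factor}, which \emph{chooses} a factorization whose rows all have norm at most $d^{1/4}\cdot\|B\|_\infty^{1/2}$; the parameter $\theta$ then enters the chromatic bound of Theorem~\ref{thm:chi(H)} only through the net radius, and the double-exponential tolerance $\theta \le 2^{2^{O(d)}}$ is absorbed by the doubly exponential coloring gap $b(b(d))$ of Theorem~\ref{thm:ColHard}. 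Alon's bound (Theorem~\ref{thm:Alon}, via $m(d,\eps)$) plays a different role than the one you assign it: it bounds the size of the maximal nearly orthogonal sets $E'_v$ used as colors, which is what produces the $\eps$-dependence $O(\frac{d}{\eps^2\log(1/\eps)})$ of the rank gap, not the robustness in $\theta$.
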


\begin{theorem}[Simplified]\label{thm:IntroFitExp}
For every sufficiently large integer $d$ and any real numbers $\eps \in [0, 2^{-\Omega(d)}]$ and $\theta \in [1,2^{2^{o(d)}}]$, the $(d,2^{(1-o(1)) \cdot d/2},\eps,\theta)$-Completion problem is $\NP$-hard.
\end{theorem}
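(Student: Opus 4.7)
The plan is to derive Theorem~\ref{thm:IntroFitExp} from its positive semi-definite counterpart, Theorem~\ref{thm:IntroPSDexp}, via a polynomial-time reduction from $(d,2^{(1-o(1))d/2},\eps')$-PSD-Completion to $(d,2^{(1-o(1))d/2},\eps,\theta)$-Completion, where $\eps'$ is a rescaled version of $\eps$ depending on $\theta$. Starting from a partial matrix $A\in([-1,+1]\cup\{\perp\})^{n\times n}$ witnessing hardness of the PSD problem, the reduction outputs a partial matrix $A'\in([-\theta,+\theta]\cup\{\perp\})^{N\times N}$ with $N=\poly(n)$, designed so that a rank-$d$ PSD completion of $A$ with coherence $1$ lifts to a rank-$d$, $\theta$-bounded completion of $A'$ with coherence $1$, while any low-rank $\theta$-bounded completion of $A'$ that approximates the exposed entries up to $\eps$ can be converted back into a low-rank PSD completion of $A$ approximating the exposed entries up to $\poly(\theta)\cdot\eps$.

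The reduction should exploit the line-digraph and nearly-orthonormal-representation machinery flagged in the abstract. PSD completion corresponds to Gram representations of an underlying graph $G$, whereas general low-rank completion corresponds to bilinear representations obtainable by splitting each vertex of $G$ into a head and a tail, that is, by passing to the vertices of a line digraph. Concretely, I would form $A'$ from the partial matrix $A$ underlying Theorem~\ref{thm:IntroPSDexp} by (a) scaling the entries by $\theta$, (b) appending an exposed $\theta\cdot I$ block that pins down a canonical orthonormal basis, enforcing $\mu(A')=1$ in the YES case while leaving the rank unchanged, and (c) encoding the line-digraph structure among the exposed entries so that, in any low-rank $\theta$-bounded completion of $A'$, the head and tail vectors of each split vertex are forced to coincide up to an $O(\eps)$ perturbation. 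The YES direction then amounts to a Cholesky factorization of the PSD completion of $A$ together with a rescaling.

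The hard part is the NO direction. One must show that a rank-$r$ completion $B'$ of $A'$ with entries in $[-\theta,+\theta]$ and $\ell_\infty$-error at most $\eps$ on the exposed entries yields a PSD completion $B$ of $A$ with $\rank(B)=O(r)$ and $\ell_\infty$-error at most $\eps'=\poly(\theta)\cdot\eps$. The natural extraction procedure symmetrizes the bilinear factors and projects onto the PSD cone, and its error analysis picks up a factor of $\theta$ or $\theta^2$, since the head and tail vectors have norms proportional to $\theta$. This is precisely the source of the coupling between $\eps$ and $\theta$ in the theorem: with $\eps\in[0,2^{-\Omega(d)}]$ and $\theta\in[1,2^{2^{o(d)}}]$, the resulting $\eps'$ still lies in $[0,2^{-\Omega(d)}]$, the admissible range of Theorem~\ref{thm:IntroPSDexp}. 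The rank lower bound itself is obtained by invoking the paper's bounds on the rank of perturbed identity matrices, which rule out approximate Gram representations using fewer than $2^{(1-o(1))d/2}$ vectors. The main obstacle I expect is keeping the error amplification polynomial in $\theta$ throughout the line-digraph gadget, so as to accommodate the full double-exponential range of $\theta$; this is what determines the $2^{o(d)}$, rather than $O(d)$, in the exponent of $\theta$ in the theorem.
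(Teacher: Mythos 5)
There is a genuine gap, and it sits exactly where you flag the ``hard part.'' Your plan derives the infinity-norm statement from Theorem~\ref{thm:IntroPSDexp} by converting a rank-$r$, $\theta$-bounded, $\eps$-approximate completion back into a PSD approximate completion of comparable rank, via symmetrization followed by projection onto the PSD cone. Symmetrization is harmless (it at most doubles the rank), but the PSD projection step does not preserve entrywise closeness: a symmetric matrix with entries in $[-\theta,\theta]$ that matches the exposed entries up to $\eps$ can have eigenvalues as negative as $-n\theta$, and discarding (or flipping) the negative part can move individual entries by an amount comparable to those eigenvalues, not by $\poly(\theta)\cdot\eps$. So the extraction procedure you sketch does not produce a legitimate NO-instance violation for PSD-Completion, and no control of the form ``error at most $\poly(\theta)\cdot\eps$'' is available. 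Moreover, even if such a conversion existed, your parameter bookkeeping fails: with $\eps\le 2^{-\Omega(d)}$ and $\theta$ as large as $2^{2^{o(d)}}$, the quantity $\poly(\theta)\cdot\eps$ is astronomically larger than $2^{-\Omega(d)}$ (indeed larger than $1$), so it does not land in the admissible error range of Theorem~\ref{thm:IntroPSDexp}. Any argument that multiplies the error by a power of $\theta$ cannot reach the doubly-exponential range of $\theta$ in the statement.

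The paper does not route the infinity-norm result through PSD-Completion at all. Both completion problems are reduced from the intermediate $(d_1,d_2,\eps,\theta)$-Graph-Fitness problem, whose NO-condition is phrased in terms of \emph{bilinear} factorizations $X\cdot Y^t$ with bounded row norms, precisely so that non-PSD matrices can be handled. For the Completion problem (Lemma~\ref{lemma:Fit->Completion}, Item~\ref{itm2:lemma_reduction}), an approximate completion $B$ with $\|B\|_\infty\le\theta'$ is symmetrized to $C=\tfrac12(B+B^t)$ and then factored via Lemma~\ref{lemma:factor} (John's theorem) into $X\cdot Y^t$ with row norms about $d^{1/4}\theta'^{1/2}$; after a mild normalization the error stays $O(\eps)$ --- it is never multiplied by $\theta$. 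The hardness of Graph-Fitness itself comes from the line-digraph reduction (Lemma~\ref{lemma:generalReduction} and Theorem~\ref{thm:chi(H)}), where $\theta$ enters only through the size of an $\eta$-net of the radius-$\theta$ ball, i.e.\ as $\log\theta$ in the exponent of the number of colors, which is then compared against the doubly-exponential coloring gap $b(b(d))$ of Theorem~\ref{thm:ColHard}. This logarithmic, rather than multiplicative, dependence on $\theta$ is what allows $\theta$ up to $2^{2^{o(d)}}$, and it is the ingredient missing from your proposal.
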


\noindent
As mentioned earlier, the previously known hardness results for the $(d_1,d_2,\eps,\theta)$-Completion problem, given in~\cite{HMRW14}, rely on complexity assumptions stronger than the standard conjecture $\P \neq \NP$.

Our $\NP$-hardness results are obtained via an efficient reduction from a gap coloring problem, whose hardness was proved by Krokhin, Opr{\v s}al, Wrochna, and Zivn{\'{y}}~\cite{KrokhinOWZ23}. Their proof employed the concept of line digraphs, introduced by Harary and Norman~\cite{HararyN60} in 1960, which lies at the heart of the present paper as well. Specifically, we introduce extensions of the notions of orthogonality dimension and minrank of graphs (see Definitions~\ref{def:od} and~\ref{def:fit}), and as our main technical contribution, we show that for line digraphs, these quantities are intimately related to the chromatic number. The analysis involves bounds on the rank of perturbed identity matrices that were proved by Alon~\cite{Alon03} in 2003 and have found diverse applications (see Theorem~\ref{thm:Alon} and~\cite{Alon09}). After establishing hardness results for our extensions of orthogonality dimension and minrank, we derive our hardness results for low-rank matrix completion problems. We believe that these novel graph quantities may be of independent interest, and we encourage their further study by pointing out a close relation to the notion of circular chromatic number, introduced by Vince~\cite{Vince88} in 1988 (see Section~\ref{sec:circular}).

\subsection{Proof Technique}

We provide here an overview of the techniques and ideas behind the proofs of our $\NP$-hardness results for the low-rank matrix completion problem. For concreteness, we focus on the PSD-Completion problem, where the completed matrix is required to be positive semi-definite. We then briefly discuss the setting of a bounded infinity norm.

Our hardness proofs are anchored in the classic gap coloring problem. Recall that the chromatic number of a graph $G$, denoted $\chi(G)$, is the smallest number of colors needed for a vertex coloring of $G$ in which adjacent vertices receive distinct colors. For fixed positive integers $k_1<k_2$, the $(k_1,k_2)$-Coloring problem asks, given an input graph $G$, to distinguish between the case where $\chi(G) \leq k_1$ and the case where $\chi(G) \geq k_2$. This problem is known to be intractable for all integers $k_2 > k_1 \geq 3$, under complexity assumptions related to the Unique Games Conjecture~\cite{DinurMR06,DinurS10}. However, identifying the values of $k_1$ and $k_2$ for which the problem is $\NP$-hard has long been considered notoriously difficult. The current state of the art shows that for every integer $k_1 \geq 3$, the problem is $\NP$-hard for $k_2 = 2k_1$, as proved by Barto, Bul{\'{\i}}n, Krokhin, and Opr{\v s}al~\cite{BartoBKO21}, and for $k_2 = \binom{k_1}{\lfloor k_1/2 \rfloor} = 2^{(1-o(1)) \cdot k_1}$, as proved by Krokhin, Opr{\v s}al, Wrochna, and Zivn{\'{y}}~\cite{KrokhinOWZ23}. The latter result, which improves upon the former for all integers $k_1 \geq 6$, serves as the starting point of our hardness proofs.

The $\NP$-hardness proof of Krokhin et al.~\cite{KrokhinOWZ23} for the gap coloring problem is based on the concept of line digraphs~\cite{HararyN60}, which allows to efficiently shrink the chromatic number of a given graph in a controlled manner. Specifically, for a graph $G$, let $\tilde{\delta}G$ denote the underlying graph of the line digraph of $G$, namely, the graph whose vertices are all the ordered pairs of adjacent vertices in $G$ (whose number is twice the number of edges in $G$), where two vertices $(u_1,u_2)$ and $(v_1,v_2)$ are adjacent in $\tilde{\delta}G$ if $u_2=v_1$ or $u_1=v_2$ (see Definition~\ref{def:line}). A result of Poljak and R{\"{o}}dl~\cite{PoljakR81} shows that the chromatic number of the graph $\tilde{\delta}G$ is logarithmic in that of $G$ (see Theorem~\ref{thm:chi_delta}). The hardness result of~\cite{KrokhinOWZ23} was derived by repeatedly applying this transformation to instances of the $(k,2^{O(k^{1/3})})$-Coloring problem, the $\NP$-hardness of which was previously proved by Huang~\cite{Huang13}. Since the appearance of~\cite{KrokhinOWZ23}, line digraphs have been effectively utilized in several hardness proofs, e.g., in~\cite{GuruswamiS20,ChawinH23,HechtMS23}, and they also form a key ingredient in the present paper.

To give a glimpse of the relationship between the chromatic numbers of a graph $G$ and its associated graph $\tilde{\delta}G$, and as a gentle warm-up to our actual argument, let us briefly explain why $\chi(\tilde{\delta}G) \leq k$ implies that $\chi(G) \leq 2^k$. Indeed, given a proper $k$-coloring of $\tilde{\delta}G$, we define a coloring of $G$ by assigning to each vertex $v$ the set $c(v)$ of colors used at the vertices of the form $(\cdot,v)$ in $\tilde{\delta}G$ (i.e., vertices whose head is $v$). Clearly, the number of colors used does not exceed $2^k$. To verify that the coloring is proper, observe that if $u$ and $v$ are adjacent vertices in $G$, the color of the vertex $(u,v)$ in $\tilde{\delta}G$ lies in $c(v)$ but not in $c(u)$, ensuring that $c(u) \neq c(v)$. A slightly better upper bound on the chromatic number of $G$, along with a matching lower bound, is provided in~\cite{PoljakR81}.

A natural extension of graph colorings, originally proposed by Lov{\'a}sz~\cite{Lovasz79} in the introduction of his renowned $\vartheta$-function, is that of orthonormal representations of graphs. A $d$-dimensional orthonormal representation of a graph is an assignment of a unit vector in $\R^d$ to each vertex, such that the vectors assigned to adjacent vertices are orthogonal.\footnote{Strictly speaking, the definition in~\cite{Lovasz79} requires vectors assigned to {\em non-adjacent} vertices to be orthogonal. This corresponds to an orthonormal representation of the complement graph in our terminology, which we adopt from, e.g.,~\cite{Peeters96,BrietBLPS15} for convenience.} The orthogonality dimension of a graph $G$, denoted $\od(G)$, is the smallest integer $d$ for which $G$ admits a $d$-dimensional orthonormal representation (see Definition~\ref{def:od}). Note that for every graph $G$, it holds that
\begin{eqnarray}\label{eq:ODvsChi}
\log_3 \chi(G) \leq \od(G) \leq \chi(G).
\end{eqnarray}
Indeed, for the upper bound on $\od(G)$, notice that a proper $k$-coloring of $G$ yields a $k$-dimensional orthonormal representation by assigning the $i$th vector of the standard basis of $\R^k$ to the vertices of the $i$th color class.
For the lower bound, consider a $k$-dimensional orthonormal representation of $G$, and observe that replacing its vectors by their sign vectors in $\{0,\pm 1\}^k$ results in a proper coloring of $G$ with $3^k$ colors. For a construction of graphs for which the left-hand side of~\eqref{eq:ODvsChi} is tight up to a multiplicative constant, see, e.g.,~\cite{HavivMFCS19}. Now, by associating with each orthonormal representation of $G$ the Gram matrix of its vectors, it follows that $\od(G)$ is the smallest possible rank of a positive semi-definite matrix with ones on the diagonal and zeros in entries that correspond to pairs of adjacent vertices. This formulation naturally connects the problem of determining the orthogonality dimension of a graph to that of determining the minimum rank of a positive semi-definite completion of a suitably defined partial matrix.

The computational hardness of determining the orthogonality dimension of graphs was speculated by Lov{\'a}sz, Saks, and Schrijver~\cite{LovaszSS89} in 1989 and has been studied in several recent works, e.g.,~\cite{BrietBLPS15,GolovnevH22,ChawinH23}.
We first mention that one may combine the inequalities in~\eqref{eq:ODvsChi} with the hardness results of the gap coloring problem from~\cite{DinurMR06,DinurS10} to conclude that deciding whether an input graph $G$ satisfies $\od(G) \leq d_1$ or $\od(G) \geq d_2$ is intractable for all integers $d_2 > d_1 \geq 3$, assuming some variant of the Unique Games Conjecture. This reasoning, however, is insufficient to derive any $\NP$-hardness result for orthogonality dimension, even from the strongest known $\NP$-hardness of gap coloring. Still, it was shown in~\cite{ChawinH23} that for every sufficiently large integer $d$, it is $\NP$-hard to decide whether an input graph $G$ satisfies $\od(G) \leq d$ or $\od(G) \geq 2^{(1-o(1)) \cdot d/2}$. This result was proved based on the hardness of the $(k,2^{(1-o(1))\cdot k})$-Coloring problem from~\cite{KrokhinOWZ23} through the reduction that maps a given graph $G$ to the graph $\tilde{\delta}G$. On the one hand, it follows from~\eqref{eq:ODvsChi} that the orthogonality dimension of $\tilde{\delta}G$ does not exceed its chromatic number, which is logarithmic in $\chi(G)$. To complete the correctness of the reduction, it was shown in~\cite{ChawinH23} that $\od(\tilde{\delta}G) \leq d$ implies that $\chi(G) \leq d^{O(d^2)}$. This, in turn, leads to the $\NP$-hardness of the $d$ vs. $2^{(1-o(1))\cdot d/2}$ gap for the orthogonality dimension problem, as well as for the PSD-Completion problem when no error is allowed.

The argument in~\cite{ChawinH23} was based on the idea outlined below. Suppose that $\od(\tilde{\delta}G) \leq d$, and consider a $d$-dimensional orthonormal representation of $\tilde{\delta}G$, which assigns to each ordered pair $e$ of adjacent vertices in $G$ a vector $x_e \in \R^d$. Associate with each vertex $v$ of $G$ the linear subspace $c(v) \subseteq \R^d$ spanned by the vectors of the form $x_{(\cdot,v)}$. Observe that if $u$ and $v$ are adjacent vertices in $G$, their subspaces $c(u)$ and $c(v)$ are quite distant from each other, in the sense that there exists a vector --- the vector $x_{(u,v)}$ associated with the vertex $(u,v)$ in $\tilde{\delta}G$ --- that lies in $c(v)$ but is orthogonal to the entire subspace $c(u)$. Now, every subspace of $\R^d$ can be represented by an orthonormal basis of at most $d$ vectors. To obtain a coloring of $G$ with finitely many colors, the basis vectors are replaced by their representatives from a sufficiently dense net of the unit sphere in $\R^d$, resulting in a proper coloring of $G$ with $d^{O(d^2)}$ colors, as desired.

In order to adapt this approach to the more tolerant setting of PSD-Completion, where an additive error of $\eps$ is allowed at each entry of the input partial matrix, we introduce the notion of $d$-dimensional $\eps$-orthonormal representations of graphs. Here, each vertex of the graph $G$ at hand is assigned a unit vector in $\R^d$, such that the inner product of vectors assigned to adjacent vertices is at most $\eps$ in absolute value. Letting $\od_\eps(G)$ denote the smallest possible dimension of such an assignment, one can show that approximating the value of $\od_\eps(G)$ for an input graph $G$ is efficiently reducible to approximating the smallest rank of a positive semi-definite matrix that agrees with a given partial matrix, even when an additive error of $O(\eps)$ is allowed per entry. To prove the $\NP$-hardness of the former, we apply again the reduction that transforms a graph $G$ into the graph $\tilde{\delta}G$. For correctness, we aim to establish an upper bound on $\chi(G)$ in terms of $\od_\eps(\tilde{\delta}G)$. It turns out, though, that the proof idea of~\cite{ChawinH23}, described above, does not extend to this setting. To see why, consider a $d$-dimensional $\eps$-orthonormal representation of $\tilde{\delta}G$, which assigns to each ordered pair $e$ of adjacent vertices in $G$ a vector $x_e \in \R^d$, and as before, associate with each vertex $v$ of $G$ the linear subspace $c(v) \subseteq \R^d$ spanned by the vectors of the form $x_{(\cdot,v)}$. Now, let $u$ and $v$ be adjacent vertices in $G$. While the vector $x_{(u,v)}$ still lies in $c(v)$, it is no longer guaranteed to be orthogonal to the subspace $c(u)$. In fact, this vector is only guaranteed to have an inner product of at most $\eps$ in absolute value with the vectors of a basis of $c(u)$, which does not even preclude the possibility that it lies in $c(u)$, making it impossible to deduce that $c(u)$ and $c(v)$ are far apart. As an example, consider the two unit vectors $e_1$ and $\sqrt{1-\eps^2} \cdot e_1 + \eps \cdot e_2$, where $e_i$ denotes the $i$th vector of the standard basis of $\R^d$. Notice that the vector $e_2$ has an inner product of at most $\eps$ in absolute value with each of the two vectors, yet it lies within the subspace they span.

We overcome the above difficulty through a different coloring of $G$. Namely, for each vertex $v$ of $G$, we consider a maximal set $c(v) \subseteq \R^d$ of vectors of the form $x_{(\cdot,v)}$ with pairwise inner products of absolute value at most $\eps'$, for some appropriately chosen $\eps' > \eps$. Now, if $u$ and $v$ are adjacent vertices in $G$, then the vector $x_{(u,v)}$ has an inner product of at most $\eps$ in absolute value with each vector of $c(u)$. However, by the maximality of $c(v)$, there must exist a vector in $c(v)$ whose inner product with $x_{(u,v)}$ is larger than $\eps'$ in absolute value: either $x_{(u,v)}$ itself or some other vector that prevented it from being added to $c(v)$. This, in a sense, implies that the sets $c(u)$ and $c(v)$ are sufficiently far apart, so that they remain distinct once their vectors are replaced by representatives from a sufficiently dense net. Yet, the number of vectors in the sets $c(v)$ has a significant effect on the number of colors used. In contrast to the $\eps=0$ setting, here we do not handle bases of subspaces of $\R^d$, and therefore, we cannot ensure that the size of each set $c(v)$ is bounded by $d$. We do know that the vectors of each set $c(v)$ are nearly orthogonal to each other, with the absolute value of their pairwise inner products at most $\eps'$. To bound their size, we apply bounds of Alon~\cite{Alon03} on the rank of perturbed identity matrices (see Theorem~\ref{thm:Alon}). When $\eps'$ is sufficiently small, namely, $\eps' \leq O(1/\sqrt{d})$, it turns out that each set $c(v)$ includes fewer than $2d$ vectors. For larger values of $\eps'$, the bound weakens, leading to the dependence of the hardness gap on the error $\eps$, as stated in Theorem~\ref{thm:IntroPSD}.

To extend our approach to the low-rank matrix completion problem with a bounded infinity norm, we introduce an extension of the notion of graph-fitting matrices, proposed by Haemers in~\cite{Haemers81}. Here, for a graph $G$, we consider matrices (not necessarily positive semi-definite) that have ones on the diagonal and values of magnitude at most $\eps$ in entries corresponding to adjacent vertices (see Definition~\ref{def:fit}). Our main technical result provides an upper bound on the chromatic number of a graph $G$, assuming that the graph $\tilde{\delta}G$ admits such a matrix with a bounded rank and a bounded infinity norm (see Theorem~\ref{thm:chi(H)} and Corollary~\ref{cor:fit_H}). The proof generalizes the technique described above, drawing on the fact that every matrix has a rank factorization involving matrices with rows of bounded norm (see Lemma~\ref{lemma:factor}). The detailed argument is presented in the subsequent technical sections.

\subsection{Outline}
The rest of the paper is organized as follows. In Section~\ref{sec:prel}, we collect several notations and tools that will be used throughout the paper. In Section~\ref{sec:OD}, we define and study nearly orthonormal representations of graphs, with particular attention to line digraphs. In Section~\ref{sec:hard}, we apply our insights to establish the hardness of a problem we call Graph Fitness. This, in turn, leads to the hardness of the $(d_1,d_2,\eps)$-PSD-Completion and $(d_1,d_2,\eps,\theta)$-Completion problems, thereby verifying Theorems~\ref{thm:IntroPSD},~\ref{thm:IntroPSDexp},~\ref{thm:IntroFit}, and~\ref{thm:IntroFitExp} (for precise statements, see Corollaries~\ref{cor:PSD_hard},~\ref{cor:Comp_hard}, and~\ref{cor:PSD_cases}).

\section{Preliminaries}\label{sec:prel}

Throughout the paper, we omit floor and ceiling signs when they are not essential, and all logarithms are taken in base $2$, unless otherwise specified. Graphs refer to undirected graphs, and digraphs refer to directed graphs, with all graphs and digraphs being simple (i.e., with no loops or parallel edges). For a positive integer $n$, we denote $[n] = \{1,\ldots,n\}$.

\subsection{Linear Algebra}
For a positive integer $d$, let $\langle \cdot,\cdot \rangle$ and $\|\cdot\|$ stand for the standard inner product and Euclidean norm on $\R^d$, respectively.
As is customary, a vector $x \in \R^d$ with $\|x\|=1$ is referred to as a unit vector.
The following simple claim will be used.
\begin{claim}\label{claim:x,y,z}
For every positive integer $d$ and any three vectors $x,y,z \in \R^d$, it holds that
\[ \Big | |\langle x,y \rangle| - |\langle z,y \rangle | \Big | \leq \|x-z\| \cdot \| y\|. \]
\end{claim}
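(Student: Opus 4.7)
The plan is to combine two elementary inequalities: the reverse triangle inequality for $|\cdot|$ on $\R$, and the Cauchy--Schwarz inequality on $\R^d$. Both are standard tools, so the main challenge is simply organizing the one-line estimate cleanly; there is no real obstacle.

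First, I would use bilinearity of the inner product to write $\langle x,y\rangle - \langle z,y\rangle = \langle x-z,\,y\rangle$. Then, applying the reverse triangle inequality $\big||a|-|b|\big|\leq |a-b|$ with $a=\langle x,y\rangle$ and $b=\langle z,y\rangle$, I obtain
\[
\Big| |\langle x,y\rangle| - |\langle z,y\rangle| \Big| \;\leq\; \big|\langle x,y\rangle - \langle z,y\rangle\big| \;=\; \big|\langle x-z,\,y\rangle\big|.
\]

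Second, I would bound the right-hand side by the Cauchy--Schwarz inequality, $|\langle x-z,\,y\rangle| \leq \|x-z\|\cdot\|y\|$, yielding the claimed inequality. Chaining these two estimates finishes the proof in a couple of lines, and nothing further is needed.
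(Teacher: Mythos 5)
Your proof is correct and takes essentially the same approach as the paper: you apply the reverse triangle inequality (which the paper refers to simply as the triangle inequality) to pass from $\big||\langle x,y\rangle|-|\langle z,y\rangle|\big|$ to $|\langle x-z,y\rangle|$ via bilinearity, and then finish with Cauchy--Schwarz. Nothing is missing.
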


\begin{proof}
Observe that
\[ \Big | |\langle x,y \rangle| - |\langle z,y \rangle | \Big | \leq \Big | \langle x,y \rangle - \langle z,y \rangle  \Big | = | \langle x-z,y \rangle | \leq \|x-z\| \cdot \| y\|,\]
where the first inequality relies on the triangle inequality, and the second on the Cauchy--Schwarz inequality. This completes the proof.
\end{proof}

For a real matrix $A = (a_{i,j})$, let $\rank(A)$ denote its rank over $\R$, and let $\|A\|_\infty$ denote its infinity norm, defined by $\|A\|_\infty = \max_{i,j}|a_{i,j}|$.
It is well known that every matrix $A \in \R^{n \times n}$ of rank $d$ can be expressed as  $A = X \cdot Y^t$ for some matrices $X,Y \in \R^{n \times d}$. The following lemma guarantees the existence of such a factorization with matrices $X$ and $Y$ whose rows have bounded norms. Its proof relies on John's classical theorem from Banach space theory and can be found, e.g., in~\cite[Corollary~2.2]{Rashtchian16}.

\begin{lemma}\label{lemma:factor}
Let $d \leq n$ be positive integers. For every matrix $A \in \R^{n \times n}$ of rank $d$, there exist two matrices $X,Y \in \R^{n \times d}$ satisfying $A = X \cdot Y^t$, such that every row of $X$ and $Y$ has norm at most $d^{1/4} \cdot \|A\|_\infty^{1/2}$.
\end{lemma}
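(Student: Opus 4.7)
The plan is to exploit the natural action of invertible $d \times d$ matrices on rank factorizations. Starting from any rank factorization $A = X_0 Y_0^t$ with $X_0, Y_0 \in \R^{n \times d}$, for every invertible $T \in \R^{d \times d}$ the pair $(X_0 T^t,\, Y_0 T^{-1})$ is again a rank factorization of $A$, since $(Y_0 T^{-1})^t = (T^t)^{-1} Y_0^t$ and $T^t (T^t)^{-1} = I$. So the task reduces to choosing $T$ wisely so that both $X_0 T^t$ and $Y_0 T^{-1}$ have rows of Euclidean norm at most $d^{1/4} \cdot \|A\|_\infty^{1/2}$.

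To set up John's theorem, let $x_1, \ldots, x_n \in \R^d$ denote the rows of $X_0$ viewed as column vectors. Since $\rank(A) = d$, these vectors span $\R^d$, so $K = \mathrm{conv}\{\pm x_1, \ldots, \pm x_n\}$ is a centrally symmetric convex body with nonempty interior. The symmetric version of John's theorem yields an ellipsoid $E$ with $E \subseteq K \subseteq \sqrt{d} \cdot E$. Picking an invertible $T$ that sends $E$ onto the unit Euclidean ball $B \subseteq \R^d$, we obtain $B \subseteq T(K) \subseteq \sqrt{d} \cdot B$. The right-hand inclusion immediately gives $\|T x_i\| \leq \sqrt{d}$ for every $i$, bounding the norms of the rows of $X_0 T^t$. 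For the rows of $Y_0 T^{-1}$, which correspond to the vectors $y'_j = (T^{-1})^t y_j$ (where $y_j$ is the $j$th row of $Y_0$), I would appeal to polar duality: the inclusion $B \subseteq T(K)$ dualizes to $(T(K))^\circ \subseteq B^\circ = B$, and the identity $\langle T x_i, y'_j\rangle = \langle x_i, y_j\rangle = A_{i,j}$ shows that $|\langle T x_i,\, y'_j / \|A\|_\infty\rangle| \leq 1$ for all $i$, so $y'_j / \|A\|_\infty \in (T(K))^\circ \subseteq B$, which yields $\|y'_j\| \leq \|A\|_\infty$.

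A one-line rescaling then finishes the proof: replacing $T$ by $cT$ for a scalar $c > 0$ turns the two bounds into $c\sqrt{d}$ and $\|A\|_\infty / c$, and the choice $c = (\|A\|_\infty / \sqrt{d})^{1/2}$ equalizes them to $d^{1/4} \cdot \|A\|_\infty^{1/2}$, as required. The main obstacle, and conceptual heart of the argument, is the polar duality step, where one must correctly describe $(T(K))^\circ$ via the inequalities $|\langle T x_i,\cdot\rangle| \leq 1$ and carefully track the $T$ versus $(T^{-1})^t$ transpose conventions so that $XY^t = A$ is preserved throughout; the remainder is a direct invocation of John's theorem and the scaling trick.
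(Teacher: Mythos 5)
Your argument is correct: the change-of-basis setup preserves $A = XY^t$, John's theorem for the symmetric body $\mathrm{conv}\{\pm x_i\}$ gives the $\sqrt{d}$ bound on one side, the polar-duality step bounding $\|y'_j\|\leq \|A\|_\infty$ is sound, and the final rescaling balances the two bounds to $d^{1/4}\cdot\|A\|_\infty^{1/2}$. The paper does not prove Lemma~\ref{lemma:factor} itself but cites a reference whose proof is exactly this John's-theorem argument, so your proposal matches the intended approach.
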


A matrix $A \in \R^{n \times n}$ is said to be {\em positive semi-definite} if $x^t A x \geq 0$ for all vectors $x \in \R^n$. This condition is equivalent to the existence of a matrix $X\in \R^{n \times d}$ such that $A = X \cdot X^t$ where $d = \rank(A)$. The coherence of a symmetric matrix measures the extent to which its row (or column) space aligns with the vectors of the standard basis.
It is formally defined as follows.
\begin{definition}[Coherence]\label{def:coherence}
For positive integers $d \leq n$, let $U$ be a $d$-dimensional subspace of $\R^n$, and let $P_U$ be the orthogonal projection onto $U$.
The {\em coherence} of $U$ is defined as $\mu(U) = \frac{n}{d} \cdot \max_{i \in [n]}\|P_U e_i \|^2$, where $e_i$ stands for the $i$th vector of the standard basis of $\R^n$.
Note that $\mu(U) \in [1,\frac{n}{d}]$.
The {\em coherence} of a symmetric matrix $A \in \R^{n\times n}$, denoted by $\mu(A)$, is defined as the coherence of its row (or column) space.
\end{definition}

\subsection{Nets}

For a positive integer $d$ and a real number $\theta >0$, let $B_d(\theta)$ denote the closed $d$-dimensional ball of radius $\theta$ centered at the origin, i.e., $B_d(\theta) = \{ x \in \R^d~\mid~\|x\| \leq \theta\}$. We define a net for a closed ball as follows.
\begin{definition}\label{def:net}
For a positive integer $d$ and real numbers $\eta,\theta >0$, an {\em $\eta$-net} for $B_d(\theta)$ is a set $K \subseteq \R^d$ such that for any $x \in B_d(\theta)$, there exists a point $y \in K$ satisfying $\|x-y\| < \eta$.
\end{definition}

Note that Definition~\ref{def:net} requires every point in the ball to admit a point in the $\eta$-net at a distance {\em strictly} smaller than $\eta$. We need the following standard lemma on the existence of bounded-size nets for balls (see, e.g.,~\cite{FLM77}). A brief proof is included for completeness.
\begin{lemma}\label{lemma:net}
For every positive integer $d$ and any real numbers $\eta,\theta >0$, there exists an $\eta$-net for $B_d(\theta)$ of size at most \[ \bigg (\frac{2\theta}{\eta}+1 \bigg )^d.\]
\end{lemma}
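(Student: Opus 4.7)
The plan is the standard greedy packing argument. First I would construct $K$ as a \emph{maximal} subset of $B_d(\theta)$ with the property that any two distinct points of $K$ are at distance at least $\eta$ from each other. Such a maximal set exists (e.g.\ by adding points one at a time; since $B_d(\theta)$ is bounded, only finitely many $\eta$-separated points fit). I would then verify the net property: for any $x \in B_d(\theta)$, either $x \in K$, in which case $\|x - x\| = 0 < \eta$ works trivially, or $x \notin K$, in which case the maximality of $K$ forces the existence of some $y \in K$ with $\|x - y\| < \eta$ (otherwise $K \cup \{x\}$ would still be $\eta$-separated, contradicting maximality). This is exactly the strict-inequality net property in Definition~\ref{def:net}.

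To bound $|K|$, I would use a volume packing argument. Since the points of $K$ are pairwise at distance at least $\eta$, the open Euclidean balls of radius $\eta/2$ around them are pairwise disjoint. On the other hand, each such ball is contained in $B_d(\theta + \eta/2)$ by the triangle inequality applied to its center (which lies in $B_d(\theta)$). Using the scaling $\vol(B_d(r)) = r^d \cdot \vol(B_d(1))$ and comparing $d$-dimensional Lebesgue volumes yields
\[ |K| \cdot (\eta/2)^d \; \leq \; (\theta + \eta/2)^d, \]
which rearranges to $|K| \leq (2\theta/\eta + 1)^d$, as desired.

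The argument is entirely routine; there is no real obstacle. The only point that requires care is matching the \emph{strict} inequality $\|x-y\| < \eta$ required by Definition~\ref{def:net}: this is why I start from a set that is $\eta$-separated in the \emph{closed} sense (pairwise distances $\geq \eta$) and then exploit maximality to deduce a strict bound for points outside the set.
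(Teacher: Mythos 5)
Your proof is correct and follows essentially the same route as the paper: a maximal $\eta$-separated subset of $B_d(\theta)$ serves as the net (with maximality giving the strict inequality required by Definition~\ref{def:net}), and the volume comparison of disjoint radius-$\eta/2$ balls inside $B_d(\theta+\eta/2)$ gives the bound $\left(\frac{2\theta}{\eta}+1\right)^d$. No issues.
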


\begin{proof}
Let $K$ be a maximal subset of $B_d(\theta)$ with pairwise distances at least $\eta$. By maximality, $K$ forms an $\eta$-net for $B_d(\theta)$. By the triangle inequality, the open balls of radius $\frac{\eta}{2}$ centered at the points of $K$ are pairwise disjoint and contained in $B_d(\theta+\frac{\eta}{2})$. Therefore, the number of points in $K$ does not exceed the ratio between the volumes, implying that
\[ |K| \leq \bigg ( \frac{\theta+\eta/2}{\eta/2} \bigg )^d = \bigg ( \frac{2\theta}{\eta} +1 \bigg )^d.\]
The proof is complete.
\end{proof}

\subsection{The Rank of Perturbed Identity Matrices}

It is well known and easy to verify that if a matrix $A=(a_{i,j}) \in \R^{m \times m}$ satisfies $a_{i,i}=1$ for all $i \in [m]$ and $|a_{i,j}| \leq \frac{1}{m}$ for all distinct $i,j \in [m]$, then $A$ has full rank. The following theorem, proved by Alon~\cite{Alon03}, provides lower bounds on the rank of a symmetric matrix under a weaker assumption on its off-diagonal entries. For a variety of applications of these bounds, the reader is referred to~\cite{Alon09} (see also~\cite{AKMMR06}).

\begin{theorem}[\cite{Alon03}]\label{thm:Alon}
There exists an absolute constant $c>0$ for which the following holds.
For positive integers $d \leq m$ and for a real number $\eps \in [0,1)$, let $A = (a_{i,j}) \in \R^{m \times m}$ be a symmetric matrix of rank $d$ satisfying $a_{i,i}=1$ for all $i \in [m]$ and $|a_{i,j}| \leq \eps$ for all distinct $i,j \in [m]$. Then
\begin{enumerate}
  \item\label{ThmAlonItm:1} $d \geq \frac{m}{1+\eps^2 \cdot (m-1)}$, and
  \item\label{ThmAlonItm:2} if $\eps \in [\frac{1}{\sqrt{m}},\frac{1}{2}]$, then $d \geq c \cdot \frac{\log m}{\eps^2 \cdot \log (1/\eps)}$.
\end{enumerate}
\end{theorem}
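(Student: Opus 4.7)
The plan is to handle the two parts in sequence, deriving the sharper bound in part~\ref{ThmAlonItm:2} from an amplified application of part~\ref{ThmAlonItm:1}. For the first part, I would use the standard trace-based lower bound $\rank(A) \geq (\trace(A))^2 / \trace(A^2)$, which follows by applying the Cauchy--Schwarz inequality to the $d$ nonzero eigenvalues of the symmetric matrix $A$. The hypotheses give $\trace(A) = m$ and $\trace(A^2) = \|A\|_F^2 \leq m + m(m-1)\eps^2$, which immediately yields $d \geq m / (1 + \eps^2 (m-1))$.

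For part~\ref{ThmAlonItm:2}, the idea is to amplify the smallness of the off-diagonal entries by passing to a Hadamard (entry-wise) power. Set $B = A^{\circ k}$ for an integer $k$ to be chosen. Then $B$ is symmetric with ones on the diagonal and off-diagonal entries of magnitude at most $\eps^k$. The rank of $B$ is also under control: using a rank-$d$ factorization $A = X Y^t$ with $X, Y \in \R^{m \times d}$, the entries of $B$ take the form $\langle X_i, Y_j \rangle^k = \langle X_i^{\otimes k}, Y_j^{\otimes k} \rangle$, where the tensors $X_i^{\otimes k}$ and $Y_j^{\otimes k}$ lie in the space of symmetric $k$-tensors over $\R^d$. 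Since this space has dimension $\binom{d+k-1}{k}$, we obtain $\rank(B) \leq \binom{d+k-1}{k}$.

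I would then choose $k$ to be the smallest positive integer satisfying $\eps^{2k} (m-1) \leq 1$, which in the regime $\eps \in [1/\sqrt{m}, 1/2]$ has order $k = \Theta(\log m / \log(1/\eps))$. Applying part~\ref{ThmAlonItm:1} to $B$ with perturbation parameter $\eps^k$ yields $\rank(B) \geq m/2$, so $\binom{d+k-1}{k} \geq m/2$. Combining this with the elementary estimate $\binom{d+k-1}{k} \leq (e(d+k-1)/k)^k$ and taking logarithms, the factor $\log(e(d+k-1)/k)$ must be at least roughly $2 \log(1/\eps)$, which rearranges to $d \geq k \cdot (1/(e\eps^2) - 1)$. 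Since $\eps \leq 1/2$ forces $1/(e\eps^2) - 1 = \Omega(1/\eps^2)$, this produces the stated bound $d \geq c \cdot \log m / (\eps^2 \log(1/\eps))$ for an absolute constant $c > 0$. The main step to get right is the symmetric-tensor bound on $\rank(A^{\circ k})$, which is what distinguishes this argument from a naive polynomial-of-$A$ approach (where $\rank(A^k) \leq d$ would give no meaningful amplification); once this bound is in place and $k$ is chosen to balance the shrinkage of the off-diagonal entries against the blow-up in the rank, extracting the lower bound on $d$ is routine algebra.
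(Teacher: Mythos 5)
The paper does not actually prove Theorem~\ref{thm:Alon}; it imports it verbatim from Alon's work, and your proposal follows exactly the known route: the trace/Cauchy--Schwarz bound $\rank(A)\ge(\trace A)^2/\trace(A^2)$ for Item~\ref{ThmAlonItm:1}, and, for Item~\ref{ThmAlonItm:2}, amplification via the Hadamard power $B=A^{\circ k}$ together with the symmetric-tensor bound $\rank(A^{\circ k})\le\binom{d+k-1}{k}$. Item~\ref{ThmAlonItm:1} and the tensor-power rank bound are correct as you state them.

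The gap is in the final accounting of Item~\ref{ThmAlonItm:2}, and it is caused by your choice of $k$. You take $k$ to be the \emph{smallest} integer with $\eps^{2k}(m-1)\le 1$ and then use only $\rank(B)\ge m/2$, i.e.\ $\binom{d+k-1}{k}\ge m/2$. The step ``the factor $\log(e(d+k-1)/k)$ must be at least roughly $2\log(1/\eps)$'' tacitly treats $k$ as exactly $\frac{\log m}{2\log(1/\eps)}$; but since $k$ is rounded \emph{up}, minimality only guarantees $m-1>\eps^{-2(k-1)}$, so taking $k$-th roots yields merely $e(d+k-1)/k\gtrsim\eps^{-2(1-1/k)}=\eps^{-2}\cdot\eps^{2/k}$. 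The parasitic factor $\eps^{2/k}$ is not an absolute constant: writing $s=\log(1/\eps)$ and $t=\frac{\log m}{2s}$, it is $2^{-\Theta(s/t)}$, which is unbounded whenever $\sqrt{\log m}\ll s\ll\log m$ (for instance $\eps=2^{-(\log m)^{3/4}}$, which lies in $[\frac{1}{\sqrt m},\frac12]$). In that regime your chain of inequalities only forces $d\gtrsim k\,\eps^{-2}\,2^{-\Theta(s/t)}$, short of the claimed $c\log m/(\eps^{2}\log(1/\eps))\approx k/\eps^{2}$ by a factor that grows with $m$, so no choice of the absolute constant $c$ repairs it; and this is not an artifact of the estimate $\binom{d+k-1}{k}\le(e(d+k-1)/k)^k$, since with your $k$ the inequality $\binom{d+k-1}{k}\ge m/2$ genuinely admits $d$ of that smaller order. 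The fix is to round \emph{down}: take $k=\lfloor\frac{\log m}{2\log(1/\eps)}\rfloor$, so that $\eps^{2k}\ge 1/m$ and Item~\ref{ThmAlonItm:1} applied to $B$ gives $\rank(B)\ge m/(2\eps^{2k}m)=\tfrac12\,\eps^{-2k}$, a clean $k$-th power; then $e(d+k-1)/k\ge 2^{-1/k}\eps^{-2}$, and since $2^{-1/2}/e>\tfrac14\ge\eps^{2}$ this gives $d\ge c'\,k/\eps^{2}$ for all $k\ge 2$, while the remaining cases ($k\le 1$, equivalently $\log m=O(\log(1/\eps))$, and bounded $m$) follow directly from Item~\ref{ThmAlonItm:1}, which yields $d\ge 1/(2\eps^{2})$ throughout the range $\eps\ge 1/\sqrt m$ and hence already matches the target there after shrinking $c$.
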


In light of Theorem~\ref{thm:Alon}, we introduce the quantities $m(d,\eps)$, defined as follows.

\begin{definition}\label{def:m(r,eps)}
For a positive integer $d$ and a real number $\eps \in [0,\frac{1}{2}]$, let $m(d,\eps)$ denote the maximum integer $m$ for which there exists a symmetric matrix $A = (a_{i,j}) \in \R^{m \times m}$ of rank (at most) $d$ satisfying $a_{i,i}=1$ for all $i \in [m]$ and $|a_{i,j}| \leq \eps$ for all distinct $i,j \in [m]$.
\end{definition}
\noindent
Note that for all positive integers $d$ and real numbers $\eps \leq \eps'$, it holds that $m(d,\eps) \leq m(d,\eps')$.

As a direct corollary of Theorem~\ref{thm:Alon}, we obtain the following bounds on $m(d,\eps)$.

\begin{corollary}\label{cor:m(d,eps)}
There exists an absolute constant $c$ such that the following holds for all positive integers $d$.
\begin{enumerate}
  \item\label{CorAlonItm:1} If $\eps \in  [0,\frac{1}{\sqrt{d}})$, then $m(d,\eps) \leq d \cdot \frac{1-\eps^2}{1-d \cdot \eps^2}$.
  In particular, if $\eps \leq \frac{1}{\sqrt{2d}}$, then $m(d,\eps)<2d$.
  \item\label{CorAlonItm:2} If $\eps \in [\frac{1}{\sqrt{d}},\frac{1}{2}]$, then $m(d,\eps) \leq 2^{c \cdot d \cdot \eps^2 \cdot \log(1/\eps)}$.
\end{enumerate}
\end{corollary}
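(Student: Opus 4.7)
The plan is to derive both parts directly by inverting the two inequalities of Theorem~\ref{thm:Alon}, applied to a hypothetical matrix witnessing $m(d,\eps) = m$. Each part of the corollary corresponds to one item of the theorem, and the ``in particular'' clause in part~\ref{CorAlonItm:1} is obtained by plugging the bound on $\eps$ into the main inequality of that part.

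For part~\ref{CorAlonItm:1}, I would start from item~\ref{ThmAlonItm:1} of Theorem~\ref{thm:Alon}, which reads $d \geq m/(1+\eps^2 (m-1))$. Clearing the denominator and collecting terms containing $m$ gives $m(1-d\eps^2) \leq d(1-\eps^2)$. Under the assumption $\eps < 1/\sqrt{d}$, the factor $1-d\eps^2$ is strictly positive, so we may divide to obtain the stated bound $m \leq d(1-\eps^2)/(1-d\eps^2)$. For the ``in particular'' clause, substituting $\eps \leq 1/\sqrt{2d}$ yields $d\eps^2 \leq 1/2$, hence $1-d\eps^2 \geq 1/2$, and therefore $m \leq 2d(1-\eps^2) < 2d$ (with the degenerate case $\eps=0$ handled separately since then the bound already gives $m \leq d$).

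For part~\ref{CorAlonItm:2}, it is convenient to split according to whether $m \geq d$ or $m < d$. If $m \geq d$, then $\eps \geq 1/\sqrt{d} \geq 1/\sqrt{m}$, so the hypothesis of item~\ref{ThmAlonItm:2} of Theorem~\ref{thm:Alon} is met, yielding $d \geq c_0 \cdot \log m / (\eps^2 \log(1/\eps))$ for the absolute constant $c_0$ from the theorem. Rearranging and exponentiating gives $m \leq 2^{(1/c_0)\cdot d \eps^2 \log(1/\eps)}$, so the bound holds with $c = 1/c_0$. If instead $m < d$, the claim holds trivially as soon as the constant $c$ is taken large enough to ensure $2^{c \cdot d \eps^2 \log(1/\eps)} \geq d$ throughout the interval $\eps \in [1/\sqrt{d}, 1/2]$; the worst case is at $\eps = 1/\sqrt{d}$, where the exponent equals $(c/2)\log d$, which exceeds $\log d$ once $c \geq 2$. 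Choosing $c$ to be the maximum of $1/c_0$ and $2$ handles both regimes uniformly.

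The calculation is essentially routine; the only point requiring care is the constant bookkeeping in part~\ref{CorAlonItm:2}, namely ensuring that a single absolute constant $c$ works both when item~\ref{ThmAlonItm:2} of Theorem~\ref{thm:Alon} is invoked (the $m \geq d$ regime) and when the bound must be verified by hand (the $m < d$ regime). This is the only modest ``obstacle'' in the proof.
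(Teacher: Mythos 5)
Your proposal is correct and takes essentially the same route as the paper: both parts are obtained by rearranging the two bounds of Theorem~\ref{thm:Alon} applied to a witnessing matrix. The only cosmetic difference is your explicit case split $m < d$ in part~\ref{CorAlonItm:2}, which the paper sidesteps by noting that the witnessing matrix has $m \geq d$ (indeed $m(d,\eps) \geq d$ via the identity matrix), so your extra bookkeeping is harmless but not needed.
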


\begin{proof}
Let $A = (a_{i,j}) \in \R^{m \times m}$ be a symmetric matrix of rank $d$ satisfying $a_{i,i}=1$ for all $i \in [m]$ and $|a_{i,j}| \leq \eps$ for all distinct $i,j \in [m]$.
First, by Item~\ref{ThmAlonItm:1} of Theorem~\ref{thm:Alon}, it holds that $d \geq \frac{m}{1+\eps^2 \cdot (m-1)}$.
For any $\eps < \frac{1}{\sqrt{d}}$, rearranging the terms yields that $m \leq d \cdot \frac{1-\eps^2}{1-d \cdot \eps^2}$, implying the bound on $m(d,\eps)$ stated in Item~\ref{CorAlonItm:1} of the corollary.
Next, consider some $\eps \in [\frac{1}{\sqrt{d}},\frac{1}{2}]$. By $m \geq d$, we can apply Item~\ref{ThmAlonItm:2} of Theorem~\ref{thm:Alon} to obtain that, for an absolute constant $c>0$, it holds that $d \geq c \cdot \frac{\log m}{\eps^2 \cdot \log (1/\eps)}$, which by rearranging yields that $m \leq 2^{d \cdot \eps^2 \cdot \log(1/\eps)/c}$. This implies the bound on $m(d,\eps)$ stated in Item~\ref{CorAlonItm:2} of the corollary for an appropriate absolute constant.
\end{proof}

\section{Nearly Orthonormal Representations of Graphs}\label{sec:OD}

A $d$-dimensional orthonormal representation of a graph is an assignment of a unit vector in $\R^d$ to each vertex, such that adjacent vertices receive orthogonal vectors (see~\cite{Lovasz79}). We introduce the following relaxation of this concept, where adjacent vertices receive vectors that are {\em nearly} orthogonal.

\begin{definition}\label{def:od}
Let $G=(V,E)$ be a graph.
For a positive integer $d$ and a real number $\eps \in [0,1)$, a {\em $d$-dimensional $\eps$-orthonormal representation} of $G$ is an assignment of a unit vector $x_v \in \R^d$ to each vertex $v \in V$, such that for every pair of adjacent vertices $u$ and $v$ in $G$, it holds that $|\langle x_u,x_v \rangle| \leq \eps$.
For any real number $\eps \in [0,1)$, the {\em $\eps$-orthogonality dimension} of $G$, denoted $\od_\eps(G)$, is the smallest positive integer $d$ for which $G$ admits a $d$-dimensional $\eps$-orthonormal representation. We omit $\eps$ from the notation and terminology when $\eps=0$.
\end{definition}

A well-studied extension of orthonormal representations, introduced in~\cite{Haemers81}, is that of graph-fitting matrices (see also~\cite{Peeters96}). We propose the following relaxation of this notion.

\begin{definition}\label{def:fit}
Let $G=(V,E)$ be a graph. For a real number $\eps \in [0,1)$, a matrix $A=(a_{u,v}) \in \R^{|V| \times |V|}$, whose rows and columns are indexed by $V$, is said to {\em $\eps$-fit} the graph $G$ if $a_{v,v}=1$ for all $v \in V$ and $|a_{u,v}| \leq \eps$ whenever $u$ and $v$ are adjacent vertices in $G$. When $\eps=0$, $A$ is said to {\em fit} $G$.
\end{definition}
\noindent
For a given graph $G$ and a real number $\eps \in [0,1)$, we are concerned with the minimum possible rank of a matrix that $\eps$-fits $G$.
When $\eps=0$, this quantity coincides with the minrank of the complement of $G$ over the reals (see~\cite{Haemers81,Peeters96}).

\begin{remark}\label{remark:ODvsFIT}
The notions of $\eps$-orthonormal representations and $\eps$-fitting matrices, given in Definitions~\ref{def:od} and~\ref{def:fit}, are closely related.
To see this, consider a graph $G=(V,E)$, and associate with each $d$-dimensional $\eps$-orthonormal representation $(x_v)_{v \in V}$ of $G$ the Gram matrix $A =(a_{u,v}) \in \R^{|V| \times |V|}$ of its vectors, defined by $a_{u,v} = \langle x_u,x_v \rangle$ for all $u,v \in V$. Note that such a matrix $A$ is positive semi-definite, has rank at most $d$, and $\eps$-fits the graph $G$. Therefore, $d$-dimensional $\eps$-orthonormal representations of a graph $G$ may be regarded as the special case of matrices of rank at most $d$ that $\eps$-fit $G$, with the additional property of positive semi-definiteness.
\end{remark}

In the rest of this section, we relate the chromatic number of a graph to the rank of matrices that nearly fit it. We first establish such relations for general graphs and then proceed to the case of underlying graphs of line digraphs. Finally, we link the notion of nearly orthonormal representations to the circular chromatic number.

\subsection{Chromatic Number}

For a positive integer $k$, a {\em $k$-coloring} of a graph $G$ is a mapping from the vertex set of $G$ to a set of size $k$. A coloring $c$ of $G$ is called {\em proper} if $c(u) \neq c(v)$ whenever $u$ and $v$ are adjacent vertices in $G$. The graph $G$ is called {\em $k$-colorable} if it admits a proper $k$-coloring, and the smallest integer $k$ for which $G$ is $k$-colorable is called the {\em chromatic number} of $G$ and is denoted by $\chi(G)$.
Observe that any proper $k$-coloring of $G$ induces a $k$-dimensional orthonormal representation of $G$, in which the vertices of the $i$th color class are assigned the $i$th vector of the standard basis of $\R^k$. Consequently, every graph $G$ satisfies $\od(G) \leq \chi(G)$ and thus admits a positive semi-definite matrix of rank at most $\chi(G)$ that fits it (see Remark~\ref{remark:ODvsFIT}). The following simple lemma, whose argument is borrowed from~\cite{HMRW14}, shows that a slight modification of $G$ ensures the existence of such a matrix with minimal coherence (recall Definition~\ref{def:coherence}).

\begin{lemma}\label{lemma:copies}
For positive integers $k$ and $n$, let $G$ be a $k$-colorable graph on $n$ vertices, and let $H$ denote the disjoint union of $k$ copies of $G$.
Then there exists a positive semi-definite matrix $B \in \{0,1\}^{kn \times kn}$ that fits the graph $H$, such that $\rank(B) = k$ and $\mu(B)=1$ .
\end{lemma}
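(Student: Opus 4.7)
The plan is to design a proper $k$-coloring $c'$ of $H$ in which all $k$ color classes have the same size, and then take $B$ to be the $0/1$ matrix indicating ``same color under $c'$''; the balanced sizes are exactly what will yield coherence $1$.

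First I would fix a proper $k$-coloring $c : V(G) \to \{0,\ldots,k-1\}$ of $G$, label the vertex set of the $i$-th copy of $G$ in $H$ by $\{(v,i) : v \in V(G)\}$ for $i \in \{0,\ldots,k-1\}$, and define $c'(v,i) := (c(v)+i) \bmod k$. Within each copy, $c'$ differs from $c$ only by a cyclic permutation of the color labels, so it is proper there; since $H$ has no edges between distinct copies, $c'$ is a proper $k$-coloring of $H$. A straightforward counting step shows that every color class of $c'$ contains exactly $n$ vertices: for every color $j$ and every $v \in V(G)$ there is a unique $i$ with $c(v)+i \equiv j \pmod k$.

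Having fixed $c'$, I would define $M \in \{0,1\}^{kn \times k}$ as the indicator matrix of the color classes (so $M_{u,j}=1$ iff $c'(u)=j$) and set $B := M M^t \in \{0,1\}^{kn \times kn}$. Positive semi-definiteness is immediate from $B = MM^t$, and since the columns of $M$ are the pairwise disjoint, nonzero indicators of the $k$ color classes, they are linearly independent, giving $\rank(B)=k$. The matrix $B$ fits $H$ because its diagonal is $1$ by construction, and for every edge $\{u,v\}$ of $H$, properness of $c'$ gives $c'(u) \neq c'(v)$ and hence $B_{u,v} = 0$.

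The main point is verifying that $\mu(B) = 1$, and this is where the equal-sized color classes pay off. Since the columns of $M$ are pairwise orthogonal with common squared norm $n$, one has $M^t M = n\, I_k$, so the orthogonal projection onto the column space $U$ of $B$ equals $P_U = \tfrac{1}{n} M M^t = \tfrac{1}{n} B$. Because each column of $B$ is the $0/1$-indicator of a color class of size $n$, it follows that $\|P_U e_v\|^2 = \tfrac{1}{n^2}\|B e_v\|^2 = \tfrac{1}{n}$ for every standard basis vector $e_v$ of $\R^{kn}$, and therefore $\mu(B) = \tfrac{kn}{k} \cdot \tfrac{1}{n} = 1$, as required. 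No step presents a real obstacle; the only genuine design choice is the cyclic shifting across the $k$ copies, without which some color classes could be too large and drive the coherence above $1$.
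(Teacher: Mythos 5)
Your proof is correct and follows essentially the same route as the paper: a cyclic shift of the coloring across the $k$ copies to obtain balanced color classes of size $n$, the Gram/indicator matrix $B = MM^t$ for positive semi-definiteness, rank $k$, and fitting, and the computation $\|P_U e_v\|^2 = \tfrac{1}{n}$ giving $\mu(B)=1$. The only cosmetic difference is that you compute the projection as $P_U = \tfrac{1}{n}B$ via $M^tM = nI_k$, while the paper works directly with the orthonormal basis obtained by scaling the disjointly supported indicator vectors.
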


\begin{proof}
For a $k$-colorable graph $G=(V,E)$ on $n$ vertices, let $c:V \rightarrow [k]$ be a proper $k$-coloring of $G$.
It is clear that any cyclic shift of the coloring $c$, where each color $i$ is replaced by the color in $[k]$ that is congruent to $i+j$ modulo $k$ for some fixed $j$, results in another proper $k$-coloring of $G$. Let $H=(V_H,E_H)$ denote the disjoint union of $k$ copies of $G$, and let $c'$ denote the $k$-coloring of $H$ that colors each copy of $G$ by a different cyclic shift of $c$. Notice that $c'$ is a proper $k$-coloring of $H$ with each color class of size exactly $n$. For each $i \in [k]$ and each vertex $v$ of $H$ colored $i$ by $c'$, let $x_v$ be the $i$th vector of the standard basis of $\R^k$. The assignment $(x_v)_{v \in V_H}$ forms a $k$-dimensional orthonormal representation of $H$, whose Gram matrix $B = (\langle x_u,x_v \rangle) \in \{0,1\}^{kn \times kn}$ is positive semi-definite, fits $H$, and satisfies $\rank(B) = k$. Additionally, the row space $U$ of $B$ is spanned by $k$ vectors in $\{0,1\}^{k n}$ with pairwise disjoint supports, each of size $n$. By scaling these vectors by a factor of $\frac{1}{\sqrt{n}}$, we obtain an orthonormal basis of $U$. Each standard basis vector $e_i \in \R^{kn}$ has an inner product of $\frac{1}{\sqrt{n}}$ with one basis vector of $U$ and zero with all the others. Therefore, letting $P_U$ denote the orthogonal projection onto $U$, it follows that $\|P_U e_i\|^2 = \frac{1}{n}$. This implies that the coherence of $B$ satisfies $\mu(B)=\frac{kn}{k} \cdot \frac{1}{n} =1$, and we are done.
\end{proof}

The following theorem relates the chromatic number of a graph to the rank of a matrix that nearly fits it. It will be used to obtain conditional hardness results for low-rank matrix completion problems (see the discussion at the end of Section~\ref{sec:hard}). A similar reasoning appears in~\cite{HMRW14}.

\begin{theorem}\label{thm:chi(G)}
Let $G=(V,E)$ be a graph.
For a positive integer $d$ and real numbers $\eps \in [0,1)$ and $\theta \geq 1$, suppose that there exist two matrices $X,Y \in \R^{|V| \times d}$, where each row of $X$ and $Y$ has norm at most $\theta$, such that the matrix $X \cdot Y^t$ $\eps$-fits the graph $G$.
Then, it holds that \[\chi(G) \leq \bigg (\frac{4 \theta^2}{1-\eps}+1 \bigg )^{d}.\]
Furthermore, if $X=Y$, then
\[\chi(G) \leq \bigg (\frac{2 \sqrt{2}}{\sqrt{1-\eps}}+1 \bigg )^{d}.\]
\end{theorem}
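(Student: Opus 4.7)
The plan is a standard netting argument: discretize the rows of $X$ (say) by an $\eta$-net on the ball of radius $\theta$, and color each vertex $v$ by the nearest net point to $x_v$. The parameter $\eta$ is chosen just small enough that, whenever adjacent $u, v$ map to the same net point, the resulting closeness of $x_u$ and $x_v$ forces $|\langle x_u, y_v\rangle|$ to be close to $|\langle x_v, y_v\rangle| = 1$, contradicting the $\eps$-fitting condition $|\langle x_u, y_v\rangle| \leq \eps$.

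Concretely, for the main bound set $\eta := \tfrac{1-\eps}{2\theta}$ and use Lemma~\ref{lemma:net} to obtain an $\eta$-net $K$ for $B_d(\theta)$ with $|K| \leq (2\theta/\eta + 1)^d = \bigl(\tfrac{4\theta^2}{1-\eps}+1\bigr)^d$. Since each row of $X$ lies in $B_d(\theta)$, for every vertex $v$ we may pick some $x'_v \in K$ with $\|x_v - x'_v\| < \eta$, and color $v$ by $x'_v$. If $u, v$ are adjacent and share a color, then $\|x_u - x_v\| < 2\eta$, so applying Claim~\ref{claim:x,y,z} to the vectors $x_u$, $y_v$, $x_v$ yields
\[ \bigl| |\langle x_u, y_v\rangle| - |\langle x_v, y_v\rangle| \bigr| \leq \|x_u - x_v\| \cdot \|y_v\| < 2\eta\cdot \theta = 1-\eps. \]
Combined with $\langle x_v, y_v\rangle = 1$ (the diagonal of $XY^t$), this forces $|\langle x_u, y_v\rangle| > \eps$, a contradiction. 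Hence the coloring is proper, proving the first bound.

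For the case $X = Y$, the rows are unit vectors since $\|x_v\|^2 = \langle x_v, x_v\rangle = 1$, so we work in $B_d(1)$. To get the improved bound with $\sqrt{1-\eps}$ in the denominator rather than $1-\eps$, I would exploit the $\pm$ symmetry of the constraint $|\langle x_u, x_v\rangle| \leq \eps$ together with the exact identity $\|x_u \pm x_v\|^2 = 2 \pm 2\langle x_u, x_v\rangle$ (in place of the Cauchy--Schwarz-style Claim~\ref{claim:x,y,z}). Specifically, set $\eta := \sqrt{(1-\eps)/2}$, take an $\eta$-net $K$ for $B_d(1)$ of size at most $\bigl(\tfrac{2\sqrt{2}}{\sqrt{1-\eps}}+1\bigr)^d$, choose $x'_v \in K$ with $\|x_v - x'_v\| < \eta$ for each $v$, and color $v$ by the unordered pair $\{x'_v, -x'_v\}$; the number of distinct colors is at most $|K|$ because each color is determined by the single net point $x'_v$. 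If adjacent $u, v$ share a color, then either $\|x_u - x_v\| < 2\eta$ or $\|x_u + x_v\| < 2\eta$, hence either $\|x_u - x_v\|^2 < 4\eta^2 = 2(1-\eps)$ or $\|x_u + x_v\|^2 < 2(1-\eps)$, which gives respectively $\langle x_u, x_v\rangle > \eps$ or $\langle x_u, x_v\rangle < -\eps$, contradicting $|\langle x_u, x_v\rangle| \leq \eps$. No step here is really hard once the correct $\eta$ is chosen; the only point requiring mild care is verifying that the $\pm$-quotient in the symmetric case indeed yields at most $|K|$ color classes, not something larger like $|K|^2$ as a naive pairing might suggest.
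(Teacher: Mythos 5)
Your proof is correct and follows essentially the same netting argument as the paper, with the same choices of $\eta$, the same net size bounds, and the same use of Claim~\ref{claim:x,y,z} for the first bound. The only (cosmetic) difference is in the $X=Y$ case: the paper colors directly by the nearest net point, since $\|x_u-x_v\|^2 = 2-2\langle x_u,x_v\rangle \geq 2(1-\eps)$ already follows from the one-sided inequality $\langle x_u,x_v\rangle \leq \eps$, so your $\pm$-quotient of the net is harmless but unnecessary.
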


\begin{proof}
Consider a graph $G$, an integer $d$, real numbers $\eps,\theta$, and matrices $X,Y$ as in the statement of the theorem.
Set $\eta = \frac{1-\eps}{2\theta}$. By Lemma~\ref{lemma:net}, there exists an $\eta$-net $K$ for the closed $d$-dimensional ball $B_d(\theta)$ of radius $\theta$, such that $|K| \leq (\frac{2\theta}{\eta}+1)^d $. Let $f:B_d(\theta) \rightarrow K$ be a function that maps each point $x \in B_d(\theta)$ to a point in $K$ that is closest to $x$. Since $K$ is an $\eta$-net for $B_d(\theta)$, it holds that $\|f(x)-x\| < \eta$ for every $x \in B_d(\theta)$.

For every vertex $v \in V$ of $G$, let $x_v$ and $y_v$ denote the rows associated with $v$ in the given matrices $X$ and $Y$, respectively. By assumption, it holds that $\|x_v\| \leq \theta$ and $\|y_v\| \leq \theta$ for all $v \in V$.
Consider the coloring of $G$ that assigns to each vertex $v \in V$ the vector $f(x_v)$.
Obviously, the number of used colors is at most $|K| \leq (\frac{2\theta}{\eta}+1)^d = (\frac{4\theta^2}{1-\eps}+1)^d$.
It remains to show that the coloring is proper.

Let $u,v \in V$ be adjacent vertices in $G$. Since $X \cdot Y^t$ $\eps$-fits the graph $G$, it follows that $\langle x_v,y_v \rangle =1$ and $|\langle x_u,y_v \rangle | \leq \eps$.
Using Claim~\ref{claim:x,y,z}, this yields that
\[ |\langle f(x_u), y_v \rangle | \leq |\langle x_u,y_v \rangle | + \| f(x_u) - x_u \| \cdot \| y_v \| < \eps + \eta \cdot \theta,\]
whereas
\[ |\langle f(x_v), y_v \rangle | \geq |\langle x_v,y_v \rangle | - \| f(x_v) - x_v \| \cdot \| y_v \| > 1 - \eta \cdot \theta = \eps + \eta \cdot \theta, \]
where the last equality follows from the definition of $\eta$.
This implies that $f(x_u) \neq f(x_v)$, hence the colors assigned to $u$ and $v$ are distinct, as desired.

Finally, suppose that $X=Y$, and set  $\theta = 1$ and $\eta = \sqrt{\frac{1-\eps}{2}}$.
The number of colors used by the above coloring of $G$ is at most $|K| \leq (\frac{2\theta}{\eta}+1)^d = (\frac{2\sqrt{2}}{\sqrt{1-\eps}}+1)^d$.
We show that this coloring is proper. Let $u,v \in V$ be adjacent vertices in $G$. Since $X \cdot X^t$ $\eps$-fits the graph $G$, it follows that $\|x_u\| = \|x_v\| = 1$ and that $| \langle x_u,x_v \rangle | \leq \eps$, implying that
\begin{eqnarray}\label{eq:x_u-x_v}
\|x_u-x_v\|^2 = 2-2 \cdot \langle x_u,x_v \rangle \geq 2 \cdot (1-\eps).
\end{eqnarray}
This yields that
\begin{eqnarray*}
\|f(x_u) - f(x_v) \| &=& \| x_u - x_v + (f(x_u) - x_u) - (f(x_v)-x_v) \| \\
& \geq & \|x_u-x_v\|-\| f(x_u) - x_u\| - \| f(x_v)-x_v \| \\
& > & \sqrt{2 \cdot (1-\eps)} - 2\eta =0,
\end{eqnarray*}
where the first inequality relies on the triangle inequality, and the second on~\eqref{eq:x_u-x_v}.
It thus follows that $f(x_u) \neq f(x_v)$, so we are done.
\end{proof}

\subsection{Chromatic Number of Line Digraphs}

The concept of line digraphs, introduced in~\cite{HararyN60}, is defined as follows.
\begin{definition}[Line Digraph]\label{def:line}
For a digraph $G = (V,E)$, the {\em line digraph of $G$}, denoted $\delta G$, is the digraph on the vertex set $E$, where there is a directed edge from a vertex $(u_1,u_2)$ to a vertex $(v_1,v_2)$ whenever $u_2=v_1$.
For an (undirected) graph $G$, its line digraph $\delta G$ is defined as the line digraph of the digraph obtained from $G$ by replacing each edge with two oppositely directed edges. Let $\tilde{\delta} G$ denote the underlying graph of $\delta G$, i.e., the graph obtained from $\delta G$ by ignoring the directions of the edges.
\end{definition}

The following result, proved by Poljak and R{\"{o}}dl~\cite{PoljakR81} (see also~\cite{HarnerE72}), shows that the chromatic number of a graph $G$ determines the chromatic number of $\tilde{\delta} G$. The statement involves the function $b: \N \rightarrow \N$, defined by $b(n) = \binom{n}{\lfloor n/2 \rfloor}$.

\begin{theorem}[\cite{PoljakR81}]
\label{thm:chi_delta}
For every graph $G$, $\chi (\tilde{\delta} G ) = \min \{ n \in \N \mid \chi(G) \leq b(n) \}$.
\end{theorem}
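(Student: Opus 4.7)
The plan is to establish the equality by proving both directions of the equivalence $\chi(G) \leq b(n) \iff \chi(\tilde{\delta}G) \leq n$. Recalling that by Sperner's theorem $b(n)=\binom{n}{\lfloor n/2\rfloor}$ is precisely the size of the largest antichain in the Boolean lattice $(2^{[n]}, \subseteq)$, both bounds will hinge on subset-valued colorings.

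For the easier direction, $\chi(G) \leq b(n) \Rightarrow \chi(\tilde{\delta}G) \leq n$, I would identify the color palette of $G$ with $\binom{[n]}{\lfloor n/2\rfloor}$, obtaining a proper coloring $c : V(G) \to \binom{[n]}{\lfloor n/2\rfloor}$ with $c(u) \neq c(v)$ for every edge $uv$ of $G$. For each vertex $(u,v)$ of $\tilde{\delta}G$, I would pick $\phi(u,v) \in c(v) \setminus c(u)$, which is nonempty since $c(u)$ and $c(v)$ are distinct sets of the same cardinality. A short case analysis on the adjacency rule of $\tilde{\delta}G$ shows $\phi$ is proper: if $(u_1,u_2) \sim (v_1,v_2)$ with, say, $u_2=v_1=w$, then $\phi(u_1,u_2) \in c(w)$ while $\phi(v_1,v_2) \notin c(w)$, and the case $u_1=v_2$ is symmetric.

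For the harder direction, $\chi(\tilde{\delta}G) \leq n \Rightarrow \chi(G) \leq b(n)$, I would start from a proper $n$-coloring $\phi$ of $\tilde{\delta}G$ and, as in the introduction of the paper, associate with each $v \in V(G)$ the set $I(v) \subseteq [n]$ of colors of vertices of the form $(\cdot,v)$ in $\tilde{\delta}G$. The introduction already shows that $\phi(u,v) \in I(v) \setminus I(u)$ for every edge $uv$ of $G$, since $(u,v)$ is adjacent in $\tilde{\delta}G$ to every vertex $(w,u)$ via the $u_1=v_2$ clause. Applying the same reasoning to the vertex $(v,u)$ yields a color in $I(u) \setminus I(v)$, so the key structural strengthening is that $I(u)$ and $I(v)$ are \emph{incomparable} in $(2^{[n]}, \subseteq)$ whenever $u,v$ are adjacent. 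I would then invoke Dilworth's theorem together with Sperner's theorem to partition $(2^{[n]}, \subseteq)$ into exactly $b(n)$ chains $\calC_1,\ldots,\calC_{b(n)}$, and color each $v \in V(G)$ by the index of the chain containing $I(v)$; incomparable sets cannot share a chain, so this is a proper $b(n)$-coloring of $G$.

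The main obstacle is the leap from the naive upper bound $\chi(G) \leq 2^{\chi(\tilde{\delta}G)}$ sketched in the introduction to the tight $b(n)$ bound. Getting it requires two ingredients: the observation that the $I(v)$'s can be forced to be pairwise \emph{incomparable} along edges of $G$ (not merely distinct), obtained by exploiting both orientations $(u,v),(v,u)$ of each edge, and the poset-theoretic passage from incomparability to a chain decomposition of the Boolean lattice of size $b(n)$ via Sperner/Dilworth. The degenerate cases with $n \in \{0,1\}$ (where $G$ has no edges and $\tilde{\delta}G$ is empty) can be checked directly.
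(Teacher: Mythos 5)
The paper states this result as a citation to Poljak and R\"odl and does not reprove it; the introduction only sketches the weaker bound $\chi(G) \leq 2^{\chi(\tilde{\delta}G)}$. Your argument is a correct and complete proof of the stated equality, and it is in fact the standard one. Both directions check out: in the forward direction, choosing $\phi(u,v)\in c(v)\setminus c(u)$ is well-defined because $c(u)\neq c(v)$ are equal-sized $\lfloor n/2\rfloor$-subsets, and the two adjacency clauses of $\tilde{\delta}G$ (including the case $(u,v)\sim(v,u)$, which falls under both clauses simultaneously) are each handled by the in/out-of-$c(w)$ dichotomy. In the converse direction, the decisive refinement over the introduction's warm-up is precisely what you identify: using both ordered pairs $(u,v)$ and $(v,u)$ shows that $I(u)$ and $I(v)$ are incomparable, not merely distinct, and then a chain decomposition of $2^{[n]}$ of size $b(n)$ (Dilworth plus Sperner, or equivalently the symmetric chain decomposition) yields a proper $b(n)$-coloring. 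Passing from the biconditional $\chi(G)\leq b(n)\iff\chi(\tilde{\delta}G)\leq n$ to the claimed identity with the minimum is immediate, and the edge cases $n\in\{0,1\}$ are as harmless as you say. No gaps.
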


The following theorem ties the chromatic number of a graph to the rank of a symmetric matrix that nearly fits the underlying graph of its line digraph. It plays a crucial role in our $\NP$-hardness results. The statement involves the quantities $m(d,\eps)$ given in Definition~\ref{def:m(r,eps)}.

\begin{theorem}\label{thm:chi(H)}
Let $G=(V,E)$ be a graph, and let $\tilde{\delta} G=(V',E')$ be the underlying graph of its line digraph.
For a positive integer $d$ and real numbers $\eps \in [0,\frac{1}{2})$ and $\theta \geq 1$, suppose that there exist two matrices $X,Y \in \R^{|V'| \times d}$, where each row of $X$ and $Y$ has norm at most $\theta$, such that $X \cdot Y^t$ is a symmetric matrix that $\eps$-fits the graph $\tilde{\delta} G$.
Then, for any $\eta \in (0, \frac{1-2\eps}{4\theta}]$, it holds that \[\chi(G) \leq \bigg (\frac{2\theta}{\eta} +1 \bigg )^{d \cdot m(d,2\eta \theta+\eps)}.\]
\end{theorem}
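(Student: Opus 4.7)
The plan is to adapt the maximal-nearly-orthogonal-subset strategy sketched in the Proof Technique subsection from the $X=Y$ positive semi-definite case to the asymmetric factorization $XY^t$ at hand. For each vertex $v$ of $G$, the idea is to select a bounded-size subset $c(v)$ of edges $(w,v)\in V'$ (i.e., vertices of $\tilde{\delta}G$ that end at $v$), and color $v$ by the tuple of rounded $y$-vectors $f(y_e)$ for $e\in c(v)$, where $f$ rounds to an $\eta$-net of $B_d(\theta)$. The key twist over the $\eps=0$ setting is to select $c(v)$ so that the \emph{$x$-$y$ cross inner products} within $c(v)$ are small, and to round the $y$-side (not the $x$-side), so that Claim~\ref{claim:x,y,z} can be applied with the unperturbed $x_{e_0}$ on one side.

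Concretely, set $\eps' = 2\eta\theta + \eps$; the hypothesis $\eta \leq (1-2\eps)/(4\theta)$ yields $\eps' \leq 1/2$. For each $v\in V$, let $M_v = \{(w,v) \in V' : w \text{ adjacent to } v \text{ in } G\}$, and choose a maximal $c(v) \subseteq M_v$ such that $|\langle x_e, y_{e'}\rangle| \leq \eps'$ for all distinct $e,e' \in c(v)$. To bound $|c(v)|$, observe that the principal submatrix $A_v = (\langle x_e, y_{e'}\rangle)_{e,e' \in c(v)}$ of $XY^t$ is symmetric (by hypothesis), has $1$'s on the diagonal (since $XY^t$ $\eps$-fits $\tilde{\delta}G$), off-diagonal entries of magnitude at most $\eps'$ (by construction of $c(v)$), and rank at most $d$ (as a submatrix). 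Definition~\ref{def:m(r,eps)} then gives $|c(v)| \leq m(d,\eps')$.

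Next, apply Lemma~\ref{lemma:net} to fix an $\eta$-net $K$ for $B_d(\theta)$ of size at most $(2\theta/\eta + 1)^d$, and let $f:B_d(\theta) \to K$ send each point to a nearest net point. Color $v$ by the multiset $C(v) = \{f(y_e) : e \in c(v)\}$ of at most $m(d,\eps')$ elements of $K$; encoding each such multiset as a tuple of length $m(d,\eps')$ in $K^{m(d,\eps')}$ (padding canonically) gives at most $(2\theta/\eta + 1)^{d \cdot m(d,\eps')}$ distinct colors, matching the target bound. To verify properness, suppose $u$ and $v$ are adjacent in $G$ but $C(u) = C(v)$. Consider $e_0 = (u,v) \in M_v$. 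Since $\langle x_{e_0}, y_{e_0}\rangle = 1 > \eps'$, maximality of $c(v)$ forces some $e^* \in c(v)$ (possibly $e^* = e_0$) with $|\langle x_{e_0}, y_{e^*}\rangle| > \eps'$. Since $f(y_{e^*}) \in C(v) = C(u)$, pick $e^{**} \in c(u)$ with $f(y_{e^{**}}) = f(y_{e^*})$; then $\|y_{e^*} - y_{e^{**}}\| < 2\eta$ by the triangle inequality, and Claim~\ref{claim:x,y,z} yields
\[
|\langle x_{e_0}, y_{e^{**}}\rangle| \;\geq\; |\langle x_{e_0}, y_{e^*}\rangle| - \|y_{e^*} - y_{e^{**}}\|\cdot\|x_{e_0}\| \;>\; \eps' - 2\eta\theta \;=\; \eps.
\]
But $e^{**} = (w,u)$ for some $w$, and $e_0 = (u,v)$, so $e^{**}$ and $e_0$ share the vertex $u$ and are adjacent in $\tilde{\delta}G$, forcing $|\langle x_{e_0}, y_{e^{**}}\rangle| \leq \eps$ --- a contradiction.

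The main conceptual point, and the place that needs the most care, is the asymmetry of the factorization: the fitting condition only controls the cross products $\langle x_e, y_{e'}\rangle$, not $\langle x_e, x_{e'}\rangle$ or $\langle y_e, y_{e'}\rangle$. This forces the maximal-set construction to be phrased in terms of the cross products, and the net-rounding to act on the $y$-side, so that Claim~\ref{claim:x,y,z} can compare $|\langle x_{e_0}, y_{e^*}\rangle|$ with $|\langle x_{e_0}, y_{e^{**}}\rangle|$ through the \emph{unperturbed} vector $x_{e_0}$. Once this is arranged, the size bound on $c(v)$ reduces to Alon's perturbed-identity theorem via $m(d,\eps')$, and the remainder is a routine net-rounding calculation.
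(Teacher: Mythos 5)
Your proof is correct and follows essentially the same approach as the paper's: set $\eps' = 2\eta\theta + \eps$, extract a maximal subset of in-edges at each vertex with pairwise $XY^t$-entries at most $\eps'$, bound its size by $m(d,\eps')$ via the symmetry and rank-$d$ property of the submatrix, round to an $\eta$-net, and separate colors of adjacent vertices with Claim~\ref{claim:x,y,z} anchored at the edge $(u,v)$. The only cosmetic differences are that you round $y_e$ and hold $x_{e_0}$ fixed while the paper rounds $x_e$ and holds $y_{(u,v)}$ fixed (these choices are interchangeable by the symmetry of $X Y^t$), and you phrase properness as a contradiction rather than exhibiting a vector in one color class but not the other.
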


\begin{proof}
Consider a graph $G$, an integer $d$, real numbers $\eps,\theta,\eta$, and matrices $X,Y$ as in the statement of the theorem.
By Lemma~\ref{lemma:net}, there exists an $\eta$-net $K$ for the closed $d$-dimensional ball $B_d(\theta)$ of radius $\theta$, such that $|K| \leq (\frac{2\theta}{\eta}+1)^d$.
Let $f:B_d(\theta) \rightarrow K$ be a function that maps each point $x \in B_d(\theta)$ to a point in $K$ that is closest to $x$. Since $K$ is an $\eta$-net for $B_d(\theta)$, it holds that $\|f(x)-x\| < \eta$ for every $x \in B_d(\theta)$.

Recall that every vertex of $\tilde{\delta} G$ is a pair $e=(u,v) \in V'$ of vertices $u,v \in V$ that are adjacent in $G$.
For each such vertex $e$, let $x_e$ and $y_e$ denote the rows associated with $e$ in the given matrices $X$ and $Y$, respectively.
By assumption, $\|x_e\| \leq \theta$ and $\|y_e\| \leq \theta$ for every $e \in V'$.
Since the matrix $X \cdot Y^t$ $\eps$-fits $\tilde{\delta} G$, it follows that $\langle x_e,y_e \rangle = 1$ for every $e \in V'$, and that $|\langle x_e,y_{e'} \rangle | \leq \eps$ whenever $e$ and $e'$ are adjacent in $\tilde{\delta}G$.

We define a coloring of $G$ as follows.
Set $\eps' = 2\eta \theta+\eps \leq \frac{1}{2}$.
For every vertex $v \in V$, consider the set $E_v$ of vertices of $\tilde{\delta} G$ whose head is $v$, that is,
\[E_v = \{e \in V' \mid e=(u,v) \mbox{~for some~}u \in V\}.\]
Let $E'_v$ be a maximal subset of $E_v$ (with respect to containment), such that for all distinct $e,e' \in E'_v$, it holds that $|\langle x_e,y_{e'} \rangle | \leq \eps'$. Equivalently, we require the sub-matrix of $X \cdot Y^t$, restricted to the rows and columns corresponding to the vertices of $E'_v$, to have off-diagonal values at most $\eps'$ in absolute value. Notice that $X \cdot Y^t$ is a symmetric matrix of rank at most $d$, and thus so is each of its principal sub-matrices.
Letting $m = m(d,\eps')$, it follows that $|E'_v| \leq m$. Now, we assign to each vertex $v \in V$ the color $c(v)$, defined as the set of all vectors $f(x_e)$ with $e \in E'_v$.
The number of colors used by the coloring $c$ does not exceed the number of $m$-tuples of vectors from $K$, which is $|K|^{m} \leq (\frac{2\theta}{\eta}+1)^{d \cdot m}$. It remains to show that the coloring $c$ is proper.

Let $u,v \in V$ be adjacent vertices in $G$, and consider the vector $y_{(u,v)}$ associated with the vertex $(u,v)$ of $\tilde{\delta} G$ in the matrix $Y$.
Since every vertex of $E_u$ is adjacent in $\tilde{\delta} G$ to the vertex $(u,v)$, it follows that every $e \in E_u$ satisfies $|\langle x_e, y_{(u,v)} \rangle | \leq \eps$.
Using Claim~\ref{claim:x,y,z}, this yields that every $e \in E'_u \subseteq E_u$ satisfies
\[ |\langle f(x_e), y_{(u,v)} \rangle | \leq |\langle x_e,y_{(u,v)} \rangle | + \| f(x_e) - x_e \| \cdot \| y_{(u,v)} \| < \eps + \eta \theta.\]
We next argue that there exists a vertex $e \in E'_v$ such that $|\langle x_e,y_{(u,v)} \rangle | > \eps'$.
Indeed, the vertex $(u,v)$ lies in $E_v$. If $(u,v)$ lies in $E'_v$, then we have $|\langle x_{(u,v)},y_{(u,v)} \rangle | =1 > \eps'$, and otherwise, the maximality of $E'_v$ combined with the symmetry of $X \cdot Y^t$ implies the existence of the desired vertex $e$.
Using Claim~\ref{claim:x,y,z} again, it follows that this vertex $e$ satisfies
\[ |\langle f(x_e), y_{(u,v)} \rangle | \geq |\langle x_e,y_{(u,v)} \rangle | - \| f(x_e) - x_e \| \cdot \| y_{(u,v)} \| > \eps' - \eta \theta = \eps + \eta \theta. \]
We conclude that some vector $f(x_e)$ in the set $c(v)$ is different from all the vectors in the set $c(u)$, hence $c(u) \neq c(v)$.
Therefore, the coloring $c$ of $G$ is proper, and we are done.
\end{proof}

As a simple consequence of Theorem~\ref{thm:chi(H)}, we obtain the following result.

\begin{corollary}\label{cor:OD_H}
For every graph $G$, a positive integer $d$, and real numbers $\eps \in [0,\frac{1}{2})$ and $\eta \in (0, \frac{1-2\eps}{4}]$, if $\od_\eps(\tilde{\delta} G) \leq d$ then
\[\chi(G) \leq \bigg (\frac{2}{\eta}+1 \bigg )^{d \cdot m(d,2\eta+\eps)}.\]
\end{corollary}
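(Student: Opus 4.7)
The plan is to derive the corollary as a direct specialization of Theorem~\ref{thm:chi(H)} by setting $\theta = 1$ and taking $X = Y$. The hypothesis $\od_\eps(\tilde{\delta}G) \leq d$ gives a $d$-dimensional $\eps$-orthonormal representation of $\tilde{\delta}G$, which by definition assigns a unit vector $x_e \in \R^d$ to each vertex $e$ of $\tilde{\delta}G$ such that $|\langle x_e, x_{e'} \rangle | \leq \eps$ whenever $e$ and $e'$ are adjacent in $\tilde{\delta}G$.

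First, I would form the matrix $X \in \R^{|V'| \times d}$ whose rows are the vectors $x_e$ of this representation, and set $Y = X$. Then I would verify the hypotheses of Theorem~\ref{thm:chi(H)}: each row of $X$ has norm exactly $1$, so it is bounded by $\theta = 1$. The product $X \cdot X^t$ is the Gram matrix of the representation, which is automatically symmetric (and in fact positive semi-definite of rank at most $d$, though only symmetry is required here). Its diagonal entries equal $\langle x_e, x_e \rangle = 1$, and its off-diagonal entries corresponding to adjacent vertices of $\tilde{\delta}G$ are bounded in absolute value by $\eps$, so $X \cdot X^t$ indeed $\eps$-fits $\tilde{\delta}G$ in the sense of Definition~\ref{def:fit}.

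Next, I would observe that the admissible range of $\eta$ stated in Theorem~\ref{thm:chi(H)}, namely $\eta \in (0, \frac{1-2\eps}{4\theta}]$, becomes precisely $\eta \in (0, \frac{1-2\eps}{4}]$ upon substituting $\theta = 1$, matching the hypothesis of the corollary. Applying the theorem with $\theta = 1$ then gives the bound
\[ \chi(G) \leq \bigg(\frac{2\theta}{\eta}+1\bigg)^{d \cdot m(d, 2\eta\theta + \eps)} = \bigg(\frac{2}{\eta}+1\bigg)^{d \cdot m(d, 2\eta + \eps)}, \]
which is exactly the conclusion of the corollary.

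There is essentially no obstacle here: the entire argument is a bookkeeping exercise that translates an $\eps$-orthonormal representation into the Gram-matrix language used by Theorem~\ref{thm:chi(H)} (as already highlighted in Remark~\ref{remark:ODvsFIT}), and then reads off the resulting bound. The only point meriting a brief check is that the Gram matrix is symmetric, which is immediate, so that the symmetry requirement in Theorem~\ref{thm:chi(H)} is satisfied and the maximality argument used there (which relies on taking principal sub-matrices) goes through without modification.
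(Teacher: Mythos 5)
Your proposal is correct and follows exactly the paper's own proof: form the matrix $X$ whose rows are the vectors of the $\eps$-orthonormal representation, note that $X \cdot X^t$ is symmetric and $\eps$-fits $\tilde{\delta}G$ with unit-norm rows, and apply Theorem~\ref{thm:chi(H)} with $Y=X$ and $\theta=1$. Nothing further is needed.
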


\begin{proof}
For a graph $G$, let $\tilde{\delta} G = (V',E')$, suppose that $\od_\eps(\tilde{\delta} G) \leq d$, and consider a $d$-dimensional $\eps$-orthonormal representation $(x_e)_{e \in V'}$ of $\tilde{\delta} G$. Let $X$ be the $|V'| \times d$ real matrix, whose rows are indexed by $V'$, where the row associated with a vertex $e \in V'$ is $x_e$. Note that the norm of each row of $X$ is $1$. Observe that the matrix $X \cdot X^t$ is symmetric and $\eps$-fits the graph $\tilde{\delta} G$. The proof is completed by applying Theorem~\ref{thm:chi(H)} with both matrices set to $X$ and with $\theta=1$.
\end{proof}

By combining Theorem~\ref{thm:chi(H)} with Lemma~\ref{lemma:factor}, we derive the following further consequence.

\begin{corollary}\label{cor:fit_H}
For a graph $G=(V,E)$, a positive integer $d$, and a real number $\eps \in [0,\frac{1}{2})$, let $A$ be a real matrix of rank $d$ that $\eps$-fits the graph $\tilde{\delta} G$, and set $\theta = (2d)^{1/4} \cdot \|A\|_\infty^{1/2}$.
Then, for any $\eta \in (0, \frac{1-2\eps}{4\theta}]$, it holds that
\[\chi(G) \leq \bigg (\frac{2\theta}{\eta} +1 \bigg )^{2d \cdot m(2d,2\eta \theta+\eps)}.\]
\end{corollary}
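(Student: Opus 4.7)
The plan is to reduce Corollary~\ref{cor:fit_H} to Theorem~\ref{thm:chi(H)} by producing an appropriate symmetric factorization of (a symmetrization of) the matrix $A$ with suitable control on the row norms. The only gap between the two statements is that Theorem~\ref{thm:chi(H)} requires the fitting matrix $X \cdot Y^t$ to be \emph{symmetric}, whereas $A$ in Corollary~\ref{cor:fit_H} is arbitrary. The cleanest fix is symmetrization, at the cost of doubling the rank; this is exactly what accounts for the $2d$ appearing in the exponent of the bound, and for the $(2d)^{1/4}$ factor in the definition of $\theta$.

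First, I would set $A' = \tfrac{1}{2}(A + A^t)$. Since $\tilde{\delta} G$ is undirected, its adjacency relation is symmetric in the row/column indices, so one checks routinely that $A'$ still $\eps$-fits $\tilde{\delta}G$: the diagonal entries remain equal to $1$, and any off-diagonal entry $a'_{u,v} = \tfrac{1}{2}(a_{u,v} + a_{v,u})$ corresponding to an edge is bounded by $\eps$ in absolute value by the triangle inequality, using that both $(u,v)$ and $(v,u)$ correspond to pairs of adjacent vertices in $\tilde{\delta}G$. Furthermore, $A'$ is symmetric, satisfies $\rank(A') \leq 2 \cdot \rank(A) = 2d$, and $\|A'\|_\infty \leq \|A\|_\infty$.

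Next, I would apply Lemma~\ref{lemma:factor} to $A'$, viewed as a matrix of rank at most $2d$ (padding with zero columns if its rank is strictly smaller), to obtain a factorization $A' = X \cdot Y^t$ with $X, Y \in \R^{n \times 2d}$ whose rows all have norm at most
\[
(2d)^{1/4} \cdot \|A'\|_\infty^{1/2} \leq (2d)^{1/4} \cdot \|A\|_\infty^{1/2} = \theta.
\]
At this point all hypotheses of Theorem~\ref{thm:chi(H)} are satisfied with dimension parameter $2d$ in place of $d$, the same $\eps$, and the chosen $\theta$; in particular the admissible range $\eta \in (0, \tfrac{1-2\eps}{4\theta}]$ is identical to the one in the corollary. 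Invoking Theorem~\ref{thm:chi(H)} then yields
\[
\chi(G) \leq \left(\frac{2\theta}{\eta}+1\right)^{2d \cdot m(2d,\, 2\eta\theta + \eps)},
\]
which is exactly the bound claimed by the corollary. There is no real obstacle here beyond the sanity check that symmetrization preserves the $\eps$-fitting property and at most doubles the rank; the rest is a bookkeeping application of the two cited results.
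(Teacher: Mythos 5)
Your proposal is correct and matches the paper's own proof essentially verbatim: symmetrize $A$ via $\tfrac{1}{2}(A+A^t)$, note that this preserves the $\eps$-fitting property and at most doubles the rank while not increasing $\|\cdot\|_\infty$, apply Lemma~\ref{lemma:factor} to get a factorization with row norms at most $\theta$, and invoke Theorem~\ref{thm:chi(H)} with dimension parameter $2d$. The small aside about padding with zero columns when $\rank(A') < 2d$ is a reasonable sanity check that the paper leaves implicit, but it does not change the argument.
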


\begin{proof}
For a graph $G$, let $\tilde{\delta} G = (V',E')$, and consider a matrix $A$ as in the statement of the corollary.
Put $B = \tfrac{1}{2} \cdot (A+A^t)$, and observe that $B$ is a symmetric matrix of rank at most $2d$ with $\|B\|_\infty \leq \|A\|_\infty$ that $\eps$-fits the graph $\tilde{\delta} G$.
By Lemma~\ref{lemma:factor}, there exist two matrices $X,Y \in \R^{|V'| \times (2d)}$ satisfying $B = X \cdot Y^t$, such that every row of $X$ and $Y$ has norm at most $(2d)^{1/4} \cdot \|B\|_\infty^{1/2} \leq \theta$.
The proof is completed by applying Theorem~\ref{thm:chi(H)} with the matrices $X,Y$ and with $\theta$.
\end{proof}

\subsection{Circular Chromatic Number}\label{sec:circular}

The circular chromatic number of graphs, introduced by Vince~\cite{Vince88}, has several equivalent definitions, one of which is presented below. For a comprehensive introduction to the topic, the reader is referred to the survey~\cite{ZhuSurvey}.

\begin{definition}
The {\em circular chromatic number} of a graph $G=(V,E)$, denoted $\chi_c(G)$, is the infimum of all real numbers $r \geq 1$ that admit a mapping $f:V \rightarrow [0,1)$, such that for every pair of adjacent vertices $u$ and $v$ in $G$, it holds that $\frac{1}{r} \leq |f(u)-f(v)| \leq 1-\frac{1}{r}$.
\end{definition}

It is known that the infimum in the definition of $\chi_c(G)$ is always attained (at a rational number), hence the infimum can be replaced by a minimum.
It is also known that any graph $G$ satisfies $\chi(G) = \lceil \chi_c(G) \rceil$. The following observation relates the circular chromatic number to $2$-dimensional nearly orthonormal representations. Note that every graph $G$ with at least one edge satisfies $\chi_c(G) \geq 2$.

\begin{proposition}\label{prop:circular}
For every graph $G$ with at least one edge and for any real number $\eps \in [0,1)$, it holds that $\od_\eps(G) \leq 2$ if and only if $\eps \geq \cos(\frac{\pi}{\chi_c(G)})$.
\end{proposition}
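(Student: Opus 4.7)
The plan is to parametrize two-dimensional $\eps$-orthonormal representations by angles and reduce the equivalence to the standard ``arc-distance on a circle'' reformulation of the circular chromatic number. To this end, I would write each unit vector in $\R^2$ as $x_v = (\cos\theta_v, \sin\theta_v)$ for some $\theta_v \in [0, 2\pi)$, so that $\langle x_u, x_v \rangle = \cos(\theta_u - \theta_v)$ depends only on the angle difference. The crucial observation is that $|\langle x_u, x_v \rangle|$ depends on the $\theta_v$'s only modulo $\pi$: setting $\phi_v = \theta_v \bmod \pi$ and viewing the $\phi_v$ as points on a circle of circumference $\pi$, the condition $|\cos(\theta_u - \theta_v)| \leq \eps$ is equivalent to the arc-distance between $\phi_u$ and $\phi_v$ on this circle being at least $\arccos(\eps)$. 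This uses the monotonicity of $\cos$ on $[0, \pi/2]$ together with the symmetry $|\cos\alpha| = |\cos(\pi - \alpha)|$.

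Once this reformulation is in place, I would rescale the circle of circumference $\pi$ to the unit circle $\R/\Z$ via $f(v) = \phi_v / \pi$. The arc-distance condition then becomes $\arccos(\eps)/\pi \leq |f(u) - f(v)| \leq 1 - \arccos(\eps)/\pi$ for every edge $uv$ of $G$, which is precisely the defining condition of a valid circular coloring of $G$ with parameter $r = \pi/\arccos(\eps)$. Conversely, any such mapping $f$ lifts (after multiplying by $\pi$ and taking $(\cos, \sin)$) to a two-dimensional $\eps$-orthonormal representation of $G$. Therefore $\od_\eps(G) \leq 2$ is equivalent to the existence of a circular coloring of $G$ with parameter at most $\pi/\arccos(\eps)$, which by the definition of $\chi_c(G)$ and the fact (recalled in the excerpt) that the infimum in its definition is attained is equivalent to $\chi_c(G) \leq \pi/\arccos(\eps)$. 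Since $\cos$ is decreasing on $[0,\pi]$, rearranging yields $\arccos(\eps) \leq \pi/\chi_c(G)$, i.e., $\eps \geq \cos(\pi/\chi_c(G))$.

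The main bookkeeping hurdle will be handling the absolute value of the inner product correctly: adjacent vertices must be forbidden from being either ``too close'' or ``too close to antipodal'' on the $2\pi$-circle, which is exactly what forces us to quotient by $\pm 1$ and work on the projective circle of circumference $\pi$. A second subtle point is the two-sided nature of the constraint $1/r \leq |f(u)-f(v)| \leq 1-1/r$ in the definition of $\chi_c$, which matches the two-sided arc-distance condition only after the rescaling by $\pi$. The assumption that $G$ contains at least one edge guarantees $\chi_c(G) \geq 2$, so $\pi/\chi_c(G) \leq \pi/2$ and the required threshold $\cos(\pi/\chi_c(G))$ is nonnegative, consistent with $\eps \in [0,1)$; as a sanity check, the case $\eps = 0$ recovers the classical equivalence $\od(G) \leq 2 \iff G$ is bipartite $\iff \chi_c(G) \leq 2$.
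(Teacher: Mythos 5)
Your proposal is correct and follows essentially the same route as the paper's proof: parametrizing the $2$-dimensional unit vectors by angles modulo $\pi$ (the paper's flip to the upper half-circle is exactly your quotient by $\pm 1$), translating $|\langle x_u,x_v\rangle|\leq\eps$ into the two-sided condition $\arccos(\eps)/\pi \leq |f(u)-f(v)| \leq 1-\arccos(\eps)/\pi$, and invoking the definition of $\chi_c$ with the infimum being attained. No substantive differences to report.
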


\begin{proof}
A $2$-dimensional $\eps$-orthonormal representation of a graph $G=(V,E)$ assigns a unit vector $x_v \in \R^2$ to each vertex $v \in V$, such that every pair of adjacent vertices $u$ and $v$ in $G$ satisfies $|\langle x_u,x_v \rangle | \leq \eps$. We may assume that each vector $x_v$ lies in the upper half of the unit circle, by multiplying some of the vectors by $-1$ if needed. Therefore, each vector $x_v$ can be expressed by a real number $\alpha_v \in [0,1)$, such that the angle between the vectors $(1,0)$ and $x_v$ is $\alpha_v \cdot \pi$. In this language, the condition $|\langle x_u,x_v \rangle | \leq \eps$ translates to $|\cos( \pi \cdot (\alpha_u-\alpha_v)) | \leq \eps$, or equivalently, \[\frac{\arccos(\eps)}{\pi} \leq |\alpha_u-\alpha_v| \leq 1- \frac{\arccos(\eps)}{\pi}.\]
By the definition of circular chromatic number, such a mapping $v \mapsto \alpha_v$ exists if and only if it holds that $\frac{\arccos(\eps)}{\pi} \leq \frac{1}{\chi_c(G)}$, that is, $\eps \geq \cos(\frac{\pi}{\chi_c(G)})$. The proof is complete.
\end{proof}

As a concluding remark, we raise the question of determining the $\eps$-orthogonality dimension of Kneser graphs.
The Kneser graph $K(n,k)$, defined for integers $n$ and $k$ with $n \geq 2k$, has vertices corresponding to all $k$-subsets of $[n]$ and edges between disjoint sets.
Settling a conjecture of Kneser~\cite{Kneser55}, Lov{\'{a}}sz~\cite{LovaszKneser} proved that $\chi(K(n,k))=n-2k+2$ as an application of the Borsuk--Ulam theorem from algebraic topology. This result has been strengthened in various ways over time. For example, it was shown by  Chen~\cite{Chen11} that the chromatic number of $K(n,k)$ coincides with its circular chromatic number, resolving a conjecture of Johnson, Holroyd, and Stahl~\cite{JHS97} (see also~\cite{ChangLZ13,LiuZhu15}). By Proposition~\ref{prop:circular}, this result characterizes the values of $\eps \in [0,1)$ for which $\od_\eps(K(n,k)) \leq 2$ holds. A more recent result~\cite{Haviv19topo,AlishahiM21} asserts that the chromatic number of $K(n,k)$ coincides with its standard orthogonality dimension (where $\eps =0$). It would thus be intriguing to determine the quantities $\od_\eps(K(n,k))$ for general parameter choices.

\section{Hardness Results}\label{sec:hard}

In this section, we prove our hardness results for low-rank matrix completion and for related problems.
The starting point of our hardness proofs is the gap coloring problem, defined as follows.
\begin{definition}[The $(k_1,k_2)$-Coloring Problem]
For positive integers $k_1 < k_2$, the {\em $(k_1,k_2)$-Coloring} problem asks to decide whether an input graph $G$ satisfies $\chi(G) \leq k_1$ or $\chi(G) \geq k_2$.
\end{definition}
We rely on the following hardness result, proved by Krokhin et al.~\cite{KrokhinOWZ23}.
Recall that the function $b: \N \rightarrow \N$ is defined by $b(n) = \binom{n}{\lfloor n/2 \rfloor}$.

\begin{theorem}[\cite{KrokhinOWZ23}]\label{thm:ColHard}
For every integer $k \geq 4$, the $(k,b(k))$-Coloring problem is $\NP$-hard.
\end{theorem}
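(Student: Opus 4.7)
The plan is to iterate the line digraph transformation on top of Huang's classical coloring-gap hardness. Recall that Huang showed that, for all sufficiently large integers $q$, the $(q, 2^{\Omega(q^{1/3})})$-Coloring problem is $\NP$-hard. The engine of the reduction is the polynomial-time map $G \mapsto \tilde{\delta} G$, which by Theorem~\ref{thm:chi_delta} satisfies the identity $\chi(\tilde{\delta} G) = f(\chi(G))$ for the function $f:\N \to \N$ defined by $f(c) = \min\{n \in \N \mid b(n) \geq c\}$. A single application of this map therefore transforms an instance of $(k_1, k_2)$-Coloring into an instance of $(f(k_1), f(k_2))$-Coloring, and $m$ iterations yield a reduction from $(k_1, k_2)$-Coloring to $(f^m(k_1), f^m(k_2))$-Coloring.

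Given a target integer $k \geq 4$, I would choose $m$ and $q$ as follows. Since $b$ is strictly increasing on the positive integers, the identity $f(b(n)) = n$ holds, and therefore setting $q := b^m(k)$ yields $f^m(q) = k$, which pins the YES side of the final instance at exactly $k$. Invoking Huang at this value of $q$ (which becomes astronomical for even modest $m$) gives initial NO-gap $K := 2^{\Omega(q^{1/3})}$, and the remaining task is to verify that $f^m(K) \geq b(k)$. Using the asymptotic $f(c) = \log_2 c + O(\log \log c)$, one has $f(K) \approx q^{1/3}$, after which each further iterate of $f$ behaves like an iterated logarithm. Since $q = b^m(k)$ is essentially an $m$-fold tower of exponentials starting from $k$, a telescoping estimate gives $f^{m-1}(q^{1/3}) \geq b(k)$ once $m$ is sufficiently large relative to $k$.

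The main obstacle will be the quantitative bookkeeping for this last inequality. Concretely, one must track the gap between the two iterated sequences $f^i(q)$ and $f^i(K)$ and verify that the initial loss --- caused by the cube root on $q$ and the polylogarithmic slack in the approximation of $f$ --- does not accumulate and drop the NO side below $b(k)$ after $m$ applications. The key point is that both sequences telescope along the tower defining $q$, with the YES side landing exactly at $k$ while the NO side remains at least $b(k)$ when $m$ is chosen large enough, the precise dependence being essentially logarithmic in the hidden constants. For the handful of smallest admissible values of $k$, where the iteration depth is tight and the start $q = b^m(k)$ is not yet past Huang's threshold, one can verify the chain numerically for the first few iterates, or supplement the argument with the complementary $(k, 2k)$-Coloring hardness of Barto, Bul{\'{\i}}n, Krokhin, and Opr{\v s}al~\cite{BartoBKO21} in the cases where $2k \geq b(k)$, thereby covering a base case from which the line-digraph iteration can proceed.
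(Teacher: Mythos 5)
The paper does not prove this statement at all---it is imported verbatim from \cite{KrokhinOWZ23}---and your plan of iterating the line-digraph transformation (via Theorem~\ref{thm:chi_delta}) on Huang's $(q,2^{\Omega(q^{1/3})})$-Coloring hardness, with $q=b^m(k)$ so that the YES side telescopes exactly to $k$, is precisely the strategy of the cited proof as this paper's introduction describes it; the bookkeeping you defer does close, since the cube-root loss becomes a constant-factor loss after one application of $f$, an additive $O(1)$ loss after the next, and disappears after one more (as $f(b(N)-O(1))=N$), after which the invariant ``NO side $\geq b(\text{YES side})$'' is preserved, so a bounded number of iterations suffices for every $k\ge 4$. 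Your concern about small $k$ is unnecessary, because $q=b^m(k)$ already dwarfs Huang's threshold for, say, $m=4$ even at $k=4$; the fallback via \cite{BartoBKO21} is nonetheless valid for the only values it could cover, namely $k\in\{4,5\}$, where $b(k)\le 2k$.
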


\begin{remark}\label{remark:b(n)}
It is well known that asymptotically, $b(n) \sim \frac{2^n}{\sqrt{\pi n/2}}$.
Therefore, for every sufficiently large integer $n$, we have, say, $b(n) \geq 2^{n/2}$, hence $b(b(n)) \geq 2^{b(n)/2} \geq 2^{2^{n/2-1}}$.
Further, for every sufficiently large integer $n$, it holds that, say, $b(n) \geq \frac{2^{n-1}}{\sqrt{n}}$, and consequently, $b(b(n)) \geq 2^{b(n)/2} \geq 2^{2^{n-2}/\sqrt{n}}$.
\end{remark}

In what follows, we reduce the gap coloring problem to an intermediate problem, termed Graph Fitness, and establish its hardness via Theorem~\ref{thm:ColHard}.
While the definition of the Graph Fitness problem may appear somewhat artificial, its hardness enables us to derive all our hardness results in a unified framework.
A map of the reductions presented in this section is given in Figure~\ref{fig:reductions}.

\begin{figure}[ht]
    \centering
    \begin{tikzpicture}[every node/.style={align=center}, node distance=2.0cm]

        \node (coloring) {Coloring};
        \node (graphfit) [right=of coloring, xshift=1.2cm] {Graph-Fitness};  
        \node (orthodim) [above right=of graphfit, node distance=0.1cm] {Ortho-Dim};  
        \node (psd) [right=of graphfit, xshift=1.2cm] {PSD-Completion};  
        \node (completion) [below right=of graphfit] {Completion};

        \draw[->, thick] (coloring) -- (graphfit) node[midway, above] {Lemma~\ref{lemma:Col->GF}};
        \draw[->, thick] (coloring) -- (graphfit) node[midway, below] {Lemma~\ref{lemma:generalReduction}}; 
        \draw[->, thick] (graphfit) -- (orthodim) node[midway, left, xshift=+0.2cm, yshift=0.3cm] {Lemma~\ref{lemma:Fit->OD}}; 
        \draw[->, thick] (graphfit) -- (psd) node[midway, above] {Lemma~\ref{lemma:Fit->Completion}}; 
        \draw[->, thick] (graphfit) -- (completion) node[midway, below, xshift=-1.0cm, yshift=0.0cm] {Lemma~\ref{lemma:Fit->Completion}}; 
    \end{tikzpicture}
    \caption{Reductions map.}
    \label{fig:reductions}
\end{figure}
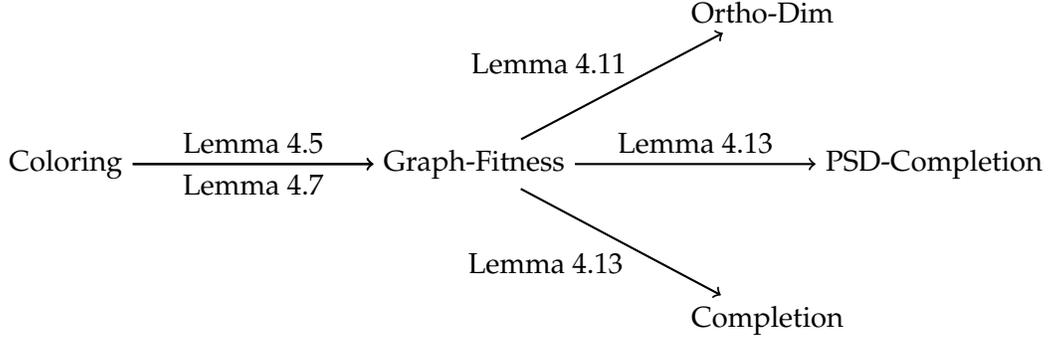

\subsection{Graph Fitness}

The Graph Fitness problem is defined as follows.

\begin{definition}[The $(d_1,d_2,\eps,\theta)$-Graph-Fitness Problem]
For positive integers $d_1 < d_2$ and real numbers $\eps \in [0,1)$ and $\theta \geq 1$, the {\em $(d_1,d_2,\eps,\theta)$-Graph-Fitness} problem asks, given a graph $G$ on $n$ vertices, to distinguish between the following cases.
\begin{itemize}
  \item $\YES$: There exists a positive semi-definite matrix $B \in \R^{n \times n}$ that fits the graph $G$, such that $\mu(B)=1$ and $\rank(B) \leq d_1$.
  \item $\NO$: For any two matrices $X,Y \in \R^{n \times d}$ whose rows have norm at most $\theta$ and for which $X \cdot Y^t$ is a symmetric matrix that $\eps$-fits the graph $G$, it holds that $d \geq d_2$.
\end{itemize}
\end{definition}
\noindent
Note that the condition on $\YES$ instances in the above definition implies the existence of a matrix $X \in \R^{n \times d_1}$ whose rows have norm $1$, such that $X \cdot X^t$ is a symmetric matrix that fits the graph $G$. Therefore, the $\YES$ and $\NO$ instances of the problem do not overlap.

We present two efficient reductions from the $(k_1,k_2)$-Coloring problem to the $(d_1,d_2,\eps,\theta)$-Graph-Fitness problem for suitable choices of parameters.
The first builds on Theorem~\ref{thm:chi(G)}.

\begin{lemma}\label{lemma:Col->GF}
Let $d_1 < d_2$ be positive integers, and let $\eps \in [0,1)$ and $\theta \geq 1$ be real numbers.
Then there exists a polynomial-time reduction from $(d_1,(\frac{4\theta^2}{1-\eps}+1)^{d_2})$-Coloring to $(d_1,d_2,\eps,\theta)$-Graph-Fitness.
\end{lemma}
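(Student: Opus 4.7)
The plan is to reduce an instance $G$ of $(d_1, k_2)$-Coloring, with $k_2 = (\frac{4\theta^2}{1-\eps}+1)^{d_2}$, to the graph $G'$ defined as the disjoint union of $d_1$ copies of $G$. This transformation is clearly polynomial time, and it remains only to verify the YES and NO cases.

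For the YES case I would assume $\chi(G) \leq d_1$, so that $G$ is $d_1$-colorable, and invoke Lemma~\ref{lemma:copies} with $k = d_1$: it produces a positive semi-definite $\{0,1\}$-matrix $B$ that fits $G'$ with $\rank(B) = d_1$ and $\mu(B) = 1$, which makes $G'$ a YES instance of $(d_1, d_2, \eps, \theta)$-Graph-Fitness.

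For the NO case I would argue the contrapositive: if $G'$ is not a NO instance of Graph-Fitness, then $\chi(G) < k_2$. So suppose there exist matrices $X$ and $Y$ with $d$ columns, where $d < d_2$, whose rows all have norm at most $\theta$, and such that $X \cdot Y^t$ is a symmetric matrix $\eps$-fitting $G'$. The single step requiring care is to restrict $X$ and $Y$ to the rows indexing the vertices of a single copy of $G$ inside $G'$; call these restrictions $X_0$ and $Y_0$. Their product $X_0 \cdot Y_0^t$ is the corresponding principal submatrix of $X \cdot Y^t$, hence $\eps$-fits $G$, and the rows of $X_0$ and $Y_0$ still have norm at most $\theta$ (the symmetry of the ambient matrix plays no role here, since Theorem~\ref{thm:chi(G)} imposes no symmetry assumption). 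Applying Theorem~\ref{thm:chi(G)} to $G$ with the factorization $(X_0, Y_0)$ then gives
\[ \chi(G) \;\leq\; \bigg(\frac{4\theta^2}{1-\eps}+1\bigg)^d \;\leq\; \bigg(\frac{4\theta^2}{1-\eps}+1\bigg)^{d_2-1} \;<\; \bigg(\frac{4\theta^2}{1-\eps}+1\bigg)^{d_2} = k_2, \]
where the strict inequality uses $\frac{4\theta^2}{1-\eps}+1 > 1$ (which holds since $\theta \geq 1$). This contradicts $\chi(G) \geq k_2$ and closes the reduction.

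There is no substantive obstacle to speak of: the entire argument is a direct assembly of Lemma~\ref{lemma:copies} for the YES case and Theorem~\ref{thm:chi(G)} for the contrapositive of the NO case. The disjoint-union construction serves the dual purpose of enabling the coherence-$1$ rank-$d_1$ witness in the YES case while leaving the chromatic number of each copy untouched, so that the hypothesis of Theorem~\ref{thm:chi(G)} may be applied on a single copy to derive the desired upper bound on $\chi(G)$.
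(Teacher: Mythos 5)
Your proof is correct and follows essentially the same route as the paper: the same disjoint-union-of-$d_1$-copies reduction, Lemma~\ref{lemma:copies} for the $\YES$ case, and Theorem~\ref{thm:chi(G)} for the $\NO$ case. The only cosmetic difference is that you restrict $X,Y$ to a single copy of $G$ before invoking Theorem~\ref{thm:chi(G)}, whereas the paper applies it to the whole union $H$ and uses $\chi(H)=\chi(G)$; both are valid.
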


\begin{proof}
Fix integers $d_1,d_2$ and real numbers $\eps,\theta$ as in the statement of the lemma.
Consider the reduction that, given an input graph $G$, produces and returns the graph $H=(V_H,E_H)$, defined as a disjoint union of $d_1$ copies of $G$.
This reduction can clearly be implemented in polynomial time (in fact, in logarithmic space).

We turn to the correctness proof of the reduction. Suppose first that $G$ is a $\YES$ instance of $(d_1,(\frac{4\theta^2}{1-\eps}+1)^{d_2})$-Coloring, that is, $\chi(G) \leq d_1$.
Since $H$ is a disjoint union of $d_1$ copies of $G$, we can apply Lemma~\ref{lemma:copies} to obtain that there exists a positive semi-definite matrix $B \in \R^{|V_H| \times |V_H|}$ that fits the graph $H$, such that $\mu(B)=1$ and $\rank(B) = d_1$. Thus, $H$ is a $\YES$ instance of $(d_1,d_2,\eps,\theta)$-Graph-Fitness.
For the converse direction, suppose that $G$ is a $\NO$ instance of $(d_1,(\frac{4\theta^2}{1-\eps}+1)^{d_2})$-Coloring, that is, $\chi(G) \geq (\frac{4\theta^2}{1-\eps}+1)^{d_2}$. For a positive integer $d$, let $X,Y \in \R^{|V_H| \times d}$ be two matrices whose rows have norm at most $\theta$, such that $X \cdot Y^t$ is a symmetric matrix that $\eps$-fits the graph $H$. By Theorem~\ref{thm:chi(G)}, we obtain that $\chi(H) \leq (\frac{4\theta^2}{1-\eps}+1)^d$. Since $\chi(G)=\chi(H)$, it follows that $d \geq d_2$, hence $H$ is a $\NO$ instance of $(d_1,d_2,\eps,\theta)$-Graph-Fitness, and we are done.
\end{proof}

\begin{remark}\label{remark:UGC_GF}
It was proved in~\cite{DinurMR06,DinurS10} that certain variants of the Unique Games Conjecture imply the hardness of the $(k_1,k_2)$-Coloring problem for all integers $k_2 > k_1 \geq 3$. By Lemma~\ref{lemma:Col->GF}, it follows that the same complexity assumptions imply the hardness of the $(d_1,d_2,\eps,\theta)$-Graph-Fitness problem for all integers $d_2>d_1 \geq 3$ and real numbers $\eps \in [0,1)$ and $\theta \geq 1$.
\end{remark}

The next reduction between the problems relies on Theorem~\ref{thm:chi(H)} and is crucial for deriving our $\NP$-hardness results from Theorem~\ref{thm:ColHard}. Note that the reduction from Lemma~\ref{lemma:Col->GF} is insufficient for this purpose. The statement involves the quantities $m(d,\eps)$ from Definition~\ref{def:m(r,eps)} and the function $b: \N \rightarrow \N$ from Theorem~\ref{thm:ColHard}.

\begin{lemma}\label{lemma:generalReduction}
Let $k_1 < k_2$ and $d_1 < d_2$ be positive integers, and let $\eps \in [0,\frac{1}{2})$, $\theta \geq 1$, and $\eta$  be real numbers, such that
\[\eta \in \bigg (0, \frac{1-2\eps}{4\theta} \bigg ],~~k_1 \leq b(d_1),~~\mbox{and}~~k_2 \geq \bigg ( \frac{2 \theta}{\eta}+1 \bigg ) ^{d_2 \cdot m(d_2,2\eta\theta+\eps)}.\]
Then there exists a polynomial-time reduction from $(k_1,k_2)$-Coloring to $(d_1,d_2,\eps,\theta)$-Graph-Fitness.
\end{lemma}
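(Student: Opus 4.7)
The plan is to reduce $(k_1,k_2)$-Coloring to $(d_1,d_2,\eps,\theta)$-Graph-Fitness via the polynomial-time map that sends an input graph $G$ to the graph $H$ defined as the disjoint union of $d_1$ copies of $\tilde{\delta}G$, the underlying graph of the line digraph of $G$. The correctness analysis splits naturally into two directions, with the YES case combining Theorem~\ref{thm:chi_delta} with Lemma~\ref{lemma:copies}, and the NO case invoking Theorem~\ref{thm:chi(H)} after restricting to a single copy of $\tilde{\delta}G$ inside $H$.

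For the YES direction, I would assume $\chi(G) \leq k_1 \leq b(d_1)$, apply Theorem~\ref{thm:chi_delta} to conclude $\chi(\tilde{\delta}G) \leq d_1$, and then plug this into Lemma~\ref{lemma:copies} with $k = d_1$ to obtain a positive semi-definite matrix $B$ that fits $H$, has rank $d_1$, and has coherence $\mu(B) = 1$. This certifies that $H$ is a YES instance of the Graph-Fitness problem; no difficulties arise here beyond composing the cited tools.

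For the NO direction, I would argue by contrapositive: assume that $H$ admits matrices $X,Y \in \R^{|V_H| \times d}$ for some $d < d_2$ whose rows have norm at most $\theta$ and whose product is a symmetric matrix $\eps$-fitting $H$, and deduce $\chi(G) < k_2$. The key observation is that restricting the rows of $X$ and $Y$ to those indexed by any single copy of $\tilde{\delta}G$ in $H$ yields matrices in $\R^{|V(\tilde{\delta}G)| \times d}$ whose rows still have norm at most $\theta$ and whose product is a principal submatrix of $X \cdot Y^t$, hence a symmetric matrix $\eps$-fitting $\tilde{\delta}G$. Applying Theorem~\ref{thm:chi(H)} with the given $\eta$ (which by hypothesis lies in the required interval $(0,\frac{1-2\eps}{4\theta}]$) then bounds $\chi(G)$ from above by $\bigl(\frac{2\theta}{\eta}+1\bigr)^{d \cdot m(d,2\eta\theta+\eps)}$.

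The only real obstacle is converting this into the \emph{strict} inequality $\chi(G) < k_2$, since the hypothesis on $k_2$ is only non-strict. I would gain one unit of slack in the exponent by combining the monotonicity of $m(\cdot,2\eta\theta+\eps)$ in its first argument with $d \leq d_2-1$ and the trivial bound $m(d_2,2\eta\theta+\eps) \geq 1$, yielding
\[d \cdot m(d,2\eta\theta+\eps) \;\leq\; (d_2-1) \cdot m(d_2,2\eta\theta+\eps) \;\leq\; d_2 \cdot m(d_2,2\eta\theta+\eps) - 1.\]
Since $\frac{2\theta}{\eta}+1 > 1$ and the hypothesis gives $k_2 \geq \bigl(\frac{2\theta}{\eta}+1\bigr)^{d_2 \cdot m(d_2,2\eta\theta+\eps)}$, this delivers $\chi(G) \leq k_2/(\frac{2\theta}{\eta}+1) < k_2$, contradicting the assumed $\chi(G) \geq k_2$ and completing the reduction.
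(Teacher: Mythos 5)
Your proposal is correct and follows essentially the same route as the paper's proof: the same reduction (disjoint union of $d_1$ copies of $\tilde{\delta}G$), the same YES-case argument via Theorem~\ref{thm:chi_delta} and Lemma~\ref{lemma:copies}, and the same contrapositive NO-case argument restricting to a single copy and applying Theorem~\ref{thm:chi(H)}. Your explicit slack computation for the strict inequality $\chi(G) < k_2$ merely spells out what the paper obtains from $d \cdot m(d,2\eta\theta+\eps) < d_2 \cdot m(d_2,2\eta\theta+\eps)$ together with the base exceeding $1$.
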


\begin{proof}
Fix integers $k_1,k_2,d_1,d_2$ and real numbers $\eps,\theta,\eta$ as in the statement of the lemma.
Consider the reduction that, given an input graph $G$, produces the underlying graph $\tilde{\delta}G$ of the digraph $\delta G$ (see Definition~\ref{def:line}), and returns the graph $H=(V_H,E_H)$, defined as a disjoint union of $d_1$ copies of $\tilde{\delta}G$.
This reduction can clearly be implemented in polynomial time (in fact, in logarithmic space).

We now prove the correctness of the reduction.
Suppose first that $G$ is a $\YES$ instance of $(k_1,k_2)$-Coloring, that is, $\chi(G) \leq k_1$.
By the assumption $k_1 \leq b(d_1)$, this implies that $\chi(G) \leq b(d_1)$, hence by Theorem~\ref{thm:chi_delta}, it follows that $\chi(\tilde{\delta}G) \leq d_1$. Since $H$ is a disjoint union of $d_1$ copies of $\tilde{\delta}G$, we can apply Lemma~\ref{lemma:copies} to obtain that there exists a positive semi-definite matrix $B \in \R^{|V_H| \times |V_H|}$ that fits the graph $H$, such that $\mu(B)=1$ and $\rank(B) = d_1$. Therefore, $H$ is a $\YES$ instance of $(d_1,d_2,\eps,\theta)$-Graph-Fitness, as desired.

For the other direction, we prove the contrapositive, namely, that if $H$ is not a $\NO$ instance of $(d_1,d_2,\eps,\theta)$-Graph-Fitness, then $G$ is not a $\NO$ instance of $(k_1,k_2)$-Coloring. To this end, suppose that for some integer $d < d_2$, there exist two matrices $X',Y' \in \R^{|V_{H}| \times d}$ whose rows have norm at most $\theta$, such that $B' = X' \cdot (Y')^t$ is a symmetric matrix that $\eps$-fits the graph $H$.
Let $X$ and $Y$ denote the sub-matrices of $X'$ and $Y'$, respectively, consisting of the rows associated with the vertices of a single copy of $\tilde{\delta}G$ in $H$.
Clearly, every row of $X$ and $Y$ has norm at most $\theta$, and $X \cdot Y^t$ is a symmetric matrix that $\eps$-fits the graph $\tilde{\delta}G$.
The assumption $\eta \in (0, \frac{1-2\eps}{4\theta}]$ allows us to apply Theorem~\ref{thm:chi(H)}, from which it follows that
\[\chi(G) \leq \bigg (\frac{2\theta}{\eta}+1 \bigg )^{d \cdot m(d,2\eta \theta+\eps)} < \bigg (\frac{2\theta}{\eta}+1 \bigg )^{d_2 \cdot m(d_2,2\eta \theta+\eps)} \leq k_2.\]
This implies that $G$ is not a $\NO$ instance of $(k_1,k_2)$-Coloring, and we are done.
\end{proof}

We next state an $\NP$-hardness result for the Graph Fitness problem.
The (somewhat tedious) proof integrates the hardness of the gap coloring problem from Theorem~\ref{thm:ColHard}, the reduction to Graph Fitness provided by Lemma~\ref{lemma:generalReduction}, and the bounds on the quantities $m(d,\eps)$ from Corollary~\ref{cor:m(d,eps)}.

\begin{theorem}\label{thm:GF_hard}
There exists an absolute constant $c>0$ for which the following holds.
Let $d$ and $g$ be positive integers with $d$ sufficiently large and $d < g$, and let $\eps$ and $\theta \geq 1$ be real numbers.
Suppose that either
\begin{enumerate}
  \item\label{itm1:Fit-Hard} $\eps \in [0, \frac{1}{3\sqrt{g}}]$ and $g \leq c \cdot \frac{2^{d/2}}{d^{1/4} \cdot  \max(\log \theta,d)^{1/2}}$, or
  \item\label{itm2:Fit-Hard} $\eps \in [ \frac{1}{3\sqrt{g}},\frac{1}{6}]$, $\theta \leq 2^{2^{c \cdot d}}$, and $g \leq c \cdot \frac{d}{\eps^2 \cdot \log (1/\eps)}$.
\end{enumerate}
Then the $(d,g,\eps,\theta)$-Graph-Fitness problem is $\NP$-hard.
\end{theorem}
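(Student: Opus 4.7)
The plan is to invoke the reduction provided by Lemma~\ref{lemma:generalReduction} from the gap coloring problem, whose $\NP$-hardness is given by Theorem~\ref{thm:ColHard}. Specifically, for $d$ large enough that $b(d) \geq 4$, I take $k_1 = b(d)$ and $k_2 = b(k_1) = b(b(d))$, so that $(k_1, k_2)$-Coloring is $\NP$-hard and the first hypothesis $k_1 \leq b(d_1) = b(d)$ of Lemma~\ref{lemma:generalReduction} is satisfied with equality. The double iteration of $b$ is what makes this work: by Remark~\ref{remark:b(n)}, one has $\log k_2 \geq 2^{d-2}/\sqrt{d}$, a doubly-exponential reserve that will absorb the rather lavish expression $(2\theta/\eta+1)^{g \cdot m(g, 2\eta\theta+\eps)}$ appearing in the second hypothesis of the lemma. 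What remains is to choose $\eta$ appropriately in each case and to estimate $m(g, \cdot)$ using Corollary~\ref{cor:m(d,eps)}.

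In Case~\ref{itm1:Fit-Hard}, with $\eps \in [0, 1/(3\sqrt{g})]$, I will set $\eta = (1/\sqrt{2g}-\eps)/(2\theta)$, so that $2\eta\theta+\eps = 1/\sqrt{2g}$. The hypothesis on $\eps$ together with $g \geq 2$ guarantees $\eta \in (0, (1-2\eps)/(4\theta)]$, and Item~\ref{CorAlonItm:1} of Corollary~\ref{cor:m(d,eps)} yields $m(g, 1/\sqrt{2g}) < 2g$. Since $1/\sqrt{2g}-\eps = \Theta(1/\sqrt{g})$ in this range, $\eta = \Theta(1/(\theta\sqrt{g}))$ and $\log(2\theta/\eta+1) = O(\log\theta + \log g)$. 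The required inequality reduces to $2^{d-2}/\sqrt{d} \geq O(g^2(\log\theta + \log g))$, which, using $\log g \leq d/2$ to rewrite $\log\theta+\log g = O(\max(\log\theta, d))$, is exactly the upper bound on $g$ stated in Item~\ref{itm1:Fit-Hard}.

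In Case~\ref{itm2:Fit-Hard}, with $\eps \in [1/(3\sqrt{g}), 1/6]$, I will set $\eta = \eps/(2\theta)$, making $2\eta\theta+\eps = 2\eps$; then $\eta \leq (1-2\eps)/(4\theta)$ follows from $\eps \leq 1/4$. To bound $m(g, 2\eps)$, I use Item~\ref{CorAlonItm:2} of Corollary~\ref{cor:m(d,eps)}, applying monotonicity of $m(g, \cdot)$ to replace $2\eps$ by $\max(2\eps, 1/\sqrt{g})$ in the borderline sub-case $2\eps < 1/\sqrt{g}$; this yields $m(g, 2\eps) \leq 2^{O(g\eps^2\log(1/\eps))+O(\log g)}$. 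With $\log(2\theta/\eta+1) = O(\log\theta+\log(1/\eps))$, the required inequality becomes, after taking a second logarithm, $d - O(\log d) \geq O(g\eps^2\log(1/\eps)) + O(\log g) + O(\log(\log\theta+\log(1/\eps)))$. The hypothesis $g \leq c\cdot d/(\eps^2\log(1/\eps))$ controls the main $g\eps^2\log(1/\eps)$ term; $\theta \leq 2^{2^{cd}}$ controls $\log\log\theta$; and $\eps \geq 1/(3\sqrt{g})$ together with $g \leq 2^{O(d)}$ gives $\log(1/\eps) \leq O(d)$. Choosing the absolute constant $c$ small enough closes all these inequalities simultaneously.

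The main source of friction I anticipate is the bookkeeping in Case~\ref{itm2:Fit-Hard}, in particular the appeal to monotonicity when $2\eps$ straddles the $1/\sqrt{g}$ threshold separating the two items of Corollary~\ref{cor:m(d,eps)}. This is, I believe, also what motivates the slightly offset split at $\eps = 1/(3\sqrt{g})$ rather than at $\eps = 1/\sqrt{g}$: it provides the cushion needed for the Case~\ref{itm1:Fit-Hard} choice $2\eta\theta+\eps = 1/\sqrt{2g}$ to satisfy $\eta > 0$ with room to spare.
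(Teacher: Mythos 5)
Your proposal is correct and follows essentially the same route as the paper's proof: set $k_1 = b(d)$, $k_2 = b(b(d))$, invoke Theorem~\ref{thm:ColHard} and Lemma~\ref{lemma:generalReduction}, pick $\eta = \Theta(1/(\theta\sqrt{g}))$ in Case~\ref{itm1:Fit-Hard} and $\eta = \Theta(\eps/\theta)$ in Case~\ref{itm2:Fit-Hard}, bound $m(g,\cdot)$ via Corollary~\ref{cor:m(d,eps)}, and verify the resulting inequality against $\log\log k_2 \geq d - O(\log d)$. The only cosmetic deviation is your Case~\ref{itm2:Fit-Hard} choice $\eta = \eps/(2\theta)$ (giving $2\eta\theta+\eps = 2\eps$), which forces the monotonicity workaround for the sub-case $2\eps < 1/\sqrt{g}$; the paper takes $\eta = \eps/\theta$ so that $2\eta\theta+\eps = 3\eps \geq 1/\sqrt{g}$ always holds, allowing direct application of Item~\ref{CorAlonItm:2} and sidestepping that branch.
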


\begin{proof}
Let $d \geq 4$ be an integer. For some absolute constant $c$ to be determined later, fix $g$, $\eps$, and $\theta$ as in the statement of the theorem.
We will assume, whenever needed, that $d$ (and thus $g$) is sufficiently large.
Set $k_1 = b(d)$ and $k_2 = b(b(d))$.
By Theorem~\ref{thm:ColHard}, the $(k_1,k_2)$-Coloring problem is $\NP$-hard.
It thus suffices to show that there exists a polynomial-time reduction from the $(k_1,k_2)$-Coloring problem to the $(d,g,\eps,\theta)$-Graph-Fitness problem.
We deduce the desired reduction from Lemma~\ref{lemma:generalReduction}.
To do so, we address each item of the theorem separately, as described below.

For Item~\ref{itm1:Fit-Hard}, suppose that $\eps \in [0, \frac{1}{3\sqrt{g}}]$ and $g \leq c \cdot \frac{2^{d/2}}{d^{1/4} \cdot  \max(\log \theta,d)^{1/2}}$, set $\eta = \frac{1}{6 \theta \cdot \sqrt{g}}$, and let us verify the three conditions of Lemma~\ref{lemma:generalReduction}. It is evident that $\eta \in (0, \frac{1-2\eps}{4\theta}]$ and $k_1 \leq b(d)$. It remains to verify the third condition, namely,
\begin{eqnarray}\label{eq:k_2>=}
k_2 \geq \bigg ( \frac{2 \theta}{\eta}+1 \bigg ) ^{g \cdot m(g,2\eta\theta+\eps)}.
\end{eqnarray}
Notice that $2\eta \theta +\eps = \frac{1}{3\sqrt{g}} + \eps \leq \frac{2}{3\sqrt{g}} \leq \frac{1}{\sqrt{2g}}$. This allows us to apply Item~\ref{CorAlonItm:1} of Corollary~\ref{cor:m(d,eps)}, which yields that $m(g,2\eta \theta+\eps) < 2g$. Therefore, there exists an absolute constant $c'$ such that
\[\bigg (\frac{2\theta}{\eta}+1 \bigg )^{g \cdot m(g,2 \eta \theta +\eps)} \leq (12 \theta^2 \sqrt{g}+1)^{2g^2} \leq 2^{c' \cdot (\log \theta+\log g) \cdot g^2} \leq 2^{2^{d-2}/\sqrt{d}},\]
where the last inequality follows from the assumption $g \leq c \cdot \frac{2^{d/2}}{d^{1/4} \cdot  \max(\log \theta,d)^{1/2}}$ for a sufficiently small constant $c$, which in particular gives that $\log g \leq d$.
By Remark~\ref{remark:b(n)}, this implies that for every sufficiently large $d$, the expression above does not exceed $k_2 = b(b(d))$, confirming the inequality given in~\eqref{eq:k_2>=}.

For Item~\ref{itm2:Fit-Hard}, assume that $\eps \in [ \frac{1}{3\sqrt{g}},\frac{1}{6}]$, $\theta \leq 2^{2^{c \cdot d}}$, and $g \leq c \cdot \frac{d}{\eps^2 \cdot \log (1/\eps)}$, and set $\eta = \frac{\eps}{\theta}$.
As before, it is easy to see that $\eta \in (0,\frac{1-2\eps}{4\theta}]$ and $k_1 \leq b(d)$, so it remains to verify the inequality given in~\eqref{eq:k_2>=}.
Notice that $2\eta \theta +\eps = 3\eps \in [\frac{1}{\sqrt{g}},\frac{1}{2}]$.
This allows us to apply Item~\ref{CorAlonItm:2} of Corollary~\ref{cor:m(d,eps)}, which yields that for some absolute constant $c'$, it holds that
\[ m(g,3\eps) \leq 2^{c' \cdot g \cdot \eps^2 \cdot \log (1/\eps)} \leq 2^{c' \cdot c \cdot d}.\]
This implies that for some absolute constant $c''$, it holds that
\begin{eqnarray}\label{eq:size}
\bigg ( \frac{2\theta}{\eta}+1 \bigg ) ^{g \cdot m(g,2\eta \theta+\eps)} = \bigg ( \frac{2\theta^2}{\eps}+1 \bigg ) ^{g \cdot m(g,3\eps)} \leq (6\theta^2 \sqrt{g}+1)^{g \cdot m(g,3\eps)} \leq 2^{c'' \cdot (\log \theta+\log g) \cdot g \cdot 2^{c' \cdot c \cdot d}}.
\end{eqnarray}
To bound the above expression, we observe that for some absolute constant $c'''$, it holds that
\begin{eqnarray}\label{eq:g,log(1/eps)}
g \leq 2^{c''' \cdot g \cdot \eps^2 \cdot \log(1/\eps)} \leq 2^{c''' \cdot c \cdot d}
\end{eqnarray}
for any sufficiently large $g$.
Indeed, the first inequality can be verified for a sufficiently large $c'''$ by considering the two cases $\eps \in [\frac{1}{3\sqrt{g}},\frac{1}{g^{0.4}}]$ and $\eps \in [ \frac{1}{g^{0.4}},\frac{1}{6}]$, and the second follows from the assumption $g \leq c \cdot \frac{d}{\eps^2 \cdot \log (1/\eps)}$.
By combining~\eqref{eq:g,log(1/eps)} with our assumption $\theta \leq 2^{2^{c \cdot d}}$, we obtain that the exponent of the right-hand side of~\eqref{eq:size} is bounded from above by
\[c'' \cdot (2^{c \cdot d}+c''' \cdot c \cdot d) \cdot 2^{c''' \cdot c \cdot d} \cdot 2^{c' \cdot c \cdot d} \leq 2^{d/2-1},\]
where the inequality holds for every sufficiently large $d$, by choosing $c$ as a sufficiently small constant.
This implies that the right-hand side of~\eqref{eq:size} does not exceed $2^{2^{d/2-1}}$, which, by Remark~\ref{remark:b(n)}, is bounded above by $k_2 = b(b(d))$ for any sufficiently large $d$.
This yields the inequality~\eqref{eq:k_2>=} and completes the proof.
\end{proof}

Theorem~\ref{thm:GF_hard} establishes the $\NP$-hardness of the Graph Fitness problem for general parameter settings. To illustrate its applicability, we present three implications below. The first provides an exponential gap in the rank for an exponentially small error, the second offers a polynomial gap in the rank along with a polynomial decay of the error, and the third shows a constant multiplicative gap in the rank for a constant error.

\begin{theorem}\label{thm:GF_cases}
There exists an absolute constant $c>0$ for which the following holds.
\begin{enumerate}
  \item\label{itm:GF1} There exists an absolute constant $c'>0$, such that for every sufficiently large positive integer $d$, the $(d,c \cdot \frac{2^{d/2}}{d^{3/4}},2^{-c' \cdot d},2^d)$-Graph-Fitness problem is $\NP$-hard.
  \item\label{itm:GF2} For every $\beta > 1$, there exists some $c'>0$, such that for every sufficiently large positive integer $d$, the $(d, d^\beta, c' \cdot \frac{1}{(d^{\beta-1} \cdot \log d)^{1/2}},2^{2^{c \cdot d}})$-Graph-Fitness problem is $\NP$-hard.
  \item\label{itm:GF3} For every $\alpha > 1$, there exists some $\eps \in (0,1)$, such that for every sufficiently large positive integer $d$, the $(d,\alpha \cdot d,\eps,2^{2^{c \cdot d}})$-Graph-Fitness problem is $\NP$-hard.
\end{enumerate}
\end{theorem}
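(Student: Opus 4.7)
The plan is to derive all three items of Theorem~\ref{thm:GF_cases} as specializations of Theorem~\ref{thm:GF_hard}, by carefully choosing parameters to fit either its Item~\ref{itm1:Fit-Hard} or Item~\ref{itm2:Fit-Hard}. Item~\ref{itm:GF1} of Theorem~\ref{thm:GF_cases} is the ``exponential gap'' regime, which I would address using Item~\ref{itm1:Fit-Hard} of Theorem~\ref{thm:GF_hard}. Items~\ref{itm:GF2} and~\ref{itm:GF3} fall into the ``moderate error'' regime and would be obtained from Item~\ref{itm2:Fit-Hard} of Theorem~\ref{thm:GF_hard}.

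For Item~\ref{itm:GF1}, let $c$ be the constant of Theorem~\ref{thm:GF_hard}, and set $g = c \cdot \frac{2^{d/2}}{d^{3/4}}$, $\theta = 2^d$, and $\eps = 2^{-c' \cdot d}$ for a sufficiently large absolute constant $c'$. Since $\log \theta = d$, the expression $\max(\log \theta, d)^{1/2}$ equals $d^{1/2}$, so the required bound $g \leq c \cdot \frac{2^{d/2}}{d^{1/4} \cdot \max(\log\theta, d)^{1/2}}$ holds with equality. Choosing $c'$ large enough ensures that $\eps \leq \frac{1}{3\sqrt{g}}$, since $\frac{1}{3\sqrt{g}}$ decays like $2^{-d/4}$ up to polynomial factors in $d$. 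Both conditions of Item~\ref{itm1:Fit-Hard} of Theorem~\ref{thm:GF_hard} are thus met, yielding $\NP$-hardness.

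For Item~\ref{itm:GF2}, given $\beta > 1$, I would set $g = d^\beta$, $\theta = 2^{2^{c \cdot d}}$, and $\eps = c' \cdot \frac{1}{(d^{\beta-1} \cdot \log d)^{1/2}}$ for a constant $c'$ depending on $\beta$ (and on the absolute $c$ of Theorem~\ref{thm:GF_hard}) to be fixed momentarily. The bound $\theta \leq 2^{2^{c \cdot d}}$ is immediate, and for large $d$ one has $\eps \leq \frac{1}{6}$ since $\beta > 1$, while $\eps \geq \frac{1}{3\sqrt{g}} = \frac{1}{3 d^{\beta/2}}$ because the ratio of the two quantities grows like $d^{1/2}/\sqrt{\log d}$. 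The critical computation is the third condition: using $\log(1/\eps) = \Theta(\log d)$, one obtains $\frac{d}{\eps^2 \cdot \log(1/\eps)} = \Theta\bigl(\frac{d^\beta}{(c')^2}\bigr)$, and so choosing $c'$ sufficiently small (as a function of $c$ and $\beta$) makes $g \leq c \cdot \frac{d}{\eps^2 \log(1/\eps)}$ hold, so Item~\ref{itm2:Fit-Hard} of Theorem~\ref{thm:GF_hard} applies.

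For Item~\ref{itm:GF3}, set $g = \alpha \cdot d$ and $\theta = 2^{2^{c \cdot d}}$. I would choose a constant $\eps \in (0, \tfrac{1}{6}]$ small enough so that $\eps^2 \cdot \log(1/\eps) \leq c/\alpha$; then the third condition of Item~\ref{itm2:Fit-Hard} becomes $\alpha \cdot d \leq c \cdot \frac{d}{\eps^2 \log(1/\eps)}$, which follows. The lower bound $\eps \geq \frac{1}{3\sqrt{g}} = \frac{1}{3\sqrt{\alpha d}}$ is automatic for $d$ large. I do not expect a serious obstacle in this proof: the main work is bookkeeping on the three-way interplay between $g$, $\eps$, and $\theta$ in each regime, ensuring that the inequalities of Theorem~\ref{thm:GF_hard} are satisfied with the stated parameter dependencies. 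The subtlest point is Item~\ref{itm:GF2}, where the $\log d$ factor inside the definition of $\eps$ is precisely calibrated to cancel the $\log(1/\eps)$ term in the denominator of Theorem~\ref{thm:GF_hard}'s bound, thereby producing the polynomial gap $g = d^\beta$.
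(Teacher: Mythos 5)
Your proposal is correct and follows essentially the same route as the paper's proof: each of the three items is obtained as a specialization of Theorem~\ref{thm:GF_hard}, with Item~\ref{itm:GF1} coming from its first item and Items~\ref{itm:GF2} and~\ref{itm:GF3} from its second, using the same parameter choices (up to minor cosmetic differences such as setting $g = d^\beta$ directly rather than $g = c\,d/(\eps^2\log(1/\eps))$ and then noting $g \ge d^\beta$, and taking $\eps = 2^{-c'd}$ directly instead of $1/(3\sqrt g)$).
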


\begin{proof}
Let $c$ be the constant given in Theorem~\ref{thm:GF_hard}, and let $d$ be a positive integer.
We derive the three items from Theorem~\ref{thm:GF_hard}.
For Item~\ref{itm:GF1}, apply Item~\ref{itm1:Fit-Hard} of Theorem~\ref{thm:GF_hard} with $\theta = 2^{d}$, $g = c \cdot \frac{2^{d/2}}{d^{3/4}}$, and $\eps = \frac{1}{3\sqrt{g}}$.
Next, for Item~\ref{itm:GF2}, consider some $\beta > 1$, set $\eps =c' \cdot \frac{1}{(d^{\beta-1} \cdot \log d)^{1/2}}$ and $g = c \cdot \frac{d}{\eps^2 \cdot \log (1/\eps)}$, where $c'$ is sufficiently small, ensuring that $g \geq d^\beta$.
Notice that for a sufficiently large $d$, it holds that $\eps \in [\frac{1}{3\sqrt{g}},\frac{1}{6}]$, allowing us to apply Item~\ref{itm2:Fit-Hard} of Theorem~\ref{thm:GF_hard} with $\eps$, $\theta = 2^{2^{c \cdot d}}$, and $g$, to derive the desired hardness result.
Finally, for Item~\ref{itm:GF3}, consider some $\alpha > 1$, let $\eps \in(0,\frac{1}{6}]$ be the largest real number satisfying $\alpha \leq c \cdot \frac{1}{\eps^2 \cdot \log(1/\eps)}$, and set $g = \alpha \cdot d \leq c \cdot \frac{d}{\eps^2 \cdot \log(1/\eps)}$.
Since $\eps \in [\frac{1}{3\sqrt{g}},\frac{1}{6}]$ for any sufficiently large $d$, we can apply again Item~\ref{itm2:Fit-Hard} of Theorem~\ref{thm:GF_hard} with $\eps$, $\theta = 2^{2^{c \cdot d}}$, and $g$, so the proof is complete.
\end{proof}

\subsection{Orthogonality Dimension}

We now turn to proving $\NP$-hardness results for approximating the $\eps$-orthogonality dimension of graphs, extending a result of~\cite{ChawinH23}. Consider the following gap problem (recall Definition~\ref{def:od}).

\begin{definition}[The $(d_1,d_2,\eps)$-Ortho-Dim Problem]
For positive integers $d_1 < d_2$ and a real number $\eps \in [0,1)$, the {\em $(d_1,d_2,\eps)$-Ortho-Dim} problem asks to decide whether an input graph $G$ satisfies $\od(G) \leq d_1$ or  $\od_\eps(G) \geq d_2$.
\end{definition}
\noindent
Note that the definition requires $\YES$ instances to have a bounded standard orthogonality dimension (with $\eps=0$), making our hardness results stronger.

The following lemma provides a reduction from the Graph Fitness problem to the Orthogonality Dimension problem.
\begin{lemma}\label{lemma:Fit->OD}
For all positive integers $d_1 < d_2$ and real numbers $\eps \in [0,1)$, there exists a polynomial-time reduction from $(d_1,d_2,\eps,1)$-Graph-Fitness to $(d_1,d_2,\eps)$-Ortho-Dim.
\end{lemma}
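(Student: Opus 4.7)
The plan is to use the trivial identity reduction that maps an input graph $G$ to itself, and to verify that the definitions of the two problems line up so that YES (respectively NO) instances are preserved. The reduction runs in constant time, so only correctness needs attention.

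First, I would handle the YES direction. Suppose $G$ is a YES instance of $(d_1,d_2,\eps,1)$-Graph-Fitness, witnessed by a positive semi-definite matrix $B \in \R^{n \times n}$ of rank at most $d_1$ that fits $G$ (with $\mu(B)=1$, though coherence is irrelevant for Ortho-Dim). Using the standard Gram decomposition available for PSD matrices, I would write $B = X \cdot X^t$ for some $X \in \R^{n \times d_1}$. Since $B$ fits $G$, the diagonal entries $B_{v,v} = \|x_v\|^2$ equal $1$, so each row $x_v$ of $X$ is a unit vector; and for every pair of adjacent vertices $u,v$, the entry $B_{u,v} = \langle x_u, x_v \rangle$ vanishes. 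Thus the rows of $X$ constitute a $d_1$-dimensional orthonormal representation of $G$, giving $\od(G) \leq d_1$, i.e., a YES instance of $(d_1,d_2,\eps)$-Ortho-Dim.

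Second, I would establish the NO direction by contrapositive: assume $\od_\eps(G) < d_2$, and produce matrices witnessing that $G$ fails the NO condition of Graph Fitness. Pick a $d$-dimensional $\eps$-orthonormal representation $(x_v)_{v \in V}$ of $G$ with $d < d_2$, let $X \in \R^{n \times d}$ be the matrix whose $v$-th row is $x_v$, and set $Y = X$. Every row of $X$ and $Y$ is a unit vector, so has norm at most $1$. The product $X \cdot Y^t = X \cdot X^t$ is automatically symmetric, its diagonal entries equal $\|x_v\|^2 = 1$, and for every pair of adjacent vertices $u,v$ the corresponding entry $\langle x_u,x_v \rangle$ has absolute value at most $\eps$. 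Hence $X \cdot Y^t$ $\eps$-fits $G$ with $d < d_2$, which precludes $G$ from being a NO instance of $(d_1,d_2,\eps,1)$-Graph-Fitness.

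There is essentially no obstacle here; the proof is a definition-chase. The only two points worth a moment of attention are that the positive semi-definiteness assumed in the YES case of Graph Fitness is exactly what one needs for the Gram decomposition $B = X \cdot X^t$ (rather than the more general $X \cdot Y^t$ allowed in the NO clause), and that taking $X = Y$ trivially satisfies the symmetry requirement on $X \cdot Y^t$ imposed in the NO clause, so an $\eps$-orthonormal representation slots directly into that clause.
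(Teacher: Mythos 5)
Your proposal is correct and follows essentially the same route as the paper's proof: the identity reduction, the Gram decomposition $B = X\cdot X^t$ for the $\YES$ direction, and the contrapositive argument with $Y = X$ built from an $\eps$-orthonormal representation for the $\NO$ direction. No discrepancies worth noting.
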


\begin{proof}
We claim that the identity reduction satisfies the statement of the lemma.
To see this, fix integers $d_1 < d_2$ and a real number $\eps \in [0,1)$, and consider an input graph $G=(V,E)$ on $n$ vertices.

If $G$ is a $\YES$ instance of $(d_1,d_2,\eps,1)$-Graph-Fitness, then there exists a positive semi-definite matrix $B \in \R^{n \times n}$ that fits the graph $G$ and satisfies $\rank(B) \leq d_1$. Since $B$ is positive semi-definite, there exists a matrix $X \in \R^{n \times d_1}$ for which $B = X \cdot X^t$. To each vertex $v \in V$, assign the $d_1$-dimensional row $x_v$ of $X$ associated with $v$. Since $B$ fits $G$, it follows that $\|x_v\|=1$ for every $v$, and that $\langle x_u,x_v \rangle = 0$ whenever $u$ and $v$ are adjacent in $G$, thus $(x_v)_{v \in V}$ forms a $d_1$-dimensional orthonormal representation of $G$. This implies that $\od(G) \leq d_1$, so $G$ is a $\YES$ instance of $(d_1,d_2,\eps)$-Ortho-Dim.

For the converse, we show the contrapositive. Suppose that $G$ is not a $\NO$ instance of $(d_1,d_2,\eps)$-Ortho-Dim, hence for some integer $d < d_2$, $G$ admits a $d$-dimensional $\eps$-orthonormal representation $(x_v)_{v \in V}$. Letting $X \in \R^{n \times d}$ denote the matrix in which the row associated with a vertex $v$ is the vector $x_v$, we obtain that every row of $X$ has norm $1$ and that $X \cdot X^t$ is a symmetric matrix that $\eps$-fits the graph $G$. By $d < d_2$, it follows that $G$ is not a $\NO$ instance of $(d_1,d_2,\eps,1)$-Graph-Fitness, and we are done.
\end{proof}

By combining Theorem~\ref{thm:GF_hard} with Lemma~\ref{lemma:Fit->OD}, we derive the following.

\begin{corollary}\label{cor:OD_hard}
There exists an absolute constant $c>0$ for which the following holds.
Let $d$ and $g$ be positive integers with $d$ sufficiently large and $d < g$, and let $\eps$ be a real number.
Suppose that either
\begin{enumerate}
  \item $\eps \in [0, \frac{1}{3\sqrt{g}}]$ and $g \leq c \cdot \frac{2^{d/2}}{d^{3/4}}$, or
  \item $\eps \in [ \frac{1}{3\sqrt{g}},\frac{1}{6}]$ and $g \leq c \cdot \frac{d}{\eps^2 \cdot \log (1/\eps)}$.
\end{enumerate}
Then the $(d,g,\eps)$-Ortho-Dim problem is $\NP$-hard.
\end{corollary}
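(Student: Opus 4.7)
The plan is to combine Theorem~\ref{thm:GF_hard} with Lemma~\ref{lemma:Fit->OD} in a direct manner. Lemma~\ref{lemma:Fit->OD} provides a polynomial-time reduction from $(d_1,d_2,\eps,1)$-Graph-Fitness to $(d_1,d_2,\eps)$-Ortho-Dim, so it suffices to apply Theorem~\ref{thm:GF_hard} with the infinity-norm parameter $\theta$ fixed to $1$ and check that the two hypothesis regimes of the corollary are admissible cases of the theorem.

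For Item~1, I would instantiate Item~\ref{itm1:Fit-Hard} of Theorem~\ref{thm:GF_hard} with $\theta = 1$. Since $\log \theta = 0$, we have $\max(\log\theta, d) = d$, so the condition $g \leq c \cdot \frac{2^{d/2}}{d^{1/4} \cdot \max(\log\theta, d)^{1/2}}$ reduces to $g \leq c \cdot \frac{2^{d/2}}{d^{3/4}}$, which is exactly the hypothesis of Item~1 of the corollary; the assumption $\eps \in [0, \frac{1}{3\sqrt{g}}]$ is shared with the theorem. Hence the $(d,g,\eps,1)$-Graph-Fitness problem is $\NP$-hard in this regime.

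For Item~2, I would analogously invoke Item~\ref{itm2:Fit-Hard} of Theorem~\ref{thm:GF_hard} with $\theta = 1$. The constraint $\theta \leq 2^{2^{c \cdot d}}$ is trivially met, and the remaining hypotheses $\eps \in [\frac{1}{3\sqrt{g}}, \frac{1}{6}]$ and $g \leq c \cdot \frac{d}{\eps^2 \cdot \log(1/\eps)}$ coincide with those of Item~2 of the corollary. Thus the $(d,g,\eps,1)$-Graph-Fitness problem is $\NP$-hard here as well.

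In both cases, composing the established $\NP$-hardness of $(d,g,\eps,1)$-Graph-Fitness with the polynomial-time reduction from Lemma~\ref{lemma:Fit->OD} yields the $\NP$-hardness of $(d,g,\eps)$-Ortho-Dim, completing the proof. There is no substantial obstacle: the argument is essentially a bookkeeping exercise of specializing $\theta = 1$ in the Graph-Fitness hardness theorem and invoking the identity-reduction from Lemma~\ref{lemma:Fit->OD}, with the constant $c$ inherited directly from Theorem~\ref{thm:GF_hard}.
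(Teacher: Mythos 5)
Your proposal matches the paper's derivation exactly: the paper obtains this corollary precisely by specializing Theorem~\ref{thm:GF_hard} to $\theta=1$ (so that $\max(\log\theta,d)=d$ collapses the bound in Item~1 to $g \leq c \cdot 2^{d/2}/d^{3/4}$ and the condition $\theta \leq 2^{2^{c\cdot d}}$ in Item~2 becomes vacuous) and then composing with the reduction of Lemma~\ref{lemma:Fit->OD}. The bookkeeping is correct, so nothing further is needed.
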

\noindent
More concrete consequences can be obtained by combining Theorem~\ref{thm:GF_cases} with Lemma~\ref{lemma:Fit->OD}.

\subsection{Low-Rank Matrix Completion}

We finally turn to proving our hardness results for the low-rank matrix completion problems described in Definitions~\ref{def:PSD_prob} and~\ref{def:theta_prob}, thereby strengthening the $\NP$-hardness results of~\cite{HMRW14}. This will be accomplished through the reductions outlined in the following lemma.

\begin{lemma}\label{lemma:Fit->Completion}
For all positive integers $d_1 < d_2$ and real numbers $\eps \in [0,1)$, the following holds.
\begin{enumerate}
  \item\label{itm1:lemma_reduction} There exists a polynomial-time reduction from $(d_1,d_2,\eps,1)$-Graph-Fitness to \[ \Big (d_1,d_2,\frac{\eps}{1+\eps} \Big )\mbox{-PSD-Completion}.\]
  \item\label{itm2:lemma_reduction} For every real number $\theta \geq (1+\eps)^{1/2} \cdot d_2^{1/4}$, if $d_1 < \lfloor \frac{d_2}{2} \rfloor$ then there exists a polynomial-time reduction from $(d_1,d_2,\eps,\theta)$-Graph-Fitness to
  \[\bigg (d_1,\bigg \lfloor \frac{d_2}{2} \bigg \rfloor,\frac{\eps}{1+\eps}, \frac{\theta^2}{(1+\eps) \cdot d_2^{1/2}} \bigg )\mbox{-Completion}.\]
\end{enumerate}
\end{lemma}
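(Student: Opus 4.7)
The plan is to use the same natural partial matrix for both reductions. Given a graph $G=(V,E)$ on $n$ vertices, I would produce $A \in ([-1,+1] \cup \{\perp\})^{n \times n}$ with $A_{i,i}=1$ for all $i$, $A_{i,j}=0$ whenever $\{i,j\} \in E$, and $A_{i,j}=\perp$ otherwise; this is clearly polynomial-time computable and the partial matrix is symmetric. For Item~\ref{itm2:lemma_reduction}, observe first that the target infinity-norm bound $\theta' := \theta^2/((1+\eps) \cdot d_2^{1/2})$ is at least $1$, by the hypothesis $\theta \geq (1+\eps)^{1/2} d_2^{1/4}$, so all exposed entries lie in $[-\theta',+\theta']$. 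For the $\YES$ direction in either item, a Graph-Fitness witness $B$ (positive semi-definite, rank at most $d_1$, fitting $G$, with $\mu(B)=1$) already agrees with $A$ on exposed entries and satisfies $\|B\|_\infty \leq 1$, since PSDness with unit diagonal forces $|B_{i,j}| \leq \sqrt{B_{i,i} B_{j,j}}=1$; hence $B$ is a valid completion with $\mu(B)=1$ preserved.

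For the $\NO$ direction I would argue by contrapositive. Suppose the target problem admits a completion $B'$ of $A$ with per-entry error at most $\eps' := \eps/(1+\eps)$ and rank below the target threshold. In Item~\ref{itm1:lemma_reduction} set $\tilde B := B'$ (already symmetric PSD); in Item~\ref{itm2:lemma_reduction} set $\tilde B := (B' + (B')^t)/2$, which is symmetric, satisfies $\|\tilde B\|_\infty \leq \theta'$ and the same per-entry error condition on $A$, and has rank at most $2 \rank(B') \leq 2(\lfloor d_2/2 \rfloor -1) \leq d_2-2 < d_2$. In either case $\tilde B_{i,i} \in [1-\eps',1+\eps']$, so $\tilde B_{i,i} \geq 1-\eps' = 1/(1+\eps) > 0$; putting $D_{i,i} := 1/\sqrt{\tilde B_{i,i}}$ and $\hat B := D \tilde B D$ yields a symmetric matrix of the same rank as $\tilde B$, retaining PSDness when present, with $\hat B_{i,i}=1$ and, for every edge $\{i,j\}$,
\[ |\hat B_{i,j}| = \frac{|\tilde B_{i,j}|}{\sqrt{\tilde B_{i,i}\tilde B_{j,j}}} \leq \frac{\eps'}{1-\eps'} = \eps,\]
so $\hat B$ $\eps$-fits $G$. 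The precise choice $\eps' = \eps/(1+\eps)$ is dictated by this calculation.

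To produce the needed rank factorization I would handle each item separately. For Item~\ref{itm1:lemma_reduction}, PSDness of $\hat B$ directly gives $\hat B = \hat X \hat X^t$ with $\hat X \in \R^{n \times \rank(\hat B)}$ whose rows satisfy $\|\hat X_i\|^2 = \hat B_{i,i}=1$; taking $X=Y=\hat X$ together with $\theta=1$ contradicts $G$ being a $\NO$ instance of $(d_1,d_2,\eps,1)$-Graph-Fitness. For Item~\ref{itm2:lemma_reduction}, I would bound $\|\hat B\|_\infty \leq (1+\eps)\|\tilde B\|_\infty \leq (1+\eps)\theta' = \theta^2/d_2^{1/2}$ and then invoke Lemma~\ref{lemma:factor} on $\hat B$ to obtain $X,Y \in \R^{n \times d}$ with $d = \rank(\hat B) < d_2$ and with row norms bounded by
\[ d^{1/4} \cdot \|\hat B\|_\infty^{1/2} \leq d_2^{1/4} \cdot \theta / d_2^{1/4} = \theta,\]
again contradicting the Graph-Fitness $\NO$ hypothesis.

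The routine part is really the entry-by-entry bookkeeping; the only mildly delicate points are the tight error-scaling that pins down $\eps'=\eps/(1+\eps)$, and the factor-of-two rank loss from symmetrization (which is precisely why the rank threshold halves from $d_2$ to $\lfloor d_2/2 \rfloor$ in passing from Item~\ref{itm1:lemma_reduction} to Item~\ref{itm2:lemma_reduction}). The other arithmetic check, that the row-norm bound from Lemma~\ref{lemma:factor} lines up with the target $\theta$, is exactly what forces the value $\theta^2/((1+\eps) d_2^{1/2})$ in the statement and is the main reason the hypothesis $\theta \geq (1+\eps)^{1/2} d_2^{1/4}$ appears.
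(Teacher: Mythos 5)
Your proof is correct and follows essentially the same route as the paper: identical partial matrix, contrapositive for the $\NO$ direction, symmetrization for Item~\ref{itm2:lemma_reduction}, and Lemma~\ref{lemma:factor} for the rank factorization. The only cosmetic difference is the order of operations in the $\NO$ direction: you conjugate by the diagonal matrix $D$ to normalize $\tilde B$ to unit diagonal \emph{before} invoking the rank factorization, whereas the paper factorizes $C$ (or writes $B=XX^t$) first and then rescales each row by $\langle x_v,y_v\rangle^{-1/2}$ (resp.\ $\|x_v\|^{-1}$). These produce the same matrices $X',Y'$ with the same row-norm bound $\theta$, so the two presentations are interchangeable.
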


\begin{proof}
Fix integers $d_1,d_2$ and real numbers $\eps,\theta$ as in the statement of the lemma, and set
\[\eps' = \frac{\eps}{1+\eps}~\mbox{~~~and~~~}\theta' = \frac{\theta^2}{(1+\eps) \cdot d_2^{1/2}}.\]
Note that $\eps' \in [0,1)$ and $\theta' \geq 1$.
Both parts of the lemma are established through the same reduction, which is described next.
For a given graph $G=(V,E)$ on $n$ vertices, the reduction produces and returns the partial matrix $A \in \{0,1,\perp\}^{n \times n}$, whose rows and columns are indexed by $V$, defined by
\[
A_{u,v} =
\begin{cases}
1 & \text{if } u = v, \\
0 & \text{if } \{u, v\} \in E, \\
\perp & \text{otherwise.}
\end{cases}
\]
This reduction can clearly be implemented in polynomial time (in fact, in logarithmic space).

We now prove the correctness of the reduction. We begin with the forward direction, which applies to both parts of the lemma.
Suppose that $G$ is a $\YES$ instance of $(d_1,d_2,\eps,\theta)$-Graph-Fitness for some $\theta \geq 1$.
Then, there exists a positive semi-definite matrix $B \in \R^{n \times n}$ that fits the graph $G$, such that $\mu(B)=1$ and $\rank(B) \leq d_1$.
Since $B$ fits $G$, the diagonal entries of $B$ are all ones, and the entries of $B$ associated with adjacent vertices are zeros. This implies that $A_{u,v} = B_{u,v}$ whenever $A_{u,v} \neq \perp$. It follows that $A$ is a $\YES$ instance of both $(d_1,d_2,\eps')$-PSD-Completion and $(d_1,\lfloor \frac{d_2}{2} \rfloor,\eps',\theta')$-Completion, as needed.

For the reverse direction, we prove the contrapositive.
For Item~\ref{itm1:lemma_reduction} of the lemma, we show that if $A$ is not a $\NO$ instance of $(d_1,d_2,\eps')$-PSD-Completion, then $G$ is not a $\NO$ instance of $(d_1,d_2,\eps,1)$-Graph-Fitness.
Suppose that for some integer $d < d_2$, there exists a positive semi-definite matrix $B \in \R^{n \times n}$ of rank $d$, such that $|A_{u,v} - B_{u,v}| \leq \eps'$ whenever $A_{u,v} \neq \perp$.
Since $B$ is positive semi-definite and of rank $d$, one may write $B = X \cdot X^t$ for a matrix $X \in \R^{n \times d}$. For each vertex $v \in V$, let $x_v$ denote the row of $X$ associated with $v$. Notice that for every vertex $v \in V$, it holds that $\|x_v\|^2 = \langle x_v,x_v \rangle =B_{v,v} \in [1-\eps',1+\eps']$, and that for every pair of adjacent vertices $u$ and $v$ in $G$, it holds that $\langle x_u,x_v \rangle =B_{u,v} \in [-\eps',+\eps']$. For each vertex $v \in V$, let $x'_v = \frac{x_v}{\|x_v\|}$.
It follows that for every $v \in V$, $x'_v$ is a unit vector, and that for every pair of adjacent vertices $u$ and $v$ in $G$, it holds that
\[\langle x'_u,x'_v \rangle = \frac{\langle x_u,x_v \rangle}{\|x_u\| \cdot \|x_v\|} \in \bigg [ -\frac{\eps'}{1-\eps'}, +\frac{\eps'}{1-\eps'}  \bigg ] = [-\eps,+\eps].\]
Letting $X' \in \R^{n \times d}$ denote the matrix in which the row associated with a vertex $v$ is $x'_v$, we obtain that every row of $X'$ has norm $1$ and that $X' \cdot (X')^t$ is a symmetric matrix that $\eps$-fits the graph $G$. By $d<d_2$, it follows that $G$ is not a $\NO$ instance of $(d_1,d_2,\eps,1)$-Graph-Fitness, as desired.

For Item~\ref{itm2:lemma_reduction} of the lemma, we show that if $A$ is not a $\NO$ instance of $(d_1,\lfloor \frac{d_2}{2} \rfloor,\eps',\theta')$-Completion, then $G$ is not a $\NO$ instance of $(d_1,d_2,\eps,\theta)$-Graph-Fitness.
Suppose that for some integer $d < \lfloor \frac{d_2}{2} \rfloor$, there exists a matrix $B \in [-\theta',+\theta']^{n \times n}$ of rank $d$, such that $|A_{u,v} - B_{u,v}| \leq \eps'$ whenever $A_{u,v} \neq \perp$. Put $C = \frac{1}{2} \cdot (B+B^t)$, and notice that $C$ is a symmetric matrix that lies in $[-\theta',+\theta']^{n \times n}$ and satisfies $\rank(C) \leq 2d < d_2$. By the symmetry of $A$, we observe that for all $u,v \in V$ with $A_{u,v} \neq \perp$, it holds that
\begin{eqnarray}\label{eq:AvsC}
| A_{u,v} - C_{u,v} | = \Big |\tfrac{1}{2} \cdot (A_{u,v} - B_{u,v})+\tfrac{1}{2} \cdot (A_{v,u} - B_{v,u}) \Big | \leq \tfrac{1}{2} \cdot |A_{u,v} - B_{u,v}|+\tfrac{1}{2} \cdot |A_{v,u} - B_{v,u}| \leq \eps'.
\end{eqnarray}
By Lemma~\ref{lemma:factor}, there exist two matrices $X,Y \in \R^{n \times (2d)}$ satisfying $C = X \cdot Y^t$, such that every row of $X$ and $Y$ has norm at most
\begin{eqnarray}\label{eq:normXY}
(2d)^{1/4} \cdot \theta'^{1/2} <  d_2^{1/4} \cdot \theta'^{1/2} = \frac{\theta}{(1+\eps)^{1/2}} = (1-\eps')^{1/2} \cdot \theta.
\end{eqnarray}
For each vertex $v \in V$, let $x_v$ and $y_v$ denote the rows associated with $v$ in $X$ and $Y$, respectively.
By~\eqref{eq:AvsC}, for every vertex $v \in V$, it holds that $\langle x_v,y_v \rangle = C_{v,v} \in [1-\eps',1+\eps']$, and for every pair of adjacent vertices $u$ and $v$ in $G$, it holds that $\langle x_u,y_v \rangle =C_{u,v} \in [-\eps',+\eps']$. For each vertex $v \in V$, let $x'_v = \frac{x_v}{\langle x_v,y_v \rangle^{1/2}}$ and $y'_v = \frac{y_v}{\langle x_v,y_v \rangle^{1/2}}$, and observe using~\eqref{eq:normXY} that $\|x'_v\| \leq \theta$ and $\|y'_v\| \leq \theta$. For every $v \in V$, we have $\langle x'_v,y'_v \rangle =1$, and for every pair of adjacent vertices $u$ and $v$ in $G$, it holds that
\[\langle x'_u,y'_v \rangle = \frac{\langle x_u,y_v \rangle}{\langle x_u,y_u \rangle^{1/2} \cdot \langle x_v,y_v \rangle^{1/2}} \in \bigg [ -\frac{\eps'}{1-\eps'}, +\frac{\eps'}{1-\eps'}  \bigg ] = [-\eps,+\eps].\]
Let $X',Y' \in \R^{n \times (2d)}$ denote the matrices in which the rows associated with a vertex $v$ are $x'_v$ and $y'_v$, respectively. The above discussion implies that every row of $X'$ and $Y'$ has norm at most $\theta$ and that $X' \cdot (Y')^t$ is a symmetric matrix that $\eps$-fits the graph $G$. By $2d< d_2$, it follows that $G$ is not a $\NO$ instance of $(d_1,d_2,\eps, \theta)$-Graph-Fitness, completing the proof.
\end{proof}

By combining Theorem~\ref{thm:GF_hard} with the first item of Lemma~\ref{lemma:Fit->Completion}, we obtain the following.

\begin{corollary}\label{cor:PSD_hard}
There exists an absolute constant $c>0$ for which the following holds.
Let $d$ and $g$ be positive integers with $d$ sufficiently large and $d < g$, and let $\eps$ be a real number.
Suppose that either
\begin{enumerate}
  \item $\eps \in [0, \frac{1}{3\sqrt{g}}]$ and $g \leq c \cdot \frac{2^{d/2}}{d^{3/4}}$, or
  \item $\eps \in [ \frac{1}{3\sqrt{g}},\frac{1}{6}]$ and $g \leq c \cdot \frac{d}{\eps^2 \cdot \log (1/\eps)}$.
\end{enumerate}
Then the $(d,g,\frac{\eps}{1+\eps})$-PSD-Completion problem is $\NP$-hard.
\end{corollary}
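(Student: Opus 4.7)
The plan is to derive the corollary by composing the hardness result for Graph Fitness given in Theorem~\ref{thm:GF_hard} with the polynomial-time reduction from Graph Fitness to PSD-Completion provided in Item~\ref{itm1:lemma_reduction} of Lemma~\ref{lemma:Fit->Completion}. The key observation is that the reduction in Lemma~\ref{lemma:Fit->Completion}(\ref{itm1:lemma_reduction}) consumes a Graph-Fitness instance with the parameter $\theta=1$ and produces a PSD-Completion instance whose error parameter is $\frac{\eps}{1+\eps}$, matching exactly the form appearing in the statement. So the entire task reduces to verifying that Theorem~\ref{thm:GF_hard} gives $\NP$-hardness of $(d,g,\eps,1)$-Graph-Fitness under the two parameter regimes listed in the corollary.

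For the first regime, I would invoke Item~\ref{itm1:Fit-Hard} of Theorem~\ref{thm:GF_hard} with $\theta = 1$. The condition there reads $g \leq c \cdot \frac{2^{d/2}}{d^{1/4} \cdot \max(\log\theta,d)^{1/2}}$, and with $\theta=1$ we have $\max(\log\theta,d) = d$, so the bound simplifies to $g \leq c \cdot \frac{2^{d/2}}{d^{1/4} \cdot d^{1/2}} = c \cdot \frac{2^{d/2}}{d^{3/4}}$, which is precisely the hypothesis of the first item in the corollary. For the second regime, I would invoke Item~\ref{itm2:Fit-Hard} of Theorem~\ref{thm:GF_hard} again with $\theta=1$. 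The additional requirement $\theta \leq 2^{2^{c \cdot d}}$ is trivially met since $1 \leq 2^{2^{c \cdot d}}$ for any sufficiently large $d$, and the remaining conditions $\eps \in [\frac{1}{3\sqrt{g}}, \frac{1}{6}]$ and $g \leq c \cdot \frac{d}{\eps^2 \cdot \log(1/\eps)}$ are identical to those of the second item in the corollary.

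Thus, in both cases we obtain $\NP$-hardness of $(d,g,\eps,1)$-Graph-Fitness, and applying Lemma~\ref{lemma:Fit->Completion}(\ref{itm1:lemma_reduction}) yields the desired $\NP$-hardness of $(d,g,\frac{\eps}{1+\eps})$-PSD-Completion. I do not anticipate any real technical obstacle here: the corollary is essentially a bookkeeping exercise of plugging $\theta=1$ into the general Graph-Fitness hardness theorem and then chaining the standard reduction. The only care required is to confirm that the resulting hardness gap $(d,g)$ and the transformed error $\frac{\eps}{1+\eps}$ are preserved by the reduction, which is immediate from the statement of Lemma~\ref{lemma:Fit->Completion}.
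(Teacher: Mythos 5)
Your proposal is correct and follows exactly the paper's route: the paper derives this corollary by combining Theorem~\ref{thm:GF_hard} (instantiated, as you note, with $\theta=1$, so that $\max(\log\theta,d)=d$ in the first regime and the bound $\theta \leq 2^{2^{c\cdot d}}$ is trivial in the second) with the polynomial-time reduction of Item~\ref{itm1:lemma_reduction} of Lemma~\ref{lemma:Fit->Completion}. Your spelled-out bookkeeping matches the intended argument, and no technical gap remains.
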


Corollary~\ref{cor:PSD_hard} confirms Theorems~\ref{thm:IntroPSD} and~\ref{thm:IntroPSDexp}.
Indeed, the first item of Corollary~\ref{cor:PSD_hard} implies that for every sufficiently large integer $d$ and any real number $\eps \in [0,2^{-\Omega(d)}]$, the $(d,2^{(1-o(1))\cdot d/2},\eps)$-PSD-Completion problem is $\NP$-hard, as needed for Theorem~\ref{thm:IntroPSDexp}. Next, for any sufficiently large integer $d$ and any real number $\eps' \in [2^{-9cd},\frac{1}{7}]$, set $\eps = \frac{\eps'}{1-\eps'}$, and observe that $\eps \in [2^{-9cd},\frac{1}{6}]$. Set $g = c \cdot \frac{d}{\eps^2 \cdot \log (1/\eps)}$, and notice that $g \cdot \eps^2 = c \cdot \frac{d}{\log (1/\eps)} \geq \frac{1}{9}$, hence $\eps \in [ \frac{1}{3\sqrt{g}},\frac{1}{6}]$. By the second item of Corollary~\ref{cor:PSD_hard}, applied with $\eps$, we obtain the $\NP$-hardness of the $(d,g,\eps')$-PSD-Completion problem, which yields Theorem~\ref{thm:IntroPSD}.

By combining Theorem~\ref{thm:GF_hard} with the second item of Lemma~\ref{lemma:Fit->Completion}, we obtain the following.

\begin{corollary}\label{cor:Comp_hard}
There exists an absolute constant $c>0$ for which the following holds.
Let $d$ and $g$ be positive integers with $d$ sufficiently large and $d < \lfloor \frac{g}{2} \rfloor$, and let $\eps$ and $\theta \geq (1+\eps)^{1/2} \cdot g^{1/4}$ be real numbers.
Suppose that either
\begin{enumerate}
  \item $\eps \in [0, \frac{1}{3\sqrt{g}}]$ and $g \leq c \cdot \frac{2^{d/2}}{d^{1/4} \cdot  \max(\log \theta,d)^{1/2}}$, or
  \item $\eps \in [ \frac{1}{3\sqrt{g}},\frac{1}{6}]$, $\theta \leq 2^{2^{c \cdot d}}$, and $g \leq c \cdot \frac{d}{\eps^2 \cdot \log (1/\eps)}$.
\end{enumerate}
Then the $(d,\lfloor \frac{g}{2}\rfloor,\frac{\eps}{1+\eps}, \frac{\theta^2}{(1+\eps) \cdot g^{1/2}})$-Completion problem is $\NP$-hard.
\end{corollary}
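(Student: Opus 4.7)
The plan is a direct chaining of two results already established in this section, with no new technical content required. Theorem~\ref{thm:GF_hard} provides the $\NP$-hardness of the $(d,g,\eps,\theta)$-Graph-Fitness problem under exactly the two parameter regimes listed in Items~(1) and~(2) of the corollary, while Item~\ref{itm2:lemma_reduction} of Lemma~\ref{lemma:Fit->Completion} supplies a polynomial-time reduction from $(d_1,d_2,\eps,\theta)$-Graph-Fitness to $(d_1,\lfloor d_2/2\rfloor,\frac{\eps}{1+\eps},\frac{\theta^2}{(1+\eps)\cdot d_2^{1/2}})$-Completion, valid whenever $d_1<\lfloor d_2/2\rfloor$ and $\theta\geq (1+\eps)^{1/2}\cdot d_2^{1/4}$. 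Instantiating this reduction with $d_1=d$ and $d_2=g$ is what will produce the desired completion-problem parameters.

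To carry this out, I would first apply Theorem~\ref{thm:GF_hard} to the given $d$, $g$, $\eps$, $\theta$ to conclude that $(d,g,\eps,\theta)$-Graph-Fitness is $\NP$-hard; the two hypothesis regimes of the corollary are word-for-word those of the theorem, so this step is immediate. Next, I would check the two side conditions of Item~\ref{itm2:lemma_reduction} of Lemma~\ref{lemma:Fit->Completion}: the rank-gap condition $d<\lfloor g/2\rfloor$ and the norm-floor condition $\theta\geq (1+\eps)^{1/2}\cdot g^{1/4}$. Both are explicit hypotheses of the corollary, so the reduction is applicable. Composing the polynomial-time reduction with the hardness of Graph-Fitness yields the $\NP$-hardness of the target problem with error parameter $\frac{\eps}{1+\eps}$, infinity-norm bound $\frac{\theta^2}{(1+\eps)\cdot g^{1/2}}$, and rank gap $d$ vs.\ $\lfloor g/2\rfloor$, which is exactly the conclusion.

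Since both ingredients are already proven and their hypotheses match those of the corollary by construction, there is no genuine obstacle; the only thing worth double-checking is that the regime of $\theta$ assumed here is consistent with the regime allowed by Theorem~\ref{thm:GF_hard} (in Item~(2) the theorem requires $\theta\leq 2^{2^{c\cdot d}}$, which is exactly what the corollary assumes in its Item~(2), and in Item~(1) no such upper bound is imposed by the theorem). With these checks in place, the conclusion follows by composition.
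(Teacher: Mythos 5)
Your proposal is correct and matches the paper exactly: the paper obtains this corollary precisely by applying Theorem~\ref{thm:GF_hard} to get $\NP$-hardness of $(d,g,\eps,\theta)$-Graph-Fitness and then composing with the reduction of Item~\ref{itm2:lemma_reduction} of Lemma~\ref{lemma:Fit->Completion} instantiated at $d_1=d$, $d_2=g$, whose side conditions $d<\lfloor g/2\rfloor$ and $\theta\geq(1+\eps)^{1/2}\cdot g^{1/4}$ are exactly the corollary's hypotheses.
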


Corollary~\ref{cor:Comp_hard} confirms Theorems~\ref{thm:IntroFit} and~\ref{thm:IntroFitExp}. Indeed, for any sufficiently large integer $d$ and any real numbers $\eps' \in [0,2^{-\Omega(d)}]$ and $\theta' \in [1,2^{2^{o(d)}}]$, set $\eps = \frac{\eps'}{1-\eps'}$ and $\theta = (1+\eps)^{1/2} \cdot g^{1/4} \cdot \theta'^{1/2}$ where $g = 2^{(1-o(1))\cdot d/2}$. By the first item of Corollary~\ref{cor:Comp_hard}, applied with $\eps$ and $\theta$, it follows that the $(d,\lfloor \frac{g}{2}\rfloor,\eps', \theta')$-Completion problem is $\NP$-hard, as needed for Theorem~\ref{thm:IntroFitExp}. Next, for any sufficiently large integer $d$ and any real numbers $\eps' \in [2^{-9cd},\frac{1}{7}]$ and $\theta' \in [1,2^{2^{O(d)}}]$, set $\eps = \frac{\eps'}{1-\eps'}$, $\theta = (1+\eps)^{1/2} \cdot g^{1/4} \cdot \theta'^{1/2}$, and $g = c \cdot \frac{d}{\eps^2 \cdot \log (1/\eps)}$. As above, we have $g \cdot \eps^2 = c \cdot \frac{d}{\log (1/\eps)} \geq \frac{1}{9}$, hence $\eps \in [ \frac{1}{3\sqrt{g}},\frac{1}{6}]$. This allows us to apply the second item of Corollary~\ref{cor:Comp_hard} with $\eps$ and $\theta$ to obtain the $\NP$-hardness of the $(d,\lfloor \frac{g}{2}\rfloor,\eps',\theta')$-Completion problem, which in turn yields Theorem~\ref{thm:IntroFit}.

Note that one may combine Theorem~\ref{thm:GF_cases} with Lemma~\ref{lemma:Fit->Completion} to obtain hardness results for concrete settings of the parameters. This is illustrated for the PSD-Completion problem, as follows.

\begin{corollary}\label{cor:PSD_cases}
There exists an absolute constant $c>0$ for which the following holds.
\begin{enumerate}
  \item There exists an absolute constant $c'>0$, such that for every sufficiently large positive integer $d$, the $(d,c \cdot \frac{2^{d/2}}{d^{3/4}},2^{-c' \cdot d})$-PSD-Completion problem is $\NP$-hard.
  \item For every $\beta > 1$, there exists some $c'>0$, such that for every sufficiently large positive integer $d$, the $(d, d^\beta, c' \cdot \frac{1}{(d^{\beta-1} \cdot \log d)^{1/2}})$-PSD-Completion problem is $\NP$-hard.
  \item For every $\alpha > 1$, there exists some $\eps \in (0,1)$, such that for every sufficiently large positive integer $d$, the $(d,\alpha \cdot d,\eps)$-PSD-Completion problem is $\NP$-hard.
\end{enumerate}
\end{corollary}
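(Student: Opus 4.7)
The plan is to derive each of the three items of Corollary~\ref{cor:PSD_cases} directly from Corollary~\ref{cor:PSD_hard} by choosing, for each case, a pair $(g,\eps)$ such that $\eps/(1+\eps)$ matches the target hardness threshold in the statement, and such that the hypothesis of the appropriate case of Corollary~\ref{cor:PSD_hard} is satisfied. Equivalently, the three items can be obtained as instantiations of Theorem~\ref{thm:GF_cases} (taken with the extra parameter $\theta$ set to $1$, which is admissible because a $\NO$ instance of $(d_1,d_2,\eps,\theta)$-Graph-Fitness for any $\theta \geq 1$ is automatically a $\NO$ instance for smaller values of $\theta$) composed with the first item of Lemma~\ref{lemma:Fit->Completion}. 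No new conceptual ingredient beyond what the section already develops is required.

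Concretely, for Item~1 I would set $g = c \cdot 2^{d/2}/d^{3/4}$, where $c$ is the constant provided by Corollary~\ref{cor:PSD_hard}, and pick $\eps$ of the form $\eps = \Theta(2^{-c''\cdot d})$ for a constant $c'' > 1/4$, so that $\eps \leq \frac{1}{3\sqrt{g}}$ for all sufficiently large $d$. The first case of Corollary~\ref{cor:PSD_hard} then yields the $\NP$-hardness of $(d,g,\eps/(1+\eps))$-PSD-Completion, and $\eps/(1+\eps) \leq 2^{-c'\cdot d}$ for some appropriate $c' > 0$. For Item~2, given $\beta > 1$, I would take $g = d^\beta$ and choose $\eps$ with $\eps/(1+\eps) = c' \cdot (d^{\beta-1}\log d)^{-1/2}$ for a small constant $c' > 0$. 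Using that $\log(1/\eps) = \Theta(\log d)$, the required inequality $g \leq c \cdot d/(\eps^2 \log(1/\eps))$ reduces to $(c')^2 = O(c)$, so choosing $c'$ small enough verifies this bound; the lower bound $\eps \geq \frac{1}{3\sqrt{g}}$ holds for large $d$ since $\frac{1}{3\sqrt{g}} = \Theta(d^{-\beta/2})$ while $\eps = \Theta(d^{-(\beta-1)/2}/\sqrt{\log d})$. The second case of Corollary~\ref{cor:PSD_hard} then gives the claim. For Item~3, given $\alpha > 1$, I would take $\eps'$ to be the target constant and choose it sufficiently small so that $\alpha \leq c/(\eps^2 \log(1/\eps))$ where $\eps$ is the preimage of $\eps'$ under $x \mapsto x/(1+x)$, and set $g = \alpha d$; once more the second case of Corollary~\ref{cor:PSD_hard} applies for all sufficiently large $d$.

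The main, essentially only, obstacle is the routine but finicky bookkeeping around the transformation $\eps \mapsto \eps/(1+\eps)$ introduced by Lemma~\ref{lemma:Fit->Completion}(1): one must invert this to land in the admissible ranges of Corollary~\ref{cor:PSD_hard}, and in Item~2 the polynomial factors in $d$ have to be carefully balanced against the logarithmic factor $\log(1/\eps)$. These verifications are entirely analogous to those carried out immediately after Corollaries~\ref{cor:PSD_hard} and~\ref{cor:Comp_hard} to establish Theorems~\ref{thm:IntroPSD}, \ref{thm:IntroPSDexp}, \ref{thm:IntroFit}, and~\ref{thm:IntroFitExp}, so no genuinely new estimates are needed.
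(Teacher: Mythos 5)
Your proposal is correct and matches the paper's intended derivation: the paper provides only the one-sentence indication that Corollary~\ref{cor:PSD_cases} follows by combining Theorem~\ref{thm:GF_cases} with Lemma~\ref{lemma:Fit->Completion}, and your ``Route~1'' via Corollary~\ref{cor:PSD_hard} is just this combination unrolled through Theorem~\ref{thm:GF_hard} with $\theta=1$. Your observation that the Graph-Fitness $\NO$ condition is monotone decreasing in $\theta$ (so hardness at larger $\theta$ transfers to $\theta=1$) is a genuine and correctly supplied detail needed to make the paper's stated route through Theorem~\ref{thm:GF_cases} (whose $\theta$ is $2^d$ or $2^{2^{cd}}$) go through, and the parameter bookkeeping you outline is the same as that carried out after Corollaries~\ref{cor:PSD_hard} and~\ref{cor:Comp_hard}.
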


We conclude with two final remarks.
First, certain variants of the Unique Games Conjecture (see~\cite{DinurMR06,DinurS10}) imply the hardness of the $(d_1,d_2,\eps)$-PSD-Completion and $(d_1,d_2,\eps,\theta)$-Completion problems for all integers $d_2 > d_1 \geq 3$ and real numbers $\eps \in [0,\frac{1}{2})$ and $\theta \geq 1$. This follows by combining Remark~\ref{remark:UGC_GF} with Lemma~\ref{lemma:Fit->Completion}.
Second, all our hardness results for low-rank matrix completion problems remain valid when the input partial matrix is required to have a small fraction of missing values, say, polynomially small with respect to the dimension of the matrix. As observed in~\cite{HMRW14}, this follows by padding the input partial matrix with sufficiently many zero rows and columns, which reduces the fraction of missing values and preserves the minimum possible rank of a completion.

\section*{Acknowledgments}
We thank the anonymous reviewers for their insightful comments and suggestions that improved the presentation of this paper.

\bibliographystyle{abbrv}
\bibliography{completion}

\end{document}